\newif\ifsubsection 
\newcommand{\labitem}[2]{%
\def\@itemlabel{\text{#1}}
\item
\def\@currentlabel{#1}\label{#2}}
\renewcommand{\ALG@name}{}
\newcommand{\eig}{u_1} 
\newcommand{\R}{\mathbb R }
\newcommand{\tune}{T}
\newcommand{\tunej}{T_j}
\newcommand{\spag}{s_j}
\newcommand{\subgp}{\sigma}
\newtheorem{theorem}{Theorem}
\definecolor{lightgray}{gray}{0.8}
\newcolumntype{L}{>{\raggedleft}p{0.14\textwidth}}
\newcolumntype{R}{p{0.8\textwidth}}
\newcommand{\vertiii}[1]{{\left\vert\kern-0.25ex\left\vert\kern-0.25ex\left\vert #1
    \right\vert\kern-0.25ex\right\vert\kern-0.25ex\right\vert}}
\newtheorem{definition}{Definition}
\newtheorem{lemma}{Lemma}
\newtheorem{assump}{Assumption}
\newcommand{\myalgoend}[2]{
\noindent\textbf{Output:} #1
\\
\noindent
\rule[0.6\baselineskip]{\textwidth}{0.4pt}
\par}
\newcounter{myalgocounter}[section]\setcounter{myalgocounter}{0}
\renewcommand{\themyalgocounter}{
\arabic{myalgocounter}}
\newenvironment{myalgo}[3][]{\refstepcounter{myalgocounter}\par\medskip
  \noindent
  \rule{\textwidth}{0.7pt}
	\vskip -0.1cm
  \noindent 
	\noindent \textbf{Algorithm~\themyalgocounter. #1#2} \rmfamily
	\\
	\noindent
  \rule[0.6\baselineskip]{\textwidth}{0.4pt}
	\vskip -.3cm
	\noindent
  \textbf{Input:} #3 
	}{\medskip 
	}
\newcommand{\condi}{Condition }
\newtheorem{condition}{Condition}
\setlist{nolistsep
}
\newtheorem{example}{Example}
\newcommand{\algname}[1]{Algorithm }
\newcommand{\paper}{paper}
\newcommand{\Xobsi}{X^{i}}
\newcommand{\Xobso}{X^{1}}
\newcommand{\Xobsn}{X^{n}}
\newcommand{\deltagammaone}{\|\hat c_j\|_1}
\newcommand{\deltagammatwo}{\|\hat c_j\|_2}
\newcommand{\betatil}{\ensuremath{\hat {\beta}}}
\newcommand{\gamatil}{\ensuremath{\hat{\gamma}_j}}
\newcommand{\egap}{\ensuremath{\alpha}}
\def\input@path{{/figures/}}
\begin{document}
\title{ \bf
De-biased sparse PCA: 
Inference and testing for eigenstructure of large covariance matrices\\
}
 \author{ 
  \small \bf Jana Jankov\'{a} \and  \small \bf Sara van de Geer 
       }
			\date{\small\it Seminar for Statistics\\ETH Z\"urich
			}
\maketitle
  
\begin{abstract}
Sparse principal component analysis (sPCA) has become one of the most widely used techniques for dimensionality reduction in high-dimen\-sio\-nal datasets.
The main  challenge underlying sPCA is to estimate the first vector of loadings of the population covariance matrix, provided that only  a certain number of loadings are non-zero. 
In this paper, we propose confidence intervals for individual loadings and for the largest eigenvalue of the population covariance matrix. 
Given an independent sample $\Xobsi\in\mathbb R^p,i=1,\dots,n$,  generated from an unknown distribution with an unknown covariance matrix $\Sigma_0$,
we study estimation of the first vector of {loadings} 
in a setting where $p\gg n$. 
Next to the high-dimensio\-nality, another challenge lies in the inherent non-convexity of the problem.
We base our methodology on a Lasso-penalized M-estimator which, despite non-convexity, may be solved by a polynomial-time algorithm such as coordinate or gradient descent. We show that our estimator achieves the minimax optimal rates in $\ell_1$ and $\ell_2$-norm. 
We identify the bias in the Lasso-based estimator and propose a \emph{de-biased sparse PCA} estimator
for the vector of loadings and for the largest eigenvalue of the covariance matrix $\Sigma_0$.
Our main results provide theoretical guarantees for 
asymptotic normality of the de-biased estimator. 
The major conditions we impose are sparsity in the first eigenvector of small order $\sqrt{n}/\log p$ and sparsity of the same order in the columns of the inverse Hessian matrix of the population risk.
\vskip 0.2cm
\noindent

\noindent
\begin{keywords}
	covariance matrix, eigenvectors, eigenvalues,  PCA, high-dimensional model,  sparsity, Lasso,   asymptotic normality, confidence intervals.
	\end{keywords}

\end{abstract}

\section{Introduction}
\subsection{Background and problem}
Principal component analysis (PCA) is a fundamental technique employed for a multitude of tasks including dimension reduction, data visualization and clustering. The applications of PCA range from genomics to image recognition, data
compression and financial econometrics. While in low-dimensional settings, PCA is generally well-understood (see e.g. \cite{andersen}), estimation of eigenstructure in high-dimensional settings has 
opened many intriguing questions. Consequently, the problem 
has attracted substantial interest in the recent decades, see, for example 
\cite{baiksilverstein,paul.pca,jl,amini.wainwright,vu.lei,birnbaum,berthet2013optimal,cai.spca}.

The key challenge underlying the principal component analysis is to estimate the eigenstructure of an unknown population covariance matrix.
In a typical setting, we observe a data matrix $X$ with independent rows $\Xobsi\in\mathbb R^p,i=1,\dots,n$, generated from a $p$-dimensional distribution. Without loss of generality, we assume that $\mathbb E\Xobsi=0$. The population covariance matrix will be denoted by 
$\Sigma_0:=\mathbb E\Xobsi(\Xobsi)^T\in\mathbb R^{p\times p}.$ 
In this \paper, we study estimation and inference for the first \emph{loadings vector} of the population covariance matrix $\Sigma_0$,
defined by
$$\beta_0 :=\text{argmin}_{\beta\in\mathbb R^p} \frac{1}{4}\|\Sigma_0- \beta\beta^T\|_F^2
,$$
where 
$\|\cdot\|_F$
is the Frobenius norm of a matrix. 
The loadings vector $\beta_0$ is an eigenvector of $\Sigma_0$ that satisfies 
$\|\beta_0\|_2^2=\Lambda_{\max}(\Sigma_0)$, where $\Lambda_{\max}(\Sigma)$ denotes the largest eigenvalue of a real symmetric matrix $\Sigma$ and $\|\cdot\|_2$ the Euclidean norm. It defines the best rank-one approximation $\beta_0\beta_0^T$ to the matrix $\Sigma_0$.
We remark that $\beta_0$ is only identifiable up to a sign (meaning that $-\beta_0$ is also a global minimizer),
thus we may choose  this sign  arbitrarily. 
\par
The eigenstructure of the population covariance matrix can be naturally estimated by the eigenstructure of the sample covariance matrix 
$$\hat\Sigma:=\frac{1}{n}\sum_{i=1}^n \Xobsi(\Xobsi)^T.$$
When the dimension $p$ of the observations is fixed, distributional properties of  eigenvalues and eigenvectors
of the sample covariance matrix are well understood:
they are consistent estimators of their population counterparts and have a Gaussian limiting distribution (\cite{andersen}, \cite{var.PC}). If 
$\hat \beta_{\text{PCA}}$ is the first eigenvector of $\hat\Sigma$ rescaled such that 
$\|\hat \beta_{\text{PCA}}\|_2^2 
= {\Lambda_{\max}(\hat\Sigma)}$, 
then under certain regularity conditions
on the eigenvalues of $\Sigma_0$
$$\sqrt{n}(\hat \beta_{\text{PCA}} - \beta_0 )\rightsquigarrow \mathcal N_p(0,V),$$
$$\sqrt{n}(\Lambda_{\max}(\hat\Sigma)-\Lambda_{\max}(\Sigma_0))\rightsquigarrow \mathcal N(0,\sigma^2_{\Lambda}),$$
where $V$ and $\sigma^2_{\Lambda}$ are certain asymptotic variances depending on the distribution of $X.$ 
\par
In a high-dimensional regime, when $p$ is allowed to grow with the sample size, the sample covariance matrix exhibits poor behaviour: \cite{bai.yin} show that the eigenvalues of $\hat\Sigma$ are inconsistent estimators of their population counterparts. Namely
if $p/n\rightarrow \alpha\in(0,\infty),$ then, almost surely
$$\lim_{n\rightarrow \infty} \Lambda_{\max}(\hat\Sigma) = \Lambda_{\max}(\Sigma_0)(1+\sqrt{\alpha})^2.$$ 

\noindent
In the same regime, \cite{johnstone2} shows an analogous statement for the sample eigenvectors. 
In particular, even in a simple model known as the spiked covariance model (studied in numerous works including \cite{jl}, \cite{amini.wainwright}, \cite{montanari.pca}) the sample eigenvectors can be asymptotically perpendicular to the population eigenvectors with high probability. 
More precisely, if
 $\Sigma_0 = I + (\Lambda - 1) uu^T$, with $u^Tu=1,$ $\Lambda\geq 1$ and under technical conditions, as $p/n \rightarrow \alpha > 0$, then almost surely,
\begin{equation}
\label{spca.jl}
\frac{\hat u^T u_1}{\|\hat u\|_2\|u_1\|_2}
\;
 \stackrel{\text{}}{\rightarrow}  \;
\begin{cases}
0 & \text{ if }\Lambda-1\leq \sqrt{\alpha}\\
\frac{1-\alpha/(\Lambda-1)^2}{1+\alpha/(\Lambda-1)^2} & \text{ if }\Lambda -1 > \sqrt{\alpha},
\end{cases}
\end{equation}
where $\hat u$ is the first eigenvector of $\hat\Sigma $ and $\eig$ is the first eigenvector of $\hat u$. 
These results show the inconsistency of $\hat u$, however, note that they  also suggest that consistent estimation  might be possible if $\alpha /(\Lambda-1)^2 \rightarrow 0$, that is if the gap between the largest and second largest eigenvalue of $\Sigma_0$, $\Lambda-1$, grows at least as fast as $\sqrt{p/n}.$  This special but interesting setting  has recently attracted substantial interest and we will remark on it in Section \ref{subsec:relwork} on related literature. 

\par
The above results show that consistent estimation of eigenstructure in high-dimensional settings is not possible without further structural assumptions.
However, in many applications, it is inevitable that the number of variables $p$ is of the same order or even much larger than the sample size $n$.  
This motivated research in sparse settings, where  the first few population eigenvectors  are assumed to only have a certain number of entries non-zero.
Examples of settings where sparse representations are relevant include micro-array studies in genetics or EEG studies of the heart, where the heart-beat cycle  may be expressed in a sparse wavelet basis (see \cite{jl}).
Under sparsity conditions, consistent estimation of the eigenstructure becomes possible. A large body of literature studies methodology and lower bounds for estimation of the population eigenstructure. 
A simple and popular methodology is based on thresholding of the sample covariance matrix, which was investigated mostly  within the spiked covariance model \citep{jl,amini.wainwright,montanari.pca}. 
Methods exploiting Lasso penalization were studied among others in \cite{jolliffe2003modified} and
 \cite{zou.pca}; these however lead to non-convex problems which pose computational difficulties. 
The paper \cite{aspremont.direct} addresses the non-convexity problem by deriving a semidefinite programming-based relaxation for the Lasso-penalized principal component analysis, which was later extended by \cite{fantope}. 
Important work on lower bounds for estimation of eigenstructure includes \cite{vu.lei}, \cite{berthet2013optimal} and \cite{cai.spca}.
In particular, \cite{vu.lei}
propose an estimator $\hat Z$ of $\eig\eig^T$ which achieves the minimax rate, namely, with probability tending to one,
\begin{equation}
\label{minim}
\|\hat Z - \eig \eig^T \|_F^2 \leq \frac{C}{(\Lambda_1-\Lambda_2)^2}s\lambda^2.
\end{equation}
where $s:=\|\eig\|_0$ is the sparsity of the first eigenvector, $\lambda\asymp \sqrt{\log p/n}$ and $C$ is a universal constant. The estimator $\hat Z$ is not computable in polynomial time, however, they propose a polynomial-time estimator which achieves a somewhat slower rate, namely $s^2 \lambda^2.$ To achieve the minimax rate with a polynomial-time algorithm may be impossible, see \cite{berthet2013optimal}.

\par 
The literature on estimation of eigenstructure in high-dimensional settings is vast and provides a wide variety of sparsity-inducing estimators. However, these methods do not lead to methodology for inference such as confidence intervals and tests. To the best of our knowledge, asymptotically normal estimation of eigenstructure has yet not been investigated in sparse high-dimensional regimes. 
We aim to contribute to filling this practical and theoretical gap,
in particular, we address construction of confidence intervals for entries of the first loadings vector $\beta_0$ and the largest eigenvalue of $\Sigma_0$.

\subsection{Outline of methodology, results and contributions }
We briefly summarize the main contributions of this \paper.
We base our construction of asymptotically normal estimators of $\beta_0$ on a Lasso-regularized M-estimation procedure of type
\begin{equation}
\label{spca.intro.beta}
\hat\beta\in \text{argmin}_{\beta\in\mathcal B}\frac{1}{4} \|\hat\Sigma-\beta\beta^T\|_F^2 + \lambda\|\beta\|_1,
\end{equation}
where $\|\cdot\|_1$ is the $\ell_1$-norm  and $\mathcal B$ is a certain local set that guarantees convexity of the population loss function. The local set will be obtained from an initial rough estimator. We will then show in Theorem \ref{oracle} that any stationary point of the program \eqref{spca.intro.beta} is a near-oracle estimator of $\beta_0$ and that it achieves near-oracle rates
in $\ell_2$-norm, namely $\|\hat\beta-\beta_0\|_2^2=\mathcal O_P(s\log p/n)$. 
Since we use localization first, we are able to achieve the minimax rates \eqref{minim} even with a polynomial-time algorithm. 
\par
The estimator  $\hat\beta$ is asymptotically biased; consequently, we identify the bias term and propose methodology to estimate it, which leads to a de-biased estimator. 
Our main theoretical results in Theorem \ref{ci} show that a de-biased sparse PCA estimator leads to asymptotically normal estimators for the entries of the first loadings vector $\beta_0.$ 
We also propose an  estimator for the largest eigenvalue of $\Sigma_0$ and provide theoretical guarantees on the limiting distribution in Theorem \ref{eigenvalue}.
Moreover, the asymptotic variance of the Gaussian limiting distribution corresponds to the asymptotic variance of asymptotically efficient estimation in the low-dimensional setting. An implication of our work is that we require  the sparsity condition is $s=o(\sqrt{n}/\log p)$ in $\beta_0$ and sparsity in the inverse Hessian matrix of the population risk at $\beta_0$.

In an empirical study, we show that our method performs well even when the classical PCA fails, the gain is especially visible in regimes when $p$ is of the same order as $n$ and the eigenvalue gap is relatively small.

\par
\subsection{Related literature}
\label{subsec:relwork}
In this section, we discuss prior related work and outline the differences to our settings and results.
The recent papers  \cite{fan.spca} and the line of papers
\cite{koltchinskii2016asymptotics},\cite{kolt.normal}, \cite{2016arXiv160101457K} and \cite{2017arXiv170807642K},
study asymptotically normal estimation of eigenstructure in high-dimensional settings.
However, their setting and results substantially differs from ours. Their setting essentially requires that the maximum eigenvalue or the eigenvalue gap diverges (see the comment following equation \eqref{spca.jl} above).
Therefore, thanks to this structural assumption, the papers \cite{koltchinskii2016asymptotics} and
\cite{fan.spca} manage to study the high-dimensional setting $p\gg n$ and do {not} require any sparsity conditions.
 We study the setting where the eigenvalue gap may be even very small, thus our situation becomes more difficult, which requires that we impose sparsity conditions. 

We briefly discuss their contributions below. 
The paper \cite{koltchinskii2016asymptotics} derives the asymptotic distribution of the leading sample eigenvector of the sample covariance matrix in a setting where $p$ is allowed to grow with the sample size. This is established under the ``effective rank'' condition 
$\Lambda_{\max}(\Sigma_0) \gg \text{tr}(\Sigma_0) /n$, where $\Lambda_{\max}(\Sigma_0)$ is the maximum eigenvalue of $\Sigma_0$ and under a Gaussianity assumption.
They show that in this setting the leading eigenvector is biased. The paper then proposes a way of estimating this bias via sample splitting and constructs a de-biased  estimator which is asymptotically normal. Interestingly, their results imply that in special high-dimensional settings where the effective rank condition holds, consistent estimation is possible even if $p/n\rightarrow \infty$, without imposing sparsity assumptions.

\par
The paper \cite{fan.spca} (see also a related paper \cite{shen2013surprising}) provides similar results as \cite{koltchinskii2016asymptotics}, but considers the spiked covariance model. In particular, it is required that the first $d$ eigenvalues of $\Sigma_0$ diverge to infinity (denoting the eigenvalues by $\Lambda_j,j=1,\dots,d$, they must satisfy the condition  $\Lambda_{j}\geq \sqrt{p/n}$) and the non-spiked eigenvalues are assumed to be bounded.
This means that the eigenvalue gap (the difference between the smallest eigenvalue in the spiked part and 
the largest eigenvalue in the non-spiked part) must grow at least at the rate $\sqrt{p/n}$. 
 Under this condition and a sub-Gaussianity condition, they derive the asymptotic distribution of the first $d$ eigenvalues and the corresponding eigenvectors of the sample covariance matrix.
Similarly as in \cite{koltchinskii2016asymptotics}, their results reveal a bias in the asymptotic distribution,
in particular the spiked eigenvalues $\Lambda_{j}(\hat\Sigma) $ of $\hat\Sigma$ satisfy, for $j=1,\dots,d$, 
\begin{equation}
\label{fan.spca}
\sqrt{n}\left[\frac{\Lambda_{j}(\hat\Sigma) }{\Lambda_{j}} -1-
 \left(\frac{C p}{n\Lambda_{j}} + \mathcal O_P\left(\Lambda_j^{-1}\sqrt{\frac{p}{n}}\right)\right)
\right] \rightsquigarrow \mathcal N(0, \kappa_j-1),
\end{equation}
where 
$\kappa_j$ is a certain measure of kurtosis. The asymptotic bias term is $\frac{C p}{n\Lambda_{j}}$, where  the constant $C$ is unknown.
The authors  propose a shrinkage estimator based on soft-thresholding which also involves  a bias correction based on equation \eqref{fan.spca}, and the unknown $C$ is replaced by a consistent estimator.
\par

\subsection{Organization of the \paper}

We discuss the properties and non-convexity of the population risk function in Section \ref{sec:spca.prelim}
a propose a first-step estimator which is guaranteed to reach a local neighbourhood of the true underlying parameter where the population risk function is convex. In Section \ref{sec:spca.main}, we provide the main methodology for an oracle estimator of the first loadings vector and asymptotically normal estimators of the loadings vector and the maximum eigenvalue of $\Sigma_0$, and establish our main theoretical results. 
In Section \ref{sec:sim}, we investigate the performance of our methodology in an empirical study. 
Section \ref{sec:disc} discusses conclusions and implications
and the proofs are deferred to Section \ref{sec:proofs}.

\subsection{Notation} 
For a vector $x\in\mathbb R^d$, we let $x_j$ denote its $j$-th entry. For a matrix $A\in\mathbb R^{m\times d}$ we use the notation $A_{ij}$ or $(A)_{ij}$ for its $(i,j)$-th entry and $A_j$ to denote its $j$-th column. 
We let 
$\vertiii{A}_\infty=\max_{i} \|e_i^T A\|_1$, where $e_i$ is the $i$-th unit vector, $\vertiii{A}_1= \vertiii
{A^T}_\infty$, $\|A\|_\infty=\max_{i,j}|A_{ij}|$ and the Frobenius norm is denoted by $\|A\|_F=(\sum_{i,j} A_{ij}^2)^{1/2}$.
By $\Lambda_{\min}( A)$ and $\Lambda_{\max}(A)$ we denote the minimum and maximum eigenvalue of $A$, respectively. 
For sequences of random variables $X_n,Y_n$, we write $X_n=\mathcal O_P(Y_n)$ if $X_n/Y_n$ is bounded in probability. We write $X_n=o_P(1)$ if $X_n$ converges to zero in probability and we use $\rightsquigarrow$ to denote convergence in distribution.

\section{Preliminaries}

\label{sec:spca.prelim}

\subsection{Landscape of population risk and non-convexity}

In this section, we introduce the setup and develop methodology to obtain an initial estimator of the vector of loadings $\beta_0$ in a high-dimensional setting, under a sparsity assumption on the entries of $\beta_0$. The main methodology for construction of an
asymptotically normal estimator is given in Section \ref{sec:spca.main}.
\par
\noindent
The spectral decomposition of $\Sigma_0$ is given by
$$\Sigma_0= U^T \Phi^2 U,$$
where $\Phi:=\text{diag}(\phi_1,\dots,\phi_p)$ and we assume that 
$$\phi_1>\phi_2 \;\;\;\text{ and }\;\;\;\phi_2\geq  \dots\geq \phi_p\geq 0,$$ 
and $U=(u_1,\dots,u_p)$ is such that  $UU^T =I$.
Note that while $\Phi$ is unique, the matrix $U$ is in general not unique if the eigenvalues have multiplicities.
We do not require that $U$ is unique, however, we require that the first eigenvector $u_1$ is unique (up to a sign): 
this is the case if $\phi_1 > \phi_2.$ 
The eigenvalues of $\Sigma_0$ will be denoted by $\Lambda_j := \phi_j^2 $ for $j=1,\dots,p$. We also use the alternative notations  $\phi_{\max}:=\phi_1 $ and $\Lambda_{\max}:=\Lambda_1.$
The gap between the square-root of the largest and second largest eigenvalue of $\Sigma_0$ will be denoted by 
$$\rho: = \phi_{1}- \phi_2.$$
Note that this definition also implies that $\Lambda_{1} -\Lambda_2 = (\phi_1-\phi_2)(\phi_1+\phi_2)= \rho^2 + 2\rho\phi_2 \geq \rho^2.$ 
We will refer to both $\rho$ and 
$\Lambda_{1}-\Lambda_2$ as the ``eigenvalue gap'', depending on the context.
The  eigenvalue gap determines the curvature of the population risk and thus naturally plays an intrinsic role in estimation of the related eigenspaces: if the eigenvalue gap 
vanishes too fast, consistent estimation of the first eigenvector becomes impossible.
\\
\noindent
Our main methodologies are based on the (regularized) M-estimation fra\-me\-work. To this end, we consider the theoretical risk function
$$R(\beta):=\frac{1}{4} \|\Sigma_0-\beta\beta^T\|_F^2 = \frac{1}{4}\text{tr}(\Sigma_0)-\frac{1}{2}\beta^T \Sigma_0 \beta + \frac{1}{4}\|\beta\|_2^4.$$
The risk function is plotted in Figure \ref{fig:spca} for the simple case $p=2$.

\begin{figure}[h!]
\centering 
\includegraphics[width=0.46\textwidth]{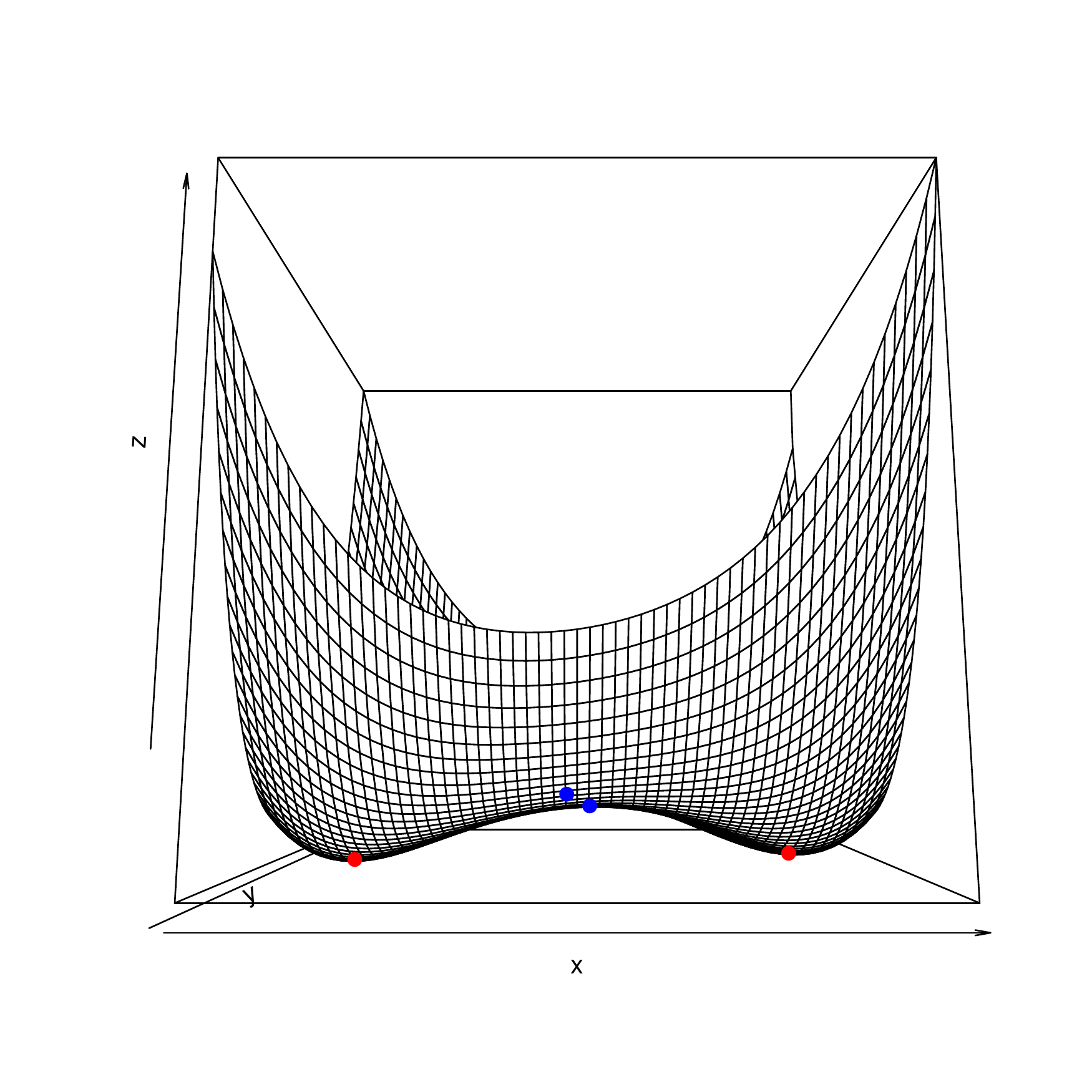}
\includegraphics[width=0.46\textwidth]{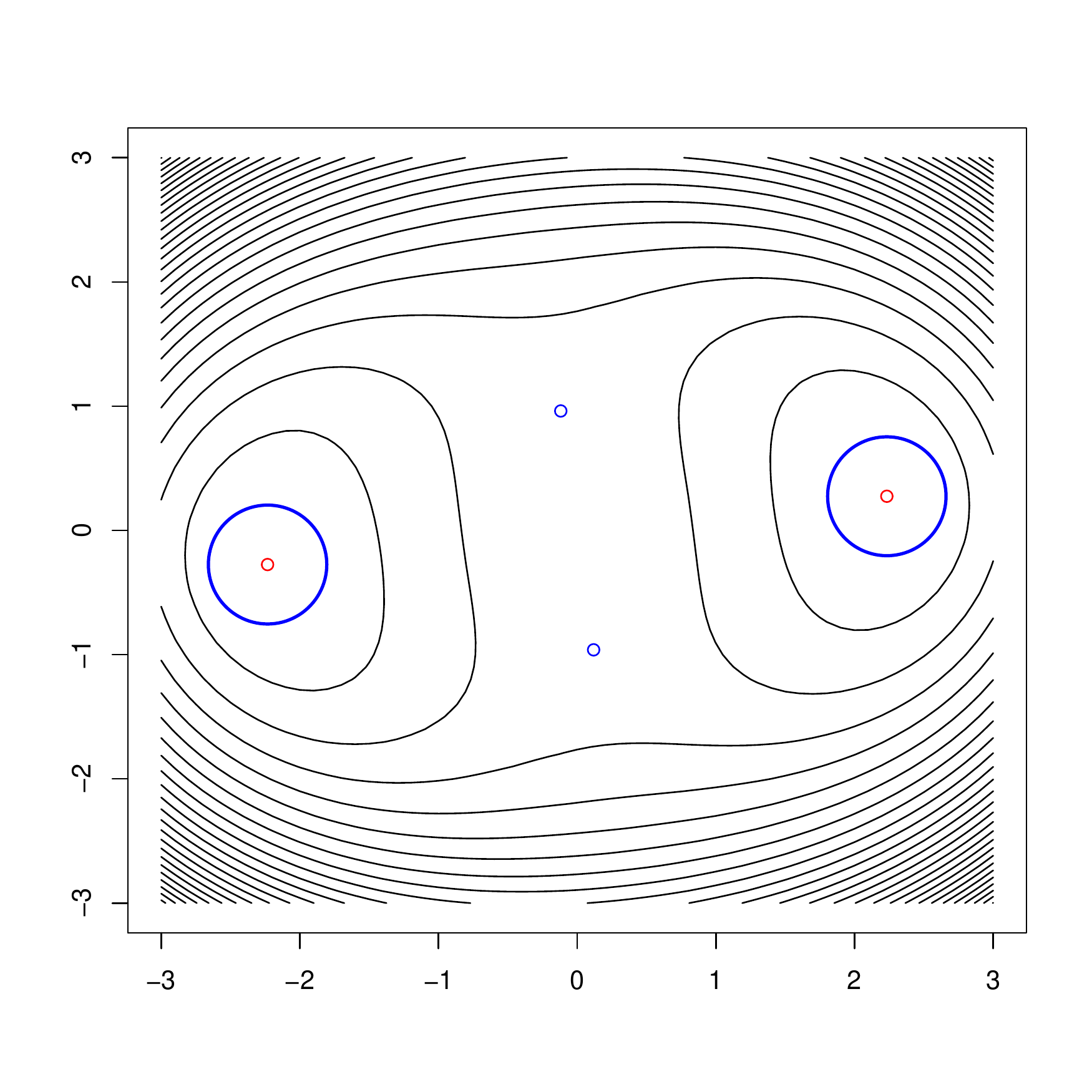}
\caption[Population risk function for the principal component analysis]{A toy example: the graph (left) and contours (right) of the \textbf{population risk} for $p=2$. The two global minima are labeled by red points, the two saddle points by blue points.
The blue circles in the contour plot show the convex local neighbourhood from Lemma \ref{eigen.spca} below.}
\label{fig:spca}
\end{figure} 

\noindent
The gradient $\dot R(\beta)$ and the Hessian $\ddot R(\beta)$ of $R(\beta)$ are given by
$$\dot R(\beta) = -\Sigma_0 \beta +  \|\beta\|_2^2\beta,$$
$$\ddot R(\beta) = -\Sigma_0  +  \|\beta\|_2^2I +2 \beta\beta^T.$$
\noindent 
We consider the empirical analogue of $R(\beta)$, the empirical risk function
$$R_n(\beta) = \frac{1}{4}\|\hat\Sigma -\beta\beta^T\|_F^2 .
$$
The gradient and Hessian of $R_n$ will be denoted by $\dot R_n(\beta)$ and $\ddot R_n(\beta)$, respectively.
This choice of a risk function allows us to formulate estimation of $\beta_0$ in the M-estimation framework. 
However,  a simple naive approach via minimizing the empirical risk  $R_n(\beta)$ is plagued by non-convexity: even the population risk $R(\beta)$ itself is a non-convex function on $\mathbb R^p$. If $\phi_1>\phi_2$, 
the population risk has a unique (up to sign) global minimizer $\phi_1u_1\equiv \beta_0$, however, it is well known that computing the global minimizer of a non-convex function is a difficult problem. 
It is easy to deduce that the population risk has stationary points which are given by
$\pm\phi_j u_j, j=1,\dots,p,$ where $u_j$ is \emph{any} normalized eigenvector corresponding to $\phi_j$ or $u_j$ is the zero vector. Thus, the population risk might have a continuum of stationary points (consider e.g. the degenerate case $\Sigma_0=I$: the stationary points form a sphere if we disregard the zero vector).
In the simple case when there are no eigenvalue multiplicities, there are $2p+1$ stationary points: the points $\pm\phi_1u_1$ are the global minimizers and one can easily deduce that the remaining stationary points (except zero) are all saddle points by inspecting the Hessian matrix 
$$\ddot R(\phi_ju_j) = \sum_{i=1}^p (\Lambda_j - \Lambda_i)u_iu_i^T + 2\Lambda_j u_ju_j^T
.$$
The strategy we will employ to overcome the non-convexity of the population risk is based on the observation that locally around the true $\beta_0$, the population risk function $R(\beta)$ is convex, as illustrated in Figure \ref{fig:spca}. 
Lemma \ref{eigen.spca} below relates the eigenvalue gap $\rho$ to the convexity of the population risk function: in
an $\ell_2$-ball around $\beta_0$ whose radius is small enough compared to the eigenvalue gap, the population risk is convex.

\begin{lemma}[Lemma 12.7 in \cite{sf}]\label{eigen.spca}
Suppose that $3\eta <\rho.$
Then for all $\beta\in \mathbb R^p$ satisfying $\|\beta - \beta_0\|_2\leq \eta$ we have 
$$\Lambda_{\min}(\ddot R(\beta)) \geq 2(\rho-3\eta).$$
\end{lemma}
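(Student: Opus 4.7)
My plan is to reduce the claim to Weyl's inequality in two steps: first, compute $\ddot R(\beta_0)$ in closed form via the spectral decomposition of $\Sigma_0$ to extract a sharp lower bound on $\Lambda_{\min}(\ddot R(\beta_0))$ in terms of the eigenvalue gap; then bound the operator-norm perturbation $\|\ddot R(\beta)-\ddot R(\beta_0)\|_{\mathrm{op}}$ uniformly over the ball $\{\|\beta-\beta_0\|_2\leq\eta\}$. Combining the two via Weyl then gives the advertised estimate, after a final bookkeeping step that turns the raw bound into the clean linear form $2(\rho-3\eta)$.

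For the first step, since $\ddot R(\beta)=-\Sigma_0+\|\beta\|_2^2 I+2\beta\beta^T$ and $\beta_0=\phi_1 u_1$ with $\|\beta_0\|_2^2=\Lambda_1$, the spectral decomposition $\Sigma_0=\sum_i \Lambda_i u_i u_i^T$ yields
\begin{equation*}
\ddot R(\beta_0) \;=\; 2\Lambda_1\, u_1 u_1^T \;+\; \sum_{i\geq 2}(\Lambda_1-\Lambda_i)\, u_i u_i^T,
\end{equation*}
whose eigenvalues are $2\Lambda_1$ along $u_1$ and $\Lambda_1-\Lambda_i$ along $u_i$ for $i\geq 2$. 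The minimum eigenvalue is therefore $\Lambda_1-\Lambda_2=\rho(\phi_1+\phi_2)$, so the curvature of $R$ at the minimizer is controlled entirely by the singular-value gap $\rho$, with a multiplicative scale $\phi_1+\phi_2$.

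For the second step, set $\delta:=\beta-\beta_0$ and expand
\begin{equation*}
\ddot R(\beta)-\ddot R(\beta_0) \;=\; \bigl(2\beta_0^T\delta+\|\delta\|_2^2\bigr)\, I \;+\; 2\bigl(\beta_0\delta^T+\delta\beta_0^T+\delta\delta^T\bigr).
\end{equation*}
Each summand is either a multiple of the identity or of rank at most two, so the triangle inequality gives $\|\ddot R(\beta)-\ddot R(\beta_0)\|_{\mathrm{op}}\leq 6\phi_1\eta+3\eta^2$ using $\|\beta_0\|_2=\phi_1$ and $\|\delta\|_2\leq\eta$. Weyl's inequality then yields
\begin{equation*}
\Lambda_{\min}\bigl(\ddot R(\beta)\bigr) \;\geq\; \rho(\phi_1+\phi_2) \;-\; 6\phi_1\eta \;-\; 3\eta^2.
\end{equation*}

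The main obstacle is converting this estimate into the stated clean form $2(\rho-3\eta)$: one has to exploit the hypothesis $3\eta<\rho$ (which in particular forces $\eta$ to be small relative to the scales set by the $\phi_i$) to absorb the quadratic term $3\eta^2$ into the linear one, and to combine the $(\phi_1+\phi_2)$ factor in $\Lambda_1-\Lambda_2$ with the $\phi_1$ appearing in the perturbation bound so that the $\eta$-coefficient collapses to exactly $6$ and the $\rho$-coefficient to exactly $2$. This constant-matching is where all the arithmetic care is concentrated; once it is carried out, the conclusion is immediate from Weyl.
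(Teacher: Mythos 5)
The paper supplies no proof of this lemma; it simply cites Lemma 12.7 of \cite{sf}, so there is no internal argument to compare against. Your Weyl-based computations are correct as far as they go: $\Lambda_{\min}(\ddot R(\beta_0))=\Lambda_1-\Lambda_2=\rho(\phi_1+\phi_2)$ and $\|\ddot R(\beta)-\ddot R(\beta_0)\|_{\mathrm{op}}\le 6\phi_1\eta+3\eta^2$, giving
\begin{equation*}
\Lambda_{\min}\bigl(\ddot R(\beta)\bigr)\;\geq\;\rho(\phi_1+\phi_2)-6\phi_1\eta-3\eta^2.
\end{equation*}
(You can in fact sharpen this to $\rho(\phi_1+\phi_2)-6\phi_1\eta$ by noticing that the pieces $\|\delta\|_2^2 I$ and $2\delta\delta^T$ of the perturbation are positive semi-definite and so can simply be discarded rather than bounded in operator norm.)

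The gap is in the final step you defer to ``constant-matching.'' To pass from your bound to the stated $2(\rho-3\eta)=2\rho-6\eta$, you would need
\begin{equation*}
\rho(\phi_1+\phi_2)-6\phi_1\eta-3\eta^2\;\geq\;2\rho-6\eta,
\qquad\text{i.e.}\qquad
\rho(\phi_1+\phi_2-2)\;\geq\;6\eta(\phi_1-1)+3\eta^2,
\end{equation*}
and the hypothesis $3\eta<\rho$ supplies nothing of the kind: it constrains the size of $\eta$ relative to $\rho$ but says nothing about the absolute scale of $\phi_1,\phi_2$. In particular, already at $\eta=0$ the displayed inequality reduces to $\phi_1+\phi_2\geq 2$, which is not implied by the paper's standing assumptions $\phi_1>\phi_2\geq\cdots\geq\phi_p\geq 0$. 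Concretely, take $p=2$, $\Sigma_0=\mathrm{diag}(0.09,0)$ so $\phi_1=0.3,\phi_2=0,\rho=0.3$, and $\eta=0.09$ (so $3\eta=0.27<\rho$); with $\beta=(0.21,0)$ one computes $\ddot R(\beta)=\mathrm{diag}(0.0423,\,0.0441)$, hence $\Lambda_{\min}=0.0423<0.06=2(\rho-3\eta)$. So the inequality you are trying to reach is not a consequence of your Weyl bound, and in fact fails outright at this parameter choice. The statement as reproduced in this paper is evidently missing a normalization on the scale of $\phi_1$ (for instance $\phi_1\geq 1$, or a corresponding multiplicative factor $\phi_1$ in the conclusion); you should consult the precise hypotheses in \cite{sf} before attempting to close the argument, because no arithmetic manipulation of your inequality will yield $2(\rho-3\eta)$ without such an assumption.
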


\par 
 However, note that the statement of Lemma \ref{eigen.spca} is not  necessarily true for the empirical risk $R_n(\beta)$. In high-dimensional settings, the empirical risk might be non-convex even in the local neighbourhood from Lemma \ref{eigen.spca}, because it depends on the sample covariance matrix $\hat\Sigma$ whose eigenvalues are inconsistent estimators of the population eigenvalues and might even diverge to infinity in the regime $p\gg n $; for illustration of the empirical risk function, see Figure \ref{fig:spca2}. 

Following the idea of Lemma \ref{eigen.spca}, our strategy is to estimate the loadings vector $\beta_0$ using a two step procedure. In the first step, we localize to an $\ell_2$-ball around $\beta_0$, which
is small enough such that $\rho-3\eta>0.$ In the second step, we make use of the locality to obtain a near-oracle estimator.

\subsection{Localization: first step estimator}

\noindent

\begin{figure}[h!]
\centering 
\includegraphics[width=0.48\textwidth]{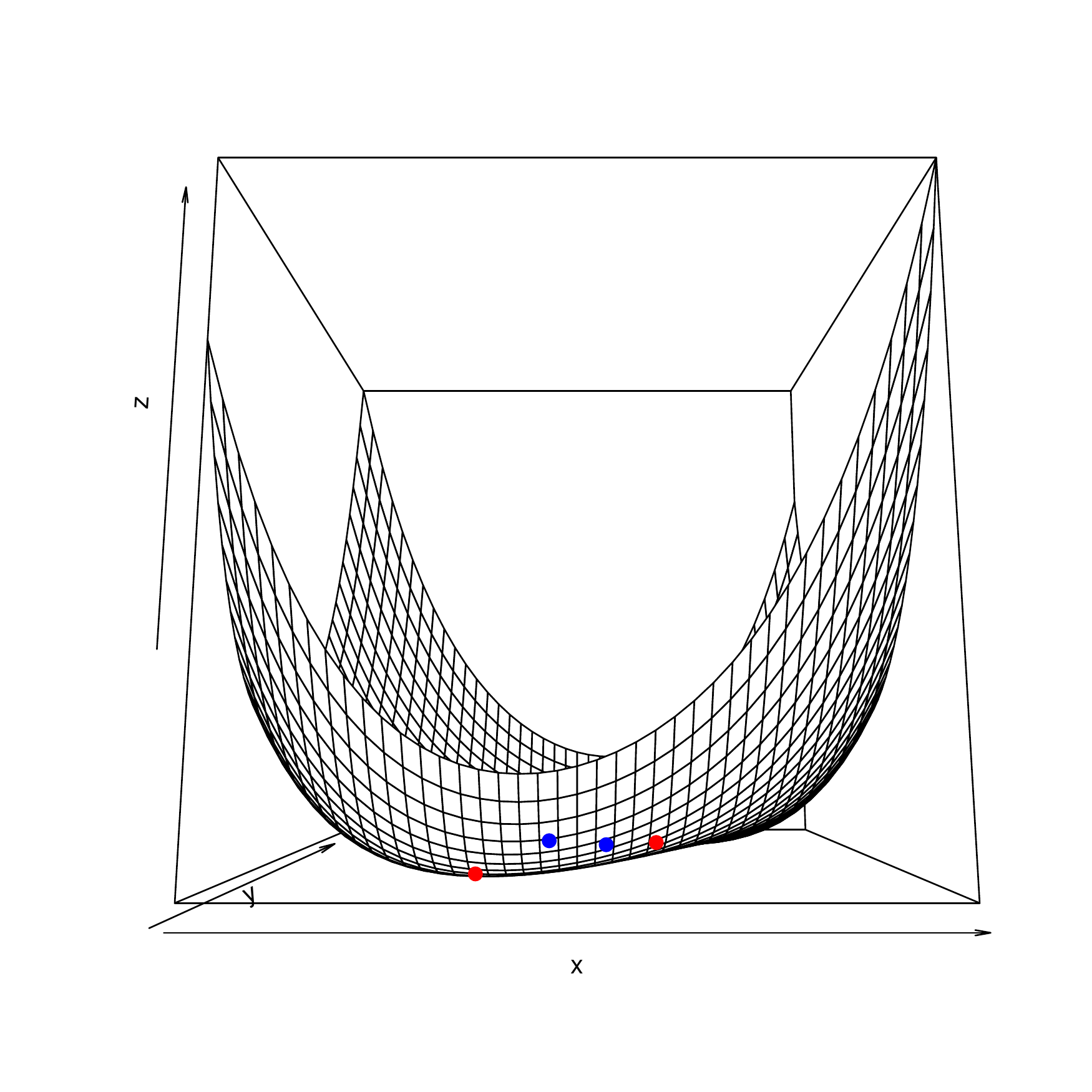}
\includegraphics[width=0.46\textwidth]{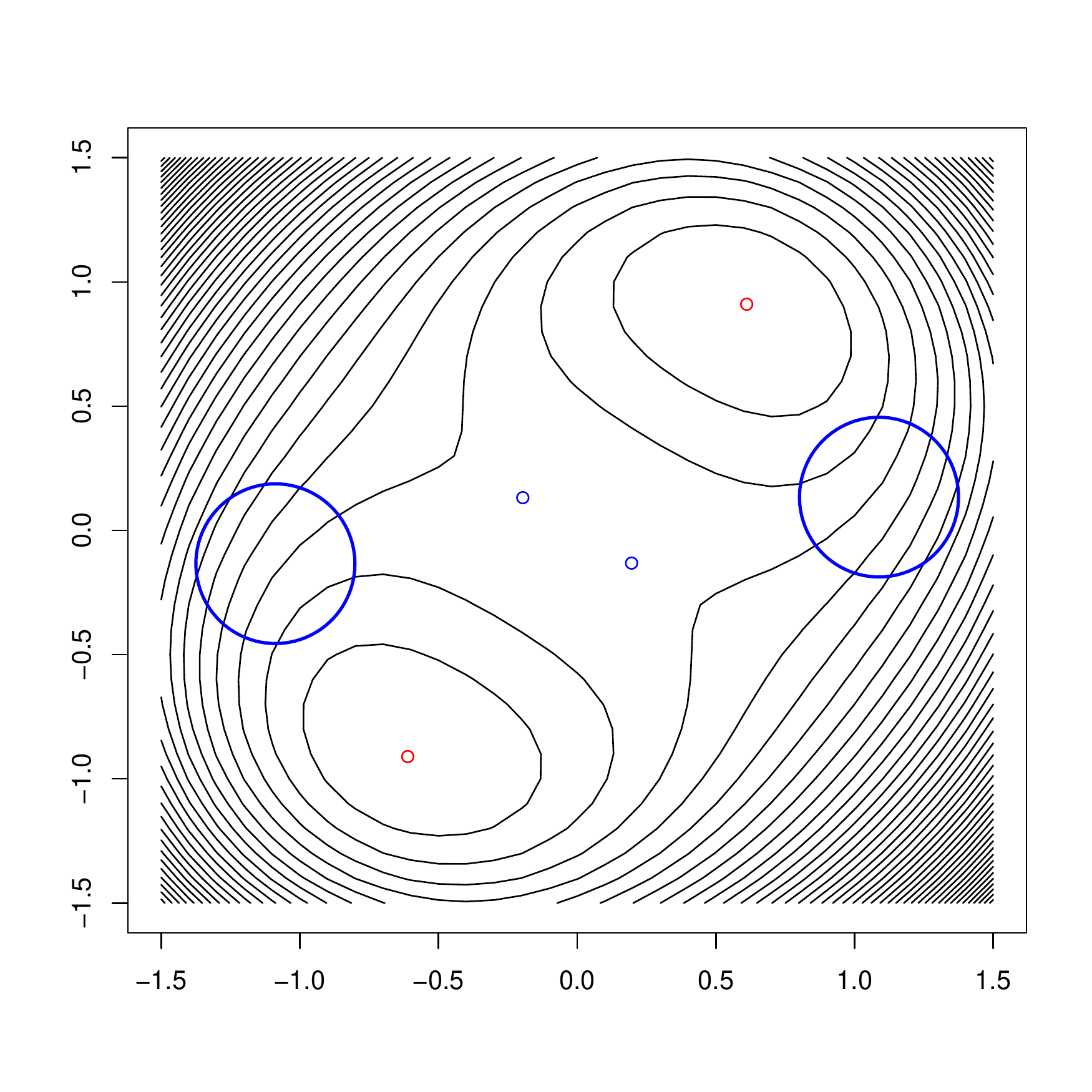}
\caption[Empirical risk function for the principal component analysis]{A toy example: the graph (left) and contours (right) of the \textbf{empirical risk} for $p=2$ and a randomly generated sample of size $ n=4$ (the observations are normally distributed in this example). 
The blue circles in the contour plot show the convex local neighbourhood from Lemma \ref{eigen.spca} (around the true $\beta_0$).}
\label{fig:spca2}
\end{figure}

\par
We base our first-step estimator on a convex program originally proposed in \cite{aspremont.direct} (and later studied by \cite{fantope}),
\vskip 0.1cm
\begin{equation}
\label{hatZ}
\hat Z:= \operatornamewithlimits{argmax}_{\text{tr}(Z) = 1,\; 0 \;\preceq\; Z \;\preceq \;I} \;\text{tr}(\hat\Sigma Z)-\lambda\|Z\|_1.
\end{equation}
\vskip 0.1cm
The feasible set is a convex relaxation of the set of positive definite rank-one matrices and the $\ell_1$-norm 
of a matrix is the $\ell_1$-norm of its vectorized version. Note that due to the relaxation, $\hat Z$ is not necessarily of rank one.
However, we show that the normalized eigenvector of $\hat Z$ corresponding to its largest eigenvalue, denote it by $\hat u_1$, may be used to estimate the corresponding population eigenvector $u_1$ (up to a sign).
We then define an initial estimator of $\beta_0$ as a properly scaled version of $\hat u_1,$
\begin{equation}
\label{beta.init.spca}
\hat \beta_{\text{init}}:= \text{tr}(\hat\Sigma \hat Z)^{1/2} \hat u_1.
\end{equation}
Lemma \ref{spca.init} below provides guarantees for the estimator $\hat\beta_{\text{init}}$ under mild conditions.
To this end, we recall Theorem 3.3  in \cite{fantope} which derives the bound for $\hat Z$ in Frobenius norm. 
By the standard arguments for deriving oracle inequalities for $\ell_1$-penalized M-estimators (see e.g. \cite{hds})
the bound from Theorem 3.3 in \cite{fantope} can be easily extended to the $\ell_1$-norm error. Recall that $u_1$ is the eigenvector of $\Sigma_0$ corresponding to its largest eigenvalue. Then
 for $\lambda\geq 2\|\hat\Sigma-\Sigma_0\|_\infty$,  it holds
\begin{equation}
\label{fantope.bound}
\|\hat Z - \eig \eig^T\|_F^2 + \lambda\|\hat Z - \eig\eig ^T\|_1\leq \frac{Cs^2\lambda^2}{(\Lambda_{1}-\Lambda_2)^2} =:\epsilon^2
,
\end{equation}
where $s$ is the number of non-zero entries in $\beta_0$ and $C>0$ is a universal constant.
\begin{lemma}
\label{spca.init}
Let $\hat Z$ be the estimator defined in \eqref{hatZ},
with $\lambda \geq 2\|\hat\Sigma-\Sigma_0\|_\infty$.
Letting $\hat u_1$ denote the normalized eigenvector of $\hat Z$ corresponding to its largest eigenvalue and assuming $\hat u_1^T \eig\geq 0$, it holds
$$\|\hat u_1 - \eig\|_2^2\leq 4\epsilon,$$
where $\epsilon$ is defined in \eqref{fantope.bound}, 
and
$$\|\hat \beta_{\emph{init}} - \beta_0\|_2 \leq  \frac{\zeta}{4\sqrt{\|\beta_0\|_2^2-\zeta}} + 2\|\beta_0\|_2\sqrt{\epsilon}
  ,$$
where 
$$\zeta:={s\lambda}+ {\epsilon^2}  + 6\|\beta_0\|_2^2 \epsilon  +  4\|\beta_0\|_2\sqrt{\epsilon},
$$
provided that we assume   $\|\beta_0\|_2^2-\zeta >0.$
\end{lemma}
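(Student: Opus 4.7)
The statement has two parts, and the plan is to prove them sequentially, establishing the eigenvector bound first and then using it together with a control of the scaling factor $\mathrm{tr}(\hat\Sigma\hat Z)^{1/2}$.

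For the first inequality, the plan is to exploit the fact that $\hat Z$ is PSD with $\mathrm{tr}(\hat Z)=1$ and $\hat Z \preceq I$, so its eigenvalues $\hat\mu_1\geq\hat\mu_2\geq\cdots$ lie in $[0,1]$ and sum to one. From \eqref{fantope.bound} I have $\|\hat Z - u_1 u_1^T\|_F \leq \epsilon$, and since the operator norm is dominated by the Frobenius norm, $u_1^T \hat Z u_1 \geq 1 - \epsilon$. The variational characterization of $\hat\mu_1$ then gives $\hat\mu_1 \geq 1 - \epsilon$, whence $\hat\mu_2 \leq 1-\hat\mu_1 \leq \epsilon$. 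Splitting the spectral decomposition of $\hat Z$ at the top eigenvalue yields $u_1^T \hat Z u_1 \leq \hat\mu_1(\hat u_1^T u_1)^2 + \hat\mu_2(1-(\hat u_1^T u_1)^2) \leq (\hat u_1^T u_1)^2 + \epsilon$, hence $(\hat u_1^T u_1)^2 \geq 1-2\epsilon$. Under the sign normalization $\hat u_1^T u_1\geq 0$,
\[
\|\hat u_1 - u_1\|_2^2 \;=\; 2(1-\hat u_1^T u_1) \;\leq\; 2(1-(\hat u_1^T u_1)^2) \;\leq\; 4\epsilon,
\]
which is the first claim.

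For the second inequality I would use $\beta_0 = \|\beta_0\|_2\, u_1$ together with the decomposition
\[
\hat\beta_{\mathrm{init}} - \beta_0 \;=\; \bigl(\mathrm{tr}(\hat\Sigma\hat Z)^{1/2} - \|\beta_0\|_2\bigr)\hat u_1 + \|\beta_0\|_2\,(\hat u_1 - u_1).
\]
The triangle inequality, $\|\hat u_1\|_2=1$ and Step 1 handle the second term (it contributes $2\|\beta_0\|_2\sqrt\epsilon$, the explicit additive term in the bound). For the first term I use the identity $|\sqrt a-\sqrt b| = |a-b|/(\sqrt a+\sqrt b)$ with $a=\mathrm{tr}(\hat\Sigma\hat Z)$, $b=\|\beta_0\|_2^2$, so that once I show $|a-b|\leq \zeta/2$ and $b-|a-b|>0$, the denominator satisfies $\sqrt a+\sqrt b \geq 2\sqrt{b-\zeta}$, producing exactly the term $\zeta/(4\sqrt{\|\beta_0\|_2^2-\zeta})$. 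The non-degeneracy hypothesis $\|\beta_0\|_2^2-\zeta>0$ is precisely what makes the square root real.

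To bound $|\mathrm{tr}(\hat\Sigma\hat Z) - \|\beta_0\|_2^2|$ I would split
\[
\mathrm{tr}(\hat\Sigma\hat Z) - \|\beta_0\|_2^2 \;=\; \mathrm{tr}\bigl((\hat\Sigma-\Sigma_0)\hat Z\bigr) + \bigl(\mathrm{tr}(\Sigma_0\hat Z)-\Lambda_1\bigr).
\]
For the first summand, Hölder's inequality gives $|\mathrm{tr}((\hat\Sigma-\Sigma_0)\hat Z)|\leq \|\hat\Sigma-\Sigma_0\|_\infty \|\hat Z\|_1$; a triangle inequality together with Cauchy--Schwarz on the $s$-sparse $u_1$ yields $\|\hat Z\|_1 \leq \|u_1\|_1^2 + \|\hat Z-u_1u_1^T\|_1 \leq s + \epsilon^2/\lambda$, producing the $s\lambda/2 + \epsilon^2/2$ contribution via $\lambda\geq 2\|\hat\Sigma-\Sigma_0\|_\infty$. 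For the second summand I would insert the spectral decomposition $\Sigma_0=\sum_j \Lambda_j u_j u_j^T$ and use $\sum_j u_j^T\hat Z u_j = \mathrm{tr}(\hat Z)=1$ to write $\Lambda_1 - \mathrm{tr}(\Sigma_0\hat Z) = \sum_{j\geq 2}(\Lambda_1-\Lambda_j)u_j^T\hat Z u_j$, which lies in $[0,\Lambda_1(1-u_1^T\hat Z u_1)]\subseteq[0,\|\beta_0\|_2^2\epsilon]$.

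The routine part of the argument is thus Hölder, Cauchy--Schwarz and the spectral decomposition. The main obstacle is the careful bookkeeping needed to match the exact form of $\zeta = s\lambda + \epsilon^2 + 6\|\beta_0\|_2^2\epsilon + 4\|\beta_0\|_2\sqrt\epsilon$: the naive split above delivers a bound of the same form but with smaller constants on the $\|\beta_0\|_2^2\epsilon$ term and no $\|\beta_0\|_2\sqrt\epsilon$ term. Producing those two extra contributions seems to require substituting $\hat Z = \hat\mu_1 \hat u_1\hat u_1^T + \hat Z^\perp$ with $\hat Z^\perp$ PSD of trace at most $\epsilon$, and tracking the cross terms $\hat u_1^T\Sigma_0\hat u_1 - \Lambda_1$ via $\hat u_1 = u_1 + (\hat u_1-u_1)$ using the bound from Step 1; the $\sqrt\epsilon$ then enters through a linear-in-$(\hat u_1-u_1)$ term that is not cancelled by spectral orthogonality and which I would need to bound by $\|\Sigma_0\|_{\mathrm{op}}\|\hat u_1-u_1\|_2\leq 2\|\beta_0\|_2^2\sqrt\epsilon$.
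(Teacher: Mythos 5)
Your proof is correct, and in the second half it is actually tighter and cleaner than the paper's own argument. The first part (the eigenvector bound) is essentially the same chain of reasoning as the paper, just entered via the variational characterization and $u_1^T\hat Z u_1 \ge 1-\epsilon$ rather than via $\|\hat Z\|\ge 1-\epsilon$ and $\mathrm{tr}(\hat Z^2)\ge(1-\epsilon)^2$; both routes establish $(\hat u_1^Tu_1)^2\ge 1-2\epsilon$ and then $\|\hat u_1-u_1\|_2^2\le 4\epsilon$.

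For the scaling factor, the paper controls $\mathrm{tr}(\Sigma_0\hat Z)-\Lambda_1$ by splitting $\hat Z=\hat\phi_1^2\hat u_1\hat u_1^T+\sum_{j\ge2}\hat\phi_j^2\hat u_j\hat u_j^T$ and carefully tracking the cross terms in $\hat u_1\hat u_1^T-u_1u_1^T$, which is where the $6\|\beta_0\|_2^2\epsilon$ and $4\|\beta_0\|_2\sqrt\epsilon$ contributions in $\zeta$ come from. Your identity
\[
\Lambda_1-\mathrm{tr}(\Sigma_0\hat Z)=\sum_{j\ge2}(\Lambda_1-\Lambda_j)\,u_j^T\hat Z u_j\in\bigl[0,\Lambda_1(1-u_1^T\hat Z u_1)\bigr]\subseteq[0,\|\beta_0\|_2^2\epsilon]
\]
bypasses all of that: it exploits the trace constraint $\mathrm{tr}(\hat Z)=1$ directly, expresses the deviation in the $u_j$-basis rather than the $\hat u_j$-basis, and in one line delivers a bound with coefficient $1$ instead of $6+O(\sqrt\epsilon)$. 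This buys both a shorter proof and a strictly smaller remainder, and it additionally identifies the sign of the error ($\mathrm{tr}(\Sigma_0\hat Z)\le\Lambda_1$), which the paper never uses.

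Your concluding worry about ``matching the exact form of $\zeta$'' is unfounded. You do not need to reproduce the paper's extra terms: what the final step requires is only $|\mathrm{tr}(\hat\Sigma\hat Z)-\|\beta_0\|_2^2|\le\zeta/2$ together with $\|\beta_0\|_2^2-\zeta>0$, and your bound $s\lambda/2+\epsilon^2/2+\|\beta_0\|_2^2\epsilon$ is manifestly $\le\zeta/2$ term by term. Since $x\mapsto x/(4\sqrt{\|\beta_0\|_2^2-x})$ is increasing on $(0,\|\beta_0\|_2^2)$, a smaller $|a-b|$ only strengthens the conclusion. In fact, chasing the constants in the paper's own displayed proof gives $|a-b|\le\zeta$ and then $|\sqrt a-\sqrt b|\le\zeta/(2\sqrt{\|\beta_0\|_2^2-\zeta})$, which is off by a factor of $2$ from the lemma's statement; your version with $|a-b|\le\zeta/2$ reproduces the $\zeta/(4\sqrt{\cdot})$ exactly. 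So you should drop the hand-wringing in the last paragraph and keep your cleaner derivation as is.
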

Under a sub-Gaussianity condition on the design (as will be assumed below in Section \ref{subsec:spca.theory}), 
Lemma \ref{spca.init} implies that with $\lambda\asymp \sqrt{\log p/n}$, it holds that
$$\|\hat \beta_{\text{init}} - \beta_0\|_2=\mathcal O_P(\sqrt{s\lambda}),$$
 provided that
 $\|\beta_0\|_2=\mathcal O(1), 1/(\Lambda_1 -\Lambda_2)=\mathcal O(1)$ and $s\lambda\rightarrow 0$.


\section{Main results }
\label{sec:spca.main}
\subsection{Methodology and de-biasing}

\label{sec:spca.main.method}
In this section, we  define the second step estimator  and  propose metho\-dology for asymptotically normal estimation
 of loadings and the maximum eigenvalue of $\Sigma_0$. 

We aim to define the second-step estimator localized in an $\ell_2$-neighbour\-hood of the initial estimator $\hat\beta_{\text{init}}$.
However, for simplicity of presentation, we will define the local  neighbourhood around $\beta_0$ instead of $\hat\beta_{\text{init}}$ as follows, 
$$\mathcal B_{}:=\{\beta\in\mathbb R^p: 
\|\beta-\beta_0
\|_2\leq \eta\},$$
where $\eta$ is some suitable positive constant. 
In practice we replace $\beta_0$ in the above definition by $\hat\beta_{\text{init}}$;
then Lemma \ref{spca.init} provides guarantees that for $n$ sufficiently large, a small $\ell_2$-neighbourhood around $\hat\beta_{\text{init}}$
will contain  $\beta_0$ with high probability.
We define the program
\begin{eqnarray}
\label{est.fast}
\hat \beta \in
\operatornamewithlimits{argmin}_{\beta \in \mathcal B,\; \|\beta\|_1\leq \tune} R_n(\beta)
+ \lambda \|\beta\|_1,
\end{eqnarray}
where $(\lambda,\tune)$ is a pair of positive tuning parameters.
We include the constraint $\|\beta\|_1\leq \tune$ due to non-convexity of $R_n(\beta)$. This will be necessary for deriving theoretical guarantees for $\hat\beta$, namely for bounding the probabilistic error term. This constraint is not restrictive, but requires to provide a value for the tuning parameter $\tune$. Asymptotically, $T\asymp 1/\lambda \asymp \sqrt{n/\log p}.$  Similar constraints were studied e.g. in \cite{non-convex1}.
\par
As pointed out previously, the optimized function \eqref{est.fast} may be non-convex even over the local set $\mathcal B$. 
Hence it may possess stationary points that are not global optima. Iterative methods such as gradient or coordinate descent 
are guaranteed to eventually converge to a stationary point, regardless of convexity, but this point could be a local minimum, saddle point or even a local maximum. 
Otherwise computing global optima of non-convex functions in an efficient manner may be very difficult in practice.
To overcome this difficulty, we provide statistical guarantees for any stationary point of the program \eqref{est.fast}, not only for the global minimizer. 
Similar statistical guarantees providing oracle inequalities for non-convex regularized M-estimators were studied e.g.  in
\cite{non-convex1} or \cite{sf}. 
A stationary point $\betatil$ of the program \eqref{est.fast} is any  point of the feasible set where
\begin{equation}
\label{stationary}
(\dot R_n(\betatil) +\lambda\partial\|\betatil\|_1)^T(\beta - \betatil) \geq 0, \quad\text{ for all }\beta\in\mathcal B, \|\beta\|_1\leq \tune,
\end{equation}
where $\partial\|\betatil\|_1$ denotes the sub-differential of $\|\beta\|_1$ evaluated at $\betatil.$
This definition accounts  also for local minima at the boundary; if the stationary point lies in the interior of the feasible set, then \eqref{stationary}
reduces to the Karush-Kuhn-Tucker (KKT) conditions $\dot R_n(\betatil) +\lambda\partial\|\betatil\|_1=0.$

In the next section, we show that any stationary point $\hat\beta$ is a near-oracle estimator of $\beta_0.$
However, it is asymptotically biased as will be shown in the sequel, but we can employ de-biasing (or de-sparsifying) techniques studied in \cite{vdgeer13}. 
If $\hat\beta$ is a stationary point defined as in \eqref{stationary}, the de-sparsifying  approach suggests to take
the bias-corrected ``estimator''
\begin{equation*}
\tilde b:= \hat\beta - \Theta_0 (\|\hat\beta\|_2^2 \hat\beta-\hat\Sigma \hat \beta  ),
\end{equation*}
where $\Theta_0$ is 
 the inverse Hessian matrix of the population risk, $\Theta_0 := \ddot R(\beta_0)^{-1}.$
The $p\times p$ matrix $\Theta_0$ is not known and needs to be replaced by a consistent estimator as will be proposed below.
\par
Furthermore, we aim to construct an asymptotically normal estimator for the maximum eigenvalue, which is a quadratic function of $\beta_0.$
This estimation problem was not considered in \cite{vdgeer13}, but similar ideas may be applied. We will show that the estimator $\|\hat\beta\|_2^2$ is biased for $\Lambda_{\max}$, but may be de-biased by defining 
\begin{equation*}
\tilde\Lambda:= \|\hat\beta\|_2^2 - 2\hat\beta^T\Theta_0^T(\|\hat\beta\|_2^2 \hat\beta-\hat\Sigma \hat \beta ).
\end{equation*}
An estimator of $\Theta_0$ may be constructed in a similar spirit as in \cite{vdgeer13} using nodewise regression. 
Nodewise regression was  studied in \cite{vdgeer13} for generalized linear models which have a special structure in the Hessian matrix of the empirical risk and the empirical Hessian matrix is positive semi-definite. We however aim to apply nodewise regression to approximately invert the Hessian matrix
$$\ddot R_n(\hat\beta):=-\hat\Sigma + \|\hat\beta\|_2^2 I + 2\hat\beta\hat\beta^T,$$
where the special structure from generalized semi-linear models is not present and moreover, the empirical Hessian is not necessarily positive definite.
To deal with the non-convexity which arises due to absence of positive semi-definiteness, we modify the nodewise regression program from \cite{vdgeer13} by adding an extra constraint $\|\cdot\|_1\leq T$ with a tuning parameter $T>0.$ Moreover, 
due to non-convexity, we need to derive oracle inequalities for any stationary point instead of only the global minimum.
\par
In Algorithm \ref{alg:nodewisea} below we formulate the modified version of the nodewise regression program for an arbitrary input matrix $A$. 
Recall that for a matrix $A\in \mathbb R^{p\times p}$, we let $A_j$ denote its $j$-th column, $A_{-j}\in\mathbb R^{p\times (p-1)}$ the matrix $A$ without its $j$-th column, $A_{j,-j}\in\mathbb R^{(p-1)\times 1}$ denote 
the column vector obtained by selecting the $j$-th row of $A$ and removing its $j$-th entry, 
and by $A_{-j,-j}\in \mathbb R^{(p-1)\times (p-1)}$ we denote the matrix 
$A$ without its $j$-th column and the $j$-th row.
\vskip -0.1cm
\begin{myalgo}{Non-convex Nodewise Lasso}{$A\in\mathbb R^{p\times p}$, positive tuning parameters $(\lambda_j,\tunej),j=1,\dots,p$\vskip 0.1cm}
\label{alg:nodewisea}
\noindent
\textbf{for} $j=1,\dots,p$:\vskip 0.1cm
\begin{itemize}[leftmargin=1cm]
\item[1:]
Compute any stationary point $\hat\gamma_j$ of the program
\begin{equation}
\label{lasso-type}
\operatornamewithlimits{min}\limits_{\gamma_j\in\mathbb R^{p-1}:\;\|\gamma_j\|_1\leq \tunej} 
\Gamma_j^T A_{}\Gamma_j+
\lambda_j \|\gamma_j\|_1,
\end{equation}
where 
\begin{equation}
\label{defi.gama.pca}
\Gamma_j:=(-\gamma_{j,1},\dots,-\gamma_{j,j-1},1,-\gamma_{j,j+1},\dots,-\gamma_{j,p}).
\end{equation}
\item[2:]
Define $\hat\Gamma_j$ via the relation \eqref{defi.gama.pca} with $\hat\gamma_j$ and compute the estimator of the noise level 
$\hat\tau_j^2:= \hat\Gamma_j A_{}\hat\Gamma_{j} + \frac{1}{2}\lambda_j\|\hat\gamma_j\|_1$.
\item[3:]
Compute the nodewise Lasso estimator defined by
$\hat\Theta_{j} := \hat\Gamma_j/\hat\tau_j^2.$
\end{itemize}
\end{myalgo}
Stack $\hat\Theta_j,j=1,\dots,p$ into the columns of $\hat\Theta := [\hat \Theta_1,\dots,\hat\Theta_p]$.\\\noindent
\myalgoend{$\hat\Theta$
}

\noindent
We remark that a stationary point $\hat\gamma_j$ is defined analogously as in \eqref{stationary}, that is, $\hat\gamma_j$ is a stationary point of the program \eqref{lasso-type} if it lies in the feasible set and 
for all $\gamma_j\in\mathbb R^{p-1}$ in the feasible set 
it holds
$$(-2A_{j,-j} + 2A_{-j,-j}\hat\gamma_j +\lambda_j\partial\|\hat\gamma_j\|_1)^T(\gamma_j - \hat\gamma_j) \geq 0,$$
where
 $\partial\|\hat\gamma_j\|_1$ is the sub-differential of the $\ell_1$-norm evaluated at $\hat\gamma_j$.
If $\hat\gamma_j$ is a stationary point of the program \eqref{lasso-type} which lies in the interior of the feasible set, 
then 
\begin{equation}
\label{kkt.spca.gamma}
-2A_{j,-j} + 2A_{-j,-j}\hat\gamma_j +\lambda_j\partial\|\hat\gamma_j\|_1=0.
\end{equation}
In this case, using the KKT conditions \eqref{kkt.spca.gamma}, one can show  
 that
$$A_j^T\hat\Gamma_j =\hat\tau_j^2\quad\text{ and }\quad\|A_{-j}^T\hat\Gamma_j\|_\infty \leq\lambda_j/2,$$
which implies
 $$\|A^T \hat\Theta - I \|_\infty =\mathcal O( \max_{j=1,\dots,p}\lambda_j /\hat\tau_j^2).$$
We aim to apply the nodewise Lasso with $A:=\ddot R_n(\hat\beta)$ and for this choice, we show in the following section that we can obtain an oracle inequality for $\hat\Theta_j$.
 Our theoretical results also identify the (asymptotically) correct choice of the tuning parameters $\lambda\asymp 1/\tune \asymp \sqrt{\log p/n}$. From a computational viewpoint, calculating any stationary point of the Lasso-type program \eqref{lasso-type} can be achieved by a polynomial time algorithm (such as the gradient or coordinate descent). 

\noindent
We collect the full procedure for obtaining the de-biased estimator in the scheme below. 

\begin{myalgo}{{De-biased sparse PCA}}{$n\times p $ data matrix $X$, positive tuning parameters $\lambda_{\text{init}},$ $ (\lambda,\tune), (\lambda_j, \tunej),$ \\$j=1,\dots,p$}
\label{des-PCA}
\end{myalgo}
\begin{itemize}[leftmargin=*]
\item[1:]
Compute the initial estimator
$\hat \beta_{\text{init}} 
$ defined in \eqref{beta.init.spca} with the tuning parameter $\lambda_{\text{init}}$
\item[2:]
Compute any stationary point $\hat \beta$ of the following program, with tuning parameters $(\lambda,\tune)$
\begin{equation}
\label{spca.posledna}
\operatornamewithlimits{argmin}\limits_{\beta \;\in\;\mathbb R^p:\;\|\beta\|_1\leq T, \;\;\|\beta-\hat\beta_{\text{init}}\|_2\leq \eta} -\frac{1}{2}\beta^T \hat\Sigma \beta + \|\beta\|_2^4+ \lambda \|\beta\|_1.
\end{equation}
\item[3:]
Run the nodewise Lasso in Algorithm \ref{alg:nodewisea} with input matrix 
$$\ddot R_n(\hat\beta)= -\hat\Sigma + \|\hat\beta\|_2^2 I + 2\hat\beta\hat\beta^T,$$
with tuning parameters $(\lambda_j,\tunej),$ $j=1,\dots,p$ and output $\hat\Theta$
\item[4:]
 Compute the de-sparsified estimator and the eigenvalue estimator:
\begin{equation}
\label{despca.b}
\hat b :=  \hat\beta - \hat\Theta^T (\|\hat\beta\|_2^2 \hat\beta-\hat\Sigma \hat \beta  ),
\end{equation}
\begin{equation}
\label{despca.lambda}
\hat \Lambda := \|\hat \beta\|_2^2 - 2\hat\beta^T\hat\Theta^T(\|\hat\beta\|_2^2 \hat\beta-\hat\Sigma \hat \beta ).
\end{equation}
\end{itemize}

\myalgoend{$\hat b,\hat\Lambda$}

The tuning parameters in Algorithm \ref{des-PCA} have to be chosen of order $\lambda_{\text{init}}\asymp \lambda\asymp 1/\tune\asymp \lambda_j\asymp 1/\tunej\asymp\sqrt{\log p/n}$ and the constant $\eta$ in program \eqref{spca.posledna}  must be chosen sufficiently small.

\noindent
\subsection{Theoretical results
}

\label{subsec:spca.theory}
In this section, we derive the main theoretical results: 
firstly we provide oracle inequalities for $\hat\beta$ and the nodewise Lasso $\hat\Theta$ (thoughout this section, $\hat\beta$ is the estimator defined in \eqref{est.fast}, where we assume we are already in the neighbourhood around $\beta_0$ and $\hat\Theta$ is based on $\hat\beta$); 
secondly we provide results on asymptotic normality of the bias-corrected estimators based on $\hat\beta$ and $\hat\Theta$. 
We discuss how these results may be used to construct confidence intervals and support recovery.

 To bound the terms arising from the probabilistic analysis of the estimators, we assume sub-Gaussian design, but we remark that
similar results could be obtained under bounded design using the concentration results derived in \cite{uniform}.

\begin{definition}
We say that a vector $Y\in\mathbb R^p$ is sub-Gaussian with a parameter $\subgp$ if 
for all vectors $\alpha\in\R^p$ such that $\|\alpha\|_2=1$,  it holds
$$\mathbb E e^{|\alpha^TY|^2/\subgp^2}\leq 2.$$
\end{definition}

\begin{condition}[Sub-Gaussian design]
\label{design.spca}
Assume that the $n\times p$ random matrix $X$ has independent rows, which are sampled from a zero-mean distribution with a covariance matrix $\Sigma_0$ and are sub-Gaussian vectors with a parameter $\sigma.$ We say that $X$ is a sub-Gaussian matrix with a parameter $\sigma.$
\end{condition}

The following lemma derives an oracle inequality for the second step estimator $\hat\beta_{}$.
Recall that $\eta$ is the size of the neighbourhood in the definition of $\hat\beta$, $\rho=\phi_{\max}-\phi_2 $ is the eigenvalue gap
and the sparsity in $\beta_0$ is denoted by
$$s:=\|\beta_0\|_0.$$

\begin{theorem}
\label{oracle}
Assume that Condition \ref{design.spca} is satisfied with a parameter $\sigma$, let $\lambda_0=\sqrt{2\log (2p)/n},$ $
\lambda_1 = 4\sigma^2(\|\beta_0\|_2+1)[\lambda_0+\lambda_0^2]$ and
$$ \rho-3\eta \geq c_0 \sigma^2 C_{\tune} [3C_{\tune}+\sqrt{6}],$$
where $c_0$ is a suitable universal constant.
Let the tuning parameters $(\lambda,\tune)$ of the program \eqref{est.fast} satisfy
\begin{align}
\label{tuning.par1234} 
\lambda\geq 2\lambda_1,
\end{align} 
$\tune \leq C_{\tune} /(2\lambda_0),$
and $\|\beta_0\|_1\leq \tune.$
Then any stationary point $\betatil$ as defined in \eqref{stationary} satisfies with probability at least 
$1-2(J+2)e^{-\log (2p)}$ where $J=\lceil \log T\rceil$ 
 the error bound  
\begin{align}
\label{EqnStatBounds}
\|\betatil - \beta_0\|_2^2 + \lambda \|\betatil - \beta_0\|_1 \le \frac{C_2 s\lambda^2  }{(\rho-3\eta)^2},
\end{align}
where $C_2$ is a universal constant.
\end{theorem}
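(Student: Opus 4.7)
The plan is to follow the standard oracle-inequality route for $\ell_1$-penalized $M$-estimators, using the local strong convexity of the population risk from Lemma \ref{eigen.spca} to compensate for the non-convexity of $R_n$. The entry point is the variational characterization \eqref{stationary} evaluated at $\beta = \beta_0$, which is feasible because $\beta_0\in\mathcal B$ trivially and $\|\beta_0\|_1\le\tune$ by assumption. With a subgradient $v\in\partial\|\betatil\|_1$, this produces the basic inequality
\begin{equation*}
[\dot R_n(\betatil)-\dot R(\betatil)]^T(\betatil-\beta_0) \;+\; \dot R(\betatil)^T(\betatil-\beta_0) \;\le\; \lambda(\|\beta_0\|_1-\|\betatil\|_1).
\end{equation*}
The goal is then to lower-bound the left side by a strong-convexity term in $\|\betatil-\beta_0\|_2^2$ minus a small multiple of $\lambda\|\betatil-\beta_0\|_1$, and close the loop via the classical cone/Young argument.

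\textbf{Deterministic lower bound.} Since $\dot R(\beta_0)=0$, an integrated Taylor expansion, together with Lemma \ref{eigen.spca} and the fact that the segment from $\beta_0$ to $\betatil$ lies in the convex ball $\mathcal B$, gives
\begin{equation*}
\dot R(\betatil)^T(\betatil-\beta_0) \;=\; \int_0^1 (\betatil-\beta_0)^T\ddot R\bigl(\beta_0+t(\betatil-\beta_0)\bigr)(\betatil-\beta_0)\,dt \;\ge\; 2(\rho-3\eta)\|\betatil-\beta_0\|_2^2.
\end{equation*}
This is the restricted strong convexity delivered by the eigenvalue gap; it is the only place where the local constraint $\beta\in\mathcal B$ is used in an essential way.

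\textbf{Empirical process control.} The stochastic term equals $-\betatil^T(\hat\Sigma-\Sigma_0)(\betatil-\beta_0)$, which I split as
\begin{equation*}
-\beta_0^T(\hat\Sigma-\Sigma_0)(\betatil-\beta_0) \;-\; (\betatil-\beta_0)^T(\hat\Sigma-\Sigma_0)(\betatil-\beta_0).
\end{equation*}
The first (linear) piece is controlled by H\"older's inequality and a sub-Gaussian tail bound of Hanson--Wright / Bernstein type: $\|(\hat\Sigma-\Sigma_0)\beta_0\|_\infty \le 2\sigma^2\|\beta_0\|_2(\lambda_0+\lambda_0^2)$ with probability at least $1-2e^{-\log(2p)}$, contributing a $(\lambda_1/2)\|\betatil-\beta_0\|_1$ term that is absorbed by the assumption $\lambda\ge 2\lambda_1$. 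The quadratic piece is the delicate one: I would handle it by a peeling argument over the constraint region, partitioning $\|\betatil-\beta_0\|_1\in(2^{k-1},2^k]$ into $J=\lceil\log\tune\rceil$ dyadic shells and applying on each the uniform sub-Gaussian bound $\sup_{\|u\|_1\le r}|u^T(\hat\Sigma-\Sigma_0)u|\lesssim r^2\sigma^2(\lambda_0+\lambda_0^2)$; the union bound over the shells is the source of the $(J+2)e^{-\log(2p)}$ probability factor. The side condition $\tune\lambda_0\le C_\tune/2$ (combined with $\|\betatil\|_2\le\|\beta_0\|_2+1$ on $\mathcal B$) ensures that this quadratic contribution likewise collapses into $\lambda_1\|\betatil-\beta_0\|_1$, so that in total the empirical process is at most $\tfrac{3\lambda}{4}\|\betatil-\beta_0\|_1$.

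\textbf{Combining and finishing.} Substituting the three bounds into the basic inequality yields, on the stated event,
\begin{equation*}
2(\rho-3\eta)\|\betatil-\beta_0\|_2^2 + \lambda\|\betatil\|_1 \;\le\; \tfrac{3\lambda}{4}\|\betatil-\beta_0\|_1 + \lambda\|\beta_0\|_1.
\end{equation*}
Writing $S=\mathrm{supp}(\beta_0)$ and using $\|\betatil\|_1\ge\|\beta_0\|_1-\|(\betatil-\beta_0)_S\|_1+\|(\betatil-\beta_0)_{S^c}\|_1$ produces the usual cone condition on $\betatil-\beta_0$; the bound $\|(\betatil-\beta_0)_S\|_1\le\sqrt{s}\|\betatil-\beta_0\|_2$ then converts $\ell_1$ mass into $\ell_2$, and Young's inequality $\lambda\sqrt{s}\|\betatil-\beta_0\|_2\le(\rho-3\eta)\|\betatil-\beta_0\|_2^2+Cs\lambda^2/(\rho-3\eta)$ absorbs the resulting cross term into the strong-convexity term on the left, yielding the claimed bound $\|\betatil-\beta_0\|_2^2+\lambda\|\betatil-\beta_0\|_1 \lesssim s\lambda^2/(\rho-3\eta)^2$. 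The main obstacle is the uniform control of the quadratic empirical process over the non-convex set $\mathcal B\cap\{\|\beta\|_1\le\tune\}$: the usual restricted strong convexity arguments for convex Lasso problems do not apply, which is what forces the peeling layer and is the source of the $J=\lceil\log\tune\rceil$ factor in the probability statement.
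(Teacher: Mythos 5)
Your overall route is exactly the one the paper takes: start from the variational inequality \eqref{stationary} evaluated at $\beta_0$, split off the population gradient and lower-bound it via Lemma \ref{eigen.spca}, decompose the stochastic term into a linear piece ($\beta_0^TW(\betatil-\beta_0)$, controlled by Lemma \ref{sigma.con}) and a quadratic piece ($(\betatil-\beta_0)^TW(\betatil-\beta_0)$, controlled by a peeling argument over dyadic $\ell_1$-shells that produces the $J$-factor in the probability), then close with the usual cone and Young steps. The one place where your sketch deviates from what actually works is the form of the uniform bound for the quadratic piece and, consequently, where that piece gets absorbed.

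The bound $\sup_{\|u\|_1\le r}|u^TWu|\lesssim r^2\sigma^2(\lambda_0+\lambda_0^2)$ that you invoke is the trivial H\"older bound $\|W\|_\infty\|u\|_1^2$; it produces a contribution of order $\lambda_0\|\betatil-\beta_0\|_1^2$, and even after using $\lambda_0\|\betatil-\beta_0\|_1\le C_\tune$ this only gives a multiple of $\|\betatil-\beta_0\|_1$ with a \emph{constant} prefactor $C_\tune$, not a prefactor of order $\lambda\asymp\lambda_0$. That cannot be swallowed by the penalty term when $\lambda_0\to 0$. The paper's Lemma \ref{cor1} proves a genuinely sharper statement, of the form
\begin{equation*}
|\theta^TW\theta|\;\le\;\lambda_1\|\theta\|_1\;+\;\delta_{\|\theta\|_1}\|\theta\|_2^2,
\qquad
\delta_M\asymp\sigma^2\bigl(M^2\lambda_0^2+M\lambda_0\bigr),
\end{equation*}
obtained by first passing from the $\ell_1$-ball to a convex hull of sparse vectors (Lemma \ref{l1tol0}) and then applying a sparse-restricted sub-Gaussian deviation bound (Lemma \ref{dev}). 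With $\lambda_0\|\theta\|_1\le C_\tune$ the second term becomes $\lesssim \sigma^2 C_\tune(3C_\tune+\sqrt{6})\|\theta\|_2^2$, and it is absorbed into the \emph{strong-convexity} term $(\rho-3\eta)\|\theta\|_2^2$, not into $\lambda\|\theta\|_1$. This is precisely why the theorem imposes $\rho-3\eta\ge c_0\sigma^2 C_\tune(3C_\tune+\sqrt{6})$ --- a hypothesis your argument would not need, which is a sign that the absorption has gone into the wrong slot. So: keep your deterministic strong-convexity step and the peeling skeleton, but replace the crude $r^2\|W\|_\infty$ bound with the mixed $\ell_1$/$\ell_2$ bound from Lemma \ref{cor1}, and route the quadratic remainder into $(\rho-3\eta)\|\betatil-\beta_0\|_2^2$ rather than into the penalty.
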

In an asymptotic formulation, we require that $\|\beta_0\|_1 = \mathcal O(\sqrt{n}/\log p)$ and $\|\beta_0\|_2=\phi_{\max}=\mathcal O(1)$. Then for $\lambda\asymp \sqrt{\log p/n}$ and
$T\asymp \sqrt{n/\log p},$ we obtain rates of order $s\log p/n$, provided that $\rho-3\eta$ is lower bounded by a universal constant. The tuning parameter $\tune$ must be chosen large enough to guarantee that $\beta_0$ lies in the feasible set.
The above result essentially requires that  $\lambda_0 \|\beta_0\|_1$ is bounded by a universal constant for $\hat\beta$
to achieve the oracle rates $s\log p/n.$

Similar oracle inequalities may be derived for the nodewise Lasso estimators; but due to high-dimensionality, sparsity conditions on the columns of $\Theta_0$ are necessary. Sparsity conditions on the inverse population Hessian have appeared in literature on linear regression (\cite{zhang,vdgeer13,stanford1}) and 
generalized linear models (\cite{vdgeer13,vch,vch1}).
For $j=1,\dots,p$, we define the population parameters 
\begin{equation}\label{gamma0}
\gamma^0_{j}:=\operatornamewithlimits{argmin}_{\gamma\in\mathbb R^{p-1}} \;\;
\Gamma_j^T \ddot R(\beta_0)\Gamma_j,
\end{equation}
where $\Gamma_j$ is defined in \eqref{defi.gama.pca}  
and we define the  corresponding sparsity parameters 
$$s_j := \|\gamma_j^0\|_0 , \;\;\;\text{ for }j=1,\dots,p.$$
These sparsity parameters as well correspond to the sparsity in the columns of $\Theta_0.$
To keep the presentation simpler, in the results that follow, we assume that the maximum eigenvalue of $\Sigma_0$ is bounded ($\phi_{\max} =\mathcal O(1)$) and that there exists a constant $c>0$ such that $\rho-3\eta \geq c>0$.
A more refined result might allow the quantities $\phi_{\max}$ and $1/(\rho-3\eta)$ to grow, although their growth cannot be faster than
 (a certain power of) ${\sqrt{n}}/({\max(s,\max_j s_j)\log p})$.

\begin{lemma}
\label{nodewise.asymp.spca}
Assume Condition \ref{design.spca} with a universal parameter $\sigma>0$, suppose that 
 $\rho-3\eta\geq c>0,$  $\phi_{\max} \leq C_{\max}$, 
for some universal constants $c,C_{\max}$  and $\max_{j=1,\dots,p}\spag=o(\sqrt{n/\log p})$.
Let $\hat\Theta$ be defined by the nodewise Lasso in Algorithm \ref{alg:nodewisea} with input matrix $\ddot R_n(\hat\beta)$, where
$\hat\beta$ is defined in \eqref{stationary} with suitable tuning parameters 
$$\lambda\asymp \sqrt{\log (2p)/n}, \;\;\;\text{ and }\;\;\;\|\beta_0\|_1\leq \tune\leq C_T \sqrt{n/\log(2p)},$$ and for $j=1,\dots,p$,
$$\lambda_j \asymp\sqrt{{\log (2p)}/{n}}\;\;\;\text{ and }
\;\;\; \|\gamma_j^0\|_1 \leq \tunej \leq \bar C_T \sqrt{{n}/{\log (2p)}},\;\;\;\;
$$
where $C_{\tune},\bar C_{T}$ are  suitable universal constants.
Then it holds
$$\max_{j=1,\dots,p}\|\hat\Theta_j-\Theta_j^0\|_1=\mathcal O_P(\max_{j=1,\dots,p}\spag \lambda_j),
$$
$$\max_{j=1,\dots,p}\|\hat\Theta_j-\Theta_j^0\|_2=\mathcal O_P(\max_{j=1,\dots,p}\sqrt{\spag} \lambda_j). $$
\end{lemma}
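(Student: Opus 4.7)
\textbf{Proof plan for Lemma \ref{nodewise.asymp.spca}.} The strategy is to treat each column-wise problem \eqref{lasso-type} as a quadratic-form Lasso whose design matrix is the empirical Hessian $A=\ddot R_n(\hat\beta)$, and whose population counterpart is $\ddot R(\beta_0)$. I would first show that these two $p\times p$ matrices are uniformly close in $\ell_\infty$-norm, then adapt the standard restricted-eigenvalue / compatibility argument for the Lasso to stationary points of the (possibly non-convex) quadratic program, and finally push the resulting bounds through the normalization by $\hat\tau_j^2$.

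\emph{Step 1: control of $\ddot R_n(\hat\beta)-\ddot R(\beta_0)$ in max-norm.} Write
\[
\ddot R_n(\hat\beta)-\ddot R(\beta_0)=-(\hat\Sigma-\Sigma_0)+(\|\hat\beta\|_2^2-\|\beta_0\|_2^2)I+2(\hat\beta\hat\beta^T-\beta_0\beta_0^T).
\]
Under Condition \ref{design.spca}, sub-Gaussian concentration gives $\|\hat\Sigma-\Sigma_0\|_\infty=\mathcal O_P(\sqrt{\log p/n})$. Theorem \ref{oracle} yields $\|\hat\beta-\beta_0\|_2=\mathcal O_P(\sqrt{s\log p/n})$ and $\|\hat\beta-\beta_0\|_1=\mathcal O_P(s\sqrt{\log p/n})$, from which $|\|\hat\beta\|_2^2-\|\beta_0\|_2^2|$ and $\|\hat\beta\hat\beta^T-\beta_0\beta_0^T\|_\infty$ are $\mathcal O_P(\sqrt{s\log p/n})$ since $\|\beta_0\|_2,\|\beta_0\|_\infty=\mathcal O(1)$. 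Under the sparsity assumption $\max_j s_j=o(\sqrt{n/\log p})$ (and implicitly $s=o(\sqrt{n/\log p})$) the overall deviation is of the same order $\mathcal O_P(\sqrt{\log p/n})=\mathcal O_P(\lambda_j)$.

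\emph{Step 2: oracle inequality for the stationary point $\hat\gamma_j$.} The population Hessian $\ddot R(\beta_0)$ is strictly positive definite on account of Lemma \ref{eigen.spca} (with $\eta=0$) and the gap condition $\rho\geq c$, so it satisfies a compatibility condition with a universal constant on the whole of $\mathbb R^{p-1}$. Combined with Step 1, this compatibility condition transfers to $A_{-j,-j}$ on the cone of vectors approximately supported on $\text{supp}(\gamma_j^0)$, uniformly in $j$, since the perturbation contributes at most $\|\gamma\|_1^2\cdot\mathcal O_P(\lambda_j)$ and the feasibility constraint $\|\gamma_j\|_1\leq T_j\asymp\sqrt{n/\log p}$ keeps this under control. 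Then the stationarity inequality
\[
(-2A_{j,-j}+2A_{-j,-j}\hat\gamma_j+\lambda_j\partial\|\hat\gamma_j\|_1)^T(\gamma-\hat\gamma_j)\geq 0
\]
evaluated at $\gamma=\gamma_j^0$, combined with $\|A_{j,-j}-\ddot R(\beta_0)_{j,-j}\|_\infty=\mathcal O_P(\lambda_j)$ and $\ddot R(\beta_0)_{j,-j}=\ddot R(\beta_0)_{-j,-j}\gamma_j^0$ (the normal equations defining $\gamma_j^0$), gives the basic inequality
\[
(\hat\gamma_j-\gamma_j^0)^T A_{-j,-j}(\hat\gamma_j-\gamma_j^0)+\lambda_j\|\hat\gamma_j\|_1\leq \mathcal O_P(\lambda_j)\|\hat\gamma_j-\gamma_j^0\|_1+\lambda_j\|\gamma_j^0\|_1.
\]
Invoking the compatibility condition yields the standard bounds
\[
\|\hat\gamma_j-\gamma_j^0\|_1=\mathcal O_P(s_j\lambda_j),\qquad \|\hat\gamma_j-\gamma_j^0\|_2=\mathcal O_P(\sqrt{s_j}\,\lambda_j),
\]
uniformly in $j$ by a union bound.

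\emph{Step 3: from $\hat\gamma_j$ to $\hat\Theta_j$.} The noise-level estimator decomposes as $\hat\tau_j^2=\hat\Gamma_j^T A\hat\Gamma_j+\tfrac12\lambda_j\|\hat\gamma_j\|_1$ and should converge to $\tau_j^{0,2}:=\Gamma_j^{0\,T}\ddot R(\beta_0)\Gamma_j^0=1/\Theta^0_{jj}$. Using the oracle bound on $\hat\gamma_j$, Step 1, and the identity $\Theta_j^0=\Gamma_j^0/\tau_j^{0,2}$, a short calculation shows $|\hat\tau_j^2-\tau_j^{0,2}|=\mathcal O_P(\sqrt{s_j}\,\lambda_j)=o_P(1)$ uniformly in $j$, so $\hat\tau_j^2$ is bounded away from $0$ with probability tending to one. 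Writing
\[
\hat\Theta_j-\Theta_j^0=\tfrac{1}{\hat\tau_j^2}(\hat\Gamma_j-\Gamma_j^0)+\Gamma_j^0\bigl(\tfrac{1}{\hat\tau_j^2}-\tfrac{1}{\tau_j^{0,2}}\bigr),
\]
the two asserted rates follow from Step 2 and $\|\Gamma_j^0\|_1\leq T_j\asymp\sqrt{n/\log p}$ together with $\|\Gamma_j^0\|_2=\mathcal O(1)$.

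\emph{Main obstacle.} The delicate point is Step 2: the nodewise program is non-convex because $A_{-j,-j}$ need not be positive semidefinite, so we cannot directly invoke the standard Lasso oracle theorem. The argument must instead mirror the proof of Theorem \ref{oracle}, using only the stationarity variational inequality and the hard $\ell_1$-constraint $T_j\asymp\sqrt{n/\log p}$ to cap the quadratic remainder term, while leveraging the curvature of the population Hessian (guaranteed by $\rho\geq c$) to extract a compatibility condition valid for all stationary points. Keeping the probabilistic estimates uniform over $j=1,\dots,p$ via a simple union bound is routine once the per-$j$ argument is in place.
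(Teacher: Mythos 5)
Your proposal follows the same broad route as the paper: the paper's Lemma~\ref{nodewise.asymp.spca} is proved by first establishing an oracle inequality for each $\hat\gamma_j$ via the stationarity condition, Taylor expansion of the \emph{population} loss $\Gamma_j^T\ddot R(\hat\beta)\Gamma_j$, Lemma~\ref{eigen.spca} for curvature, and the $\ell_1$-constraint $T_j$ to tame the quadratic empirical-process term (Lemma~\ref{gamma}); and then bounding $|\hat\tau_j^2-\tau_j^2|$ and converting to $\hat\Theta_j-\Theta_j^0$ (Lemma~\ref{tau.spca}). Your Steps~1--2, and especially the ``main obstacle'' paragraph, capture this correctly; but your displayed basic inequality should have the population Hessian $\ddot R(\hat\beta)_{-j,-j}$ (or $\ddot R(\beta_0)_{-j,-j}$) on the left, not the empirical $A_{-j,-j}$, since $A_{-j,-j}$ is not positive semidefinite and the LHS as written need not control $\|\hat\gamma_j-\gamma_j^0\|_2^2$.

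There is a genuine gap in Step~3 for the $\ell_1$-rate. In the decomposition $\hat\Theta_j-\Theta_j^0=\hat\tau_j^{-2}(\hat\Gamma_j-\Gamma_j^0)+\Gamma_j^0\bigl(\hat\tau_j^{-2}-(\tau_j^{0,2})^{-1}\bigr)$ you bound the second term using $\|\Gamma_j^0\|_1\leq T_j\asymp\sqrt{n/\log p}\asymp 1/\lambda_j$, together with $|\hat\tau_j^{-2}-(\tau_j^{0,2})^{-1}|=\mathcal O_P(\sqrt{s_j}\lambda_j)$. This gives $\mathcal O_P(\sqrt{s_j})$, which neither tends to zero nor matches the claimed $\mathcal O_P(s_j\lambda_j)$. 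The constraint $T_j$ is only a feasibility bound and is far too crude here. You should instead use the sparsity of $\gamma_j^0$ and Cauchy--Schwarz: $\|\Gamma_j^0\|_1\leq\sqrt{s_j+1}\,\|\Gamma_j^0\|_2$ together with $\|\Gamma_j^0\|_2\leq\Lambda_{\max}(\Theta_0)/\Lambda_{\min}(\Theta_0)\leq 2\phi_{\max}^2/(\Lambda_1-\Lambda_2)=\mathcal O(1)$ (exactly as in the paper's Lemma~\ref{tau.spca}), so that $\|\Gamma_j^0\|_1=\mathcal O(\sqrt{s_j})$ and the second term is $\mathcal O_P(\sqrt{s_j}\cdot\sqrt{s_j}\lambda_j)=\mathcal O_P(s_j\lambda_j)$, as required. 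A similar caution applies inside your Step~2 if you bound $\|(A_{-j,-j}-\ddot R(\beta_0)_{-j,-j})\gamma_j^0\|_\infty$ by $\|A-\ddot R(\beta_0)\|_\infty\|\gamma_j^0\|_1$, which loses a factor $\sqrt{s_j}$; there one should treat $\gamma_j^0$ as a fixed vector and apply the matrix--vector concentration bound (Lemma~\ref{sigma.con}) to get $\mathcal O_P(\|\gamma_j^0\|_2\lambda_j)=\mathcal O_P(\lambda_j)$, which is what the paper does.
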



Our main results derive the asymptotic distribution of the entries $\hat b_j$ of  $\hat b$ and the asymptotic distribution of $\hat\Lambda$.

\begin{theorem}
\label{ci}
Assume Condition \ref{design.spca} holds with a universal parameter $\sigma$. Suppose that $\phi_{\max} \leq C_{\max}$
and $\rho-3\eta\geq c$ for some universal constants $C_{\max},c>0$.
Consider the estimator 
$$\hat b:= \hat\beta - \hat\Theta^T (\|\hat\beta\|_2^2 \hat\beta-\hat\Sigma \hat \beta  ),$$
 with $\hat\beta$ and $\hat\Theta$ as in Lemma \ref{nodewise.asymp.spca} and with the same tuning parameters as in Lemma \ref{nodewise.asymp.spca}.
Then, under the sparsity conditions 
$$ s=o(\sqrt{n}/\log p) \quad\text{ and }\quad\max_{j=1,\dots,p}s_j = o(\sqrt{n}/\log p) ,
$$
the de-sparsified estimator satisfies 
$$\hat b_{} -\beta_{0} = -\Theta_0 \dot R_n(\beta_0 ) + \emph{rem},$$
where 
 $$\|\emph{rem}\|_\infty =
\mathcal O_P\left(\max_{j=1,\dots,p} \max(s,s_j)\max\left(\lambda^2, \lambda_j^2,\frac{\log(2p)}{n}\right)\right)
=o_P\left(\frac{1}{\sqrt{n}}\right).
$$ 
\\
Moreover, for $j=1,\dots,p$, if $1/\sigma_j^2 =\mathcal O(1)$, it follows that
$$\sqrt{n}(\hat b_{j} -\beta^0_{j})/\sigma_j \rightsquigarrow \mathcal N(0,1),$$
where
$$\sigma_j^2:=
n\emph{var}((\Theta^0_j)^T\hat\Sigma\beta_0).$$
\end{theorem}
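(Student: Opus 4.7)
The plan is to write $\hat b - \beta_0 = -\Theta_0\dot R_n(\beta_0) + \emph{rem}$ via an exact Taylor expansion of the cubic $\dot R_n$, bound each piece of $\emph{rem}$ in $\ell_\infty$-norm, and finally invoke a classical CLT on the main term. Set $\Delta := \hat\beta - \beta_0$. Because $\dot R_n(\beta) = -\hat\Sigma\beta + \|\beta\|_2^2\beta$ is cubic in $\beta$, the expansion
\begin{equation*}
\dot R_n(\hat\beta) = \dot R_n(\beta_0) + \ddot R_n(\beta_0)\Delta + Q(\Delta),\qquad Q(\Delta) := 2(\beta_0^T\Delta)\Delta + \|\Delta\|_2^2(\beta_0+\Delta),
\end{equation*}
holds identically. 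Substituting into \eqref{despca.b} and adding/subtracting $\Theta_0\dot R_n(\beta_0)$ gives
\begin{equation*}
\emph{rem} = (\Theta_0 - \hat\Theta^T)\dot R_n(\beta_0) + \bigl(I - \hat\Theta^T\ddot R_n(\beta_0)\bigr)\Delta - \hat\Theta^T Q(\Delta).
\end{equation*}

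Each of the three summands will be controlled in $\ell_\infty$. First, $\dot R(\beta_0)=0$ implies $\dot R_n(\beta_0) = -(\hat\Sigma-\Sigma_0)\beta_0$, whose $\ell_\infty$-norm is $\mathcal O_P(\sqrt{\log p/n})$ by sub-Gaussian concentration under Condition \ref{design.spca}. Using symmetry of $\Theta_0$ and the $\ell_1$-consistency $\max_j\|\hat\Theta_j-\Theta_j^0\|_1 = \mathcal O_P(\max_j s_j\lambda_j)$ from Lemma \ref{nodewise.asymp.spca}, the first summand is $\mathcal O_P(\max_j s_j\log p/n)$. For the second, decompose
\begin{equation*}
I - \hat\Theta^T\ddot R_n(\beta_0) = \bigl[I - \hat\Theta^T\ddot R_n(\hat\beta)\bigr] + \hat\Theta^T\bigl[\ddot R_n(\hat\beta) - \ddot R_n(\beta_0)\bigr].
\end{equation*}
The KKT conditions \eqref{kkt.spca.gamma} (together with symmetry of $\ddot R_n(\hat\beta)$ and a lower bound on $\hat\tau_j^2$) give $\|I - \hat\Theta^T\ddot R_n(\hat\beta)\|_\infty = \mathcal O_P(\max_j\lambda_j)$ entry-wise, so its contraction with $\Delta$ is bounded by $\mathcal O_P(\max_j\lambda_j\cdot\|\Delta\|_1) = \mathcal O_P(s\log p/n)$ via Theorem \ref{oracle}. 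The difference $\ddot R_n(\hat\beta)-\ddot R_n(\beta_0) = (\|\hat\beta\|_2^2-\|\beta_0\|_2^2)I + 2(\hat\beta\hat\beta^T-\beta_0\beta_0^T)$ applied to $\Delta$ has $\ell_2$-norm of order $\|\Delta\|_2^2$; combined with boundedness of the column $\ell_2$-norms of $\hat\Theta$ (from Lemma \ref{nodewise.asymp.spca}), this contributes $\mathcal O_P(s\log p/n)$. Third, each term of $Q(\Delta)$ is $\ell_2$-bounded by a constant multiple of $\|\Delta\|_2^2 = \mathcal O_P(s\log p/n)$ (using $\|\beta_0\|_2=\mathcal O(1)$), and projection through $\hat\Theta^T$ preserves this rate. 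Summing and using $s,\max_j s_j = o(\sqrt n/\log p)$ yields $\|\emph{rem}\|_\infty = o_P(1/\sqrt n)$.

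For the limiting distribution, inserting $\dot R_n(\beta_0) = -(\hat\Sigma-\Sigma_0)\beta_0$ into the main term gives
\begin{equation*}
\sqrt n(\hat b_j - \beta^0_j) = \frac{1}{\sqrt n}\sum_{i=1}^n\bigl[(\Theta_j^0)^T X^i(X^i)^T\beta_0 - (\Theta_j^0)^T\Sigma_0\beta_0\bigr] + o_P(1),
\end{equation*}
a sum of i.i.d., centered, sub-Gaussian random variables with common variance $\sigma_j^2$. The classical CLT combined with Slutsky's lemma and the assumption $1/\sigma_j^2 = \mathcal O(1)$ then delivers $\sqrt n(\hat b_j - \beta^0_j)/\sigma_j \rightsquigarrow \mathcal N(0,1)$.

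The chief technical obstacle will be the middle summand of $\emph{rem}$: because the nodewise Lasso is run on the plug-in Hessian $\ddot R_n(\hat\beta)$ rather than $\ddot R_n(\beta_0)$, the KKT identities only approximately invert the former, so the gap between the two Hessians must be bridged using the sparsity-rate bounds for $\Delta$ together with column-wise control of $\hat\Theta$. This is further compounded by non-convexity, which forces every bound in Theorem \ref{oracle} and Lemma \ref{nodewise.asymp.spca} to apply to an arbitrary stationary point rather than to a distinguished global minimizer.
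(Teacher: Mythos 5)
Your proposal is correct and follows essentially the same route as the paper: decompose $\emph{rem}$ into the three error sources — $(\hat\Theta-\Theta_0)$ hitting the score, the approximate-inversion defect of the nodewise Lasso at $\hat\beta$, and the curvature term coming from expanding $\dot R_n$ around $\beta_0$ — bound each via the $\ell_1/\ell_2$ rates of Theorem~\ref{oracle} and Lemma~\ref{nodewise.asymp.spca}, and finish with a CLT. The only real difference is cosmetic: you exploit that $\dot R_n$ is an exact cubic polynomial and write the remainder $Q(\Delta)$ explicitly, whereas the paper uses the mean-value form $\ddot R_n(\tilde\beta)$ at an intermediate point $\tilde\beta$ and then bounds $\ddot R_n(\tilde\beta)-\ddot R_n(\hat\beta)$; the two groupings are algebraically equivalent and give the same rate. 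One small imprecision worth fixing: because $p$ (hence $\Theta_j^0$ and $\beta_0$) varies with $n$, the main term is a triangular array, so you should invoke the Lindeberg(-Feller) CLT rather than the "classical" one, and the summands $(\Theta_j^0)^T X^i(X^i)^T\beta_0$ are sub-exponential (quadratic forms of sub-Gaussians), not sub-Gaussian — the paper verifies Lindeberg via a uniform fourth-moment bound, which is the cleaner way to state it.
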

\vskip 0.25cm
We require sparsity  of small order  $\sqrt{n}/\log p$ in both  $\beta_0$ and in the columns of $\Theta_0$. We remark that for estimation of a single entry $\beta_j^0$, it is enough to assume sparsity in $\beta_0$ and in the corresponding column $\Theta_j^0.$
The sparsity requirement on $\beta_0$  is in line with literature on asymptotically normal estimation in sparse high-dimensional settings.
In particular, for linear regression, the same sparsity condition on the high-dimensional vector of regression coefficients is required. For linear regression, this condition was shown to be necessary for construction of confidence intervals (see \cite{cai.guo}).
A sparsity condition on the columns of the inverse Hessian of the population risk (here $\Theta_0$) also arises as a requirement for asymptotically normal estimation, see e.g. \cite{zhang}, \cite{vdgeer13}, \cite{stanford1}, \cite{vch1}.
Sparsity in the columns of $\Theta_0$ is for instance satisfied in the popular ``spiked covariance model'' (see e.g. \cite{jl}, \cite{montanari.pca}) as discussed in the example below.
\begin{example}
\label{ex}
In the spiked covariance model, the covariance matrix has the special form 
$$\Sigma_0 = I+\sum_{i=1}^r \omega_i u_i u_i^T,$$
 for $u_i,i=1,\dots,r$ being orthonormal vectors and $\omega_i$ positive numbers. Then one can easily deduce that the vectors $u_i$ are the first $r $  eigenvectors of $\Sigma_0$ with corresponding  eigenvalues $\Lambda_i = 1+\omega_i,i=1,\dots,r$.  
We denote the remaining $p-r$ eigenvectors by $u_i,i=r+1,\dots,p,$ and their eigenvalues are $\Lambda_i = 1$ for $i=r+1,\dots,p.$
Assuming $\omega_1 > \omega_2$, we have $\beta_0 = \sqrt{1+\omega_1}u_1$ and one can also deduce that the eigendecomposition of $\Theta_0$ is given by $U^T DU,$ where $U$ has rows $u_i,i=1,\dots,p$ and 
$D:=\text{diag}\left(
{2(1+\omega_1)} ,(\omega_1 -\omega_2), 
\dots, (\omega_1 -\omega_r), \omega_{1},\dots,
\omega_1\right)^{-1}.$
Then one can show  that 
\begin{eqnarray*}
\Theta_0 
&=&
  \sum_{i\leq r} D_{ii} u_iu_i^T + \frac{1}{\omega_1}(I-\sum_{i\leq r}u_iu_i^T)\\
	&=& 
	  \sum_{i\leq r} (D_{ii}-1/\omega_1) u_iu_i^T + \frac{1}{\omega_1}I. 
\end{eqnarray*}
If we assume that each of the first $r$ eigenvectors, $u_i,i=1,\dots,r$, has sparsity at most $s$, 
then each row of $\Theta_0$ has sparsity at most $rs+1.$
\end{example}

\noindent
For asymptotically normal estimation of the maximum eigenvalue, which is a quadratic function of $\beta_0$, we need to assume a somewhat stronger sparsity condition.
\begin{theorem}
\label{eigenvalue}
Assume the conditions of Theorem \ref{ci} and, in addition, assume that 
$$s^{3/2}=o(\sqrt{n}/\log p) \quad \text{ and } \max_{j=1,\dots,p}s_j^{3/2}=o(\sqrt{n}/\log p).$$
Recalling that 
$$\hat\Lambda = \|\hat \beta\|_2^2 - 2\hat\beta^T \hat\Theta^T \dot R_n(\hat\beta),$$
the following asymptotic expansion holds
\vskip 0.2cm
$$
\hat\Lambda - \Lambda_{\max}
 = -2\beta_0^T\Theta_0 \dot R_n(\beta_0) +\emph{rem},$$
\vskip 0.2cm
\noindent
where 
\begin{eqnarray*}
\|\emph{rem}\|_\infty &=&
\mathcal O_P\left(\max_{j=1,\dots,p}\max(s,s_j)^{3/2}\max\left(\lambda^2, \lambda_j^2,\frac{\log(2p)}{n}\right)\right)\\
&=&
o_P\left(\frac{1}{\sqrt{n}}\right).
\end{eqnarray*}
Denoting the variance of the pivot by
$$\sigma_{\Lambda}^2 := 4n\emph{var}(\beta_0^T\Theta_0 \hat\Sigma\beta_0),
$$
it follows
$$\sqrt{n}\left(\hat\Lambda - \Lambda_{\max} \right)/\sigma_{\Lambda} \rightsquigarrow \mathcal N(0,1).$$
\end{theorem}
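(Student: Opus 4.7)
The plan is to reduce the statement to Theorem~\ref{ci} via a short algebraic identity that rewrites $\hat\Lambda$ in terms of $\hat b$, and then to carry out a careful bookkeeping of three remainder contributions. \textbf{Step 1 (algebraic identity).} From $\hat b = \hat\beta - \hat\Theta^T \dot R_n(\hat\beta)$ one reads off $\hat\Theta^T\dot R_n(\hat\beta) = \hat\beta-\hat b$, whence $\hat\Lambda = \|\hat\beta\|_2^2 - 2\hat\beta^T(\hat\beta-\hat b) = -\|\hat\beta\|_2^2 + 2\hat\beta^T\hat b$. Writing $h := \hat\beta - \beta_0$ and using $\Lambda_{\max} = \|\beta_0\|_2^2$, a direct rearrangement yields
\begin{equation*}
\hat\Lambda - \Lambda_{\max} \;=\; -\|h\|_2^2 \;+\; 2\hat\beta^T(\hat b - \beta_0).
\end{equation*}

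\textbf{Step 2 (invoke Theorem~\ref{ci}).} Theorem~\ref{ci} provides $\hat b - \beta_0 = -\Theta_0 \dot R_n(\beta_0) + r$ with $\|r\|_\infty = \mathcal O_P(\max_j\max(s,s_j)\log p/n)$. Splitting $\hat\beta^T\Theta_0\dot R_n(\beta_0) = \beta_0^T\Theta_0\dot R_n(\beta_0) + h^T\Theta_0\dot R_n(\beta_0)$ decomposes the identity as $\hat\Lambda - \Lambda_{\max} = -2\beta_0^T \Theta_0 \dot R_n(\beta_0) + \mathrm{rem}$ with $\mathrm{rem} = -\|h\|_2^2 - 2h^T\Theta_0\dot R_n(\beta_0) + 2\hat\beta^T r$. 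The first two pieces of the remainder are easy: Theorem~\ref{oracle} gives $\|h\|_2^2 = \mathcal O_P(s\log p/n)$, while a H\"older bound combined with $\|h\|_1 = \mathcal O_P(s\sqrt{\log p/n})$ and $\|\Theta_0\dot R_n(\beta_0)\|_\infty = \mathcal O_P(\sqrt{\log p/n})$ (using $\dot R(\beta_0)=0$ so that $\dot R_n(\beta_0) = -(\hat\Sigma-\Sigma_0)\beta_0$, observing that each coordinate $[\Theta_0(\hat\Sigma-\Sigma_0)\beta_0]_j$ is a centered average of $n$ sub-exponential terms of variance $\mathcal O(1/n)$, and a Bernstein-type union bound over $j$) controls the second term by $\mathcal O_P(s\log p/n)$, which is $o_P(n^{-1/2})$ already under $s = o(\sqrt n/\log p)$.

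\textbf{Step 3 (main obstacle).} The delicate piece is $2\hat\beta^T r$, which I would bound by $\|\hat\beta\|_1\|r\|_\infty$. Since $\|\beta_0\|_2 = \mathcal O(1)$ and $\beta_0$ is $s$-sparse, $\|\beta_0\|_1 \le \sqrt s\,\|\beta_0\|_2 = \mathcal O(\sqrt s)$; together with $\|h\|_1 = o_P(\sqrt s)$ this gives $\|\hat\beta\|_1 = \mathcal O_P(\sqrt s)$. Combined with the $\ell_\infty$-rate on $r$ from Theorem~\ref{ci}, this delivers $|\hat\beta^T r| = \mathcal O_P(\max_j\max(s,s_j)^{3/2}\log p/n)$, which is $o_P(n^{-1/2})$ \emph{precisely} under the strengthened sparsity hypothesis $\max(s,s_j)^{3/2} = o(\sqrt n/\log p)$. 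This is why the eigenvalue theorem needs a stricter sparsity condition than the loadings theorem: one cannot avoid paying an extra $\|\hat\beta\|_1 \asymp \sqrt s$ factor when transferring the entrywise bound on $r$ to an inner product with $\hat\beta$.

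\textbf{Step 4 (CLT).} Because $\dot R(\beta_0) = 0$, the leading term equals
\begin{equation*}
-2\beta_0^T\Theta_0 \dot R_n(\beta_0) \;=\; \frac{2}{n}\sum_{i=1}^n\Bigl[(\beta_0^T\Theta_0 X^{i})\bigl((X^{i})^T\beta_0\bigr) - \beta_0^T\Theta_0\Sigma_0\beta_0\Bigr],
\end{equation*}
a centered iid sum whose variance equals $\sigma_\Lambda^2/n$ by the very definition of $\sigma_\Lambda^2$. Under Condition~\ref{design.spca} each summand is sub-exponential with bounded parameters, so Lindeberg's CLT applies and, in combination with the assumption $1/\sigma_\Lambda = \mathcal O(1)$, Step~3 and Slutsky's lemma, yields the claimed $\mathcal N(0,1)$ limit for $\sqrt n(\hat\Lambda - \Lambda_{\max})/\sigma_\Lambda$.
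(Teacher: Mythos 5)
Your proof is correct and takes essentially the same approach as the paper: both start from the de-biasing expansion $\hat b-\beta_0=-\Theta_0\dot R_n(\beta_0)+r$ supplied by Theorem~\ref{ci}, reduce $\hat\Lambda-\Lambda_{\max}$ to the leading linear term $-2\beta_0^T\Theta_0\dot R_n(\beta_0)$ plus a remainder, pin down the inner product with $r$ as the source of the $3/2$-power sparsity requirement, and finish with the Lindeberg CLT. The two decompositions of the remainder are algebraically identical — the paper writes $\mathrm{rem}_2 = 2\beta_0^T r + 2(\beta_0-\hat\beta)^T\hat\Theta^T\dot R_n(\hat\beta) + \|h\|_2^2$, you write $-\|h\|_2^2 - 2h^T\Theta_0\dot R_n(\beta_0) + 2\hat\beta^T r$, and a short calculation (using $\hat b + \hat\Theta^T\dot R_n(\hat\beta)=\hat\beta$) shows these are the same quantity.

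The one mild divergence is in how the ``cross'' term is handled: the paper bounds $(\beta_0-\hat\beta)^T\hat\Theta^T\dot R_n(\hat\beta)$ via $\|\beta_0-\hat\beta\|_1\,\vertiii{\hat\Theta^T}_1\,\|\dot R_n(\hat\beta)\|_\infty$, which costs an extra $\vertiii{\hat\Theta^T}_1\asymp\sqrt{s_j}$ factor and uses the KKT conditions for the second-step estimator ($\|\dot R_n(\hat\beta)\|_\infty\lesssim\lambda$), whereas you bound $h^T\Theta_0\dot R_n(\beta_0)$ directly via concentration at $\beta_0$ (each coordinate of $\Theta_0(\hat\Sigma-\Sigma_0)\beta_0$ is a centered average of sub-exponential terms), giving $\mathcal O_P(s\log p/n)$ without the $\sqrt{s_j}$ factor. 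This is marginally cleaner, but both versions are dominated by the $\hat\beta^T r$ (resp.\ $\beta_0^T r$) term and so yield the same final $\max(s,s_j)^{3/2}\max(\lambda^2,\lambda_j^2,\log p/n)$ rate. Your explicit explanation that the $\sqrt{s}$ loss in passing from the $\ell_\infty$ bound on $r$ to an inner product with $\hat\beta$ is what forces the strengthened sparsity condition is exactly the right takeaway.
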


The asymptotic variances of the  estimators in Theorems \ref{ci} and  \ref{eigenvalue} correspond to the asymptotic variance
of the loadings vector based on the sample covariance matrix from fixed-$p$-regime (see e.g. \cite{var.PC}).
For instance, if the observations are Gaussian $\mathcal N(0,\Sigma_0)$ and there are no eigenvalue multiplicities, then
$$\sigma_{\Lambda}^2=2\Lambda_{\max}^2,$$
\begin{eqnarray}
\label{asympvar}
\sigma_j^2 &= & (\Theta^0_{j})^T \Sigma_0 \Theta^0_j \|\beta_0\|_2^4 + [\|\beta_0\|_2^2(\Theta^0_j)^T \beta_0]^2,
\\\nonumber
  &=&\frac{\beta^0_j}{2}  +\|\beta_0\|_2^4 \sum_{i=1,i\not = j}^p u_{i,j}^2 \frac{\Lambda_{i}}{(\Lambda_{\max}-\Lambda_i )^2} ,
\end{eqnarray}
where $u_{i,j}$ is the $j$-th entry of the $i$-th eigenvector of $\Sigma_0$
 (see Lemma \ref{variance} in Section \ref{sec:proof.normal} for the derivation).
The asymptotic variance $\sigma_{\Lambda}^2$ can be easily estimated by $\hat\sigma_{\Lambda}^2:= 2\|\hat\beta\|_2^2.$
The asymptotic variances $\sigma_j^2$ however depend on all the eigenvectors $u_i,i\not =j$. Simultaneous estimation of all the eigenvectors would in theory require $p\ll n$ and is moreover impractical if we are only interested in inference about the first few loadings vectors.
Therefore, we suggest to use an alternative procedure, which  computes the natural estimator
\begin{equation}
\label{naturale}
\hat\sigma_j^2 := \frac{1}{n}\sum_{i=1}^n (\hat\Theta_j^T \Xobsi(\Xobsi)^T\hat\beta)^2 - (\hat\Theta_j^T \hat\Sigma\hat\beta)^2.
\end{equation}
This does not assume the knowledge of the distribution of $\Xobsi$ and is only based on the estimators $\hat\Sigma, \hat\Theta_j$ and $\hat\beta.$ Analogously one can estimate the variance $\sigma_{\Lambda}^2$ in the non-Gaussian case.
We omit the theoretical guarantees for these estimators, but point the reader to results of a similar flavour which are proved for estimation of asymptotic variance 
in \cite{jvdgeer14}
 under sub-Gaussianity conditions on the design. 
\par
The result of Theorem \ref{ci} can be applied for support recovery of the entries of $\beta_0$ by thresholding the 
de-sparsified estimator at the level $C\sqrt{\log p/n}$ for a suitable (possibly data-driven) $C>0.$ 
Define the thresholded estimator 
$$\hat b_{\text{thresh},i}:= \hat b_i 1_{\hat b_i>C\sqrt{\frac{\log p}{n}}},$$
where $\hat b_i$ is the $i$-th entry of $\hat b.$ 
Then $\hat b_{\text{thresh}}:=(\hat b_{\text{thresh},1,},\dots,\hat b_{\text{thresh},p})$ recovers no false positives, i.e. the support $\hat S$ of $\hat b_{\text{thresh}}$ satisfies asymptotically, with probability tending to one,
$$\hat S \subseteq S_0,$$
where $S_0$ is the support of $\beta_0.$
Moreover, if in addition the beta-min condition holds, i.e.
 $$\min_{j\in S_0}\beta^{0}_j \geq 2C\sqrt{\frac{\log p}{n}},$$ 
then we obtain exact support recovery, i.e. $\hat S =S_0$ (asymptotically, with probability tending to one).
The problem of support recovery of the first eigenvector was studied in a number of papers, see  e.g. \cite{jl}, \cite{amini.wainwright}, \cite{montanari.pca}, under irrepresentability conditions or under the spiked covariance model. Our results do not need to assume the irrepresentability condition, but 
require a sparsity condition on $\Theta_0$, which may be viewed as a less stringent condition.









\section{Empirical results}
\label{sec:sim}

\vskip 0.3cm
\par
\subsection{Setup}
In this section, we demonstrate the performance of the de-biased sparse PCA in several models and different dimensionality regimes. We provide a comparison to the classical PCA.
  
We consider the spiked covariance model with a single spike,
$$\Sigma_0= I+\omega vv^T,$$
where $$v=(1,1,1,0,1,0,\dots,0) \in \mathbb R^p$$
for two different spike sizes $\omega$:

\vskip 0.2cm
\begin{itemize}
\item
\textbf{Model 1 (Small spike): }$\omega=1/5,$
\item
\textbf{Model 2 (Large spike): }$\omega=1.$
\end{itemize}

\vskip 0.2cm
\noindent
The observations $\Xobso,\dots,\Xobsn$ are independent and $\mathcal N(0,\Sigma_0)$-distributed.

\vskip 0.2cm
\noindent
The more challenging model is arguably 
Model 1, where the eigenvalue gap is smaller. Indeed, one can easily check that 
for Model 1,  $\Lambda_{\max} = 1.8$, while for Model 2, $\Lambda_{\max} = 5$. 
 As we will see in the simulation study, classical PCA does not perform well in Model 1, while it does perform well in our Model 2.
In terms of theoretical conditions such as sparsity, one can check that the vector $\beta_0 = \sqrt{1+\|v\|_2^2}v$ is the first loadings vector with sparsity $s=4$  and 
the inverse Fisher information $\ddot R(\beta_0)^{-1}$ has sparsity $4.$

\vskip 0.1cm \noindent
We demonstrate the performance of the de-biased sparse PCA  (and classical PCA) for construction of confidence intervals for individual entries of $\beta_0$.
We first calculate the confidence intervals assuming the asymptotic variance from \eqref{asympvar} is known. This gives a fairer comparison, otherwise for the classical PCA, we would observe that too large estimates of asymptotic variance lead to large confidence intervals and perfect coverage. We look at estimating the asymptotic variance separately. 

The sparse PCA estimator \eqref{est.fast} is calculated using gradient descent, with a tuning parameter $\lambda= \sqrt{\log p/n}$ and the starting point of the algorithm is the initial estimator $\hat\beta_{\text{init}}$.
The constraint on the $\ell_1$-norm turns out to be unnecessary in our simulations.
We compute the non-convex nodewise Lasso estimator with tuning parameters $\lambda_j = \sqrt{\log p/n},j=1,\dots,p$.

For Model 1, we investigate the scenarios:
 $(p=200,n=200)$ (Figure \ref{fig:m122}), $(p=200,n=400)$ (Figure \ref{fig:m124}) and $(p=500,n=800)$
 (Figure \ref{fig:m155}). 
For Model 2, we consider the scenario $(p=200,n=200)$ (Figure \ref{fig:m222}).
The target coverage is $95\%$ in all simulations.
The average coverage is reported over the non-zero set 
$S_0:=\{i:\beta_i^0 \not = 0\}$ and $S_0^c$.
The number of generated random samples is always $N=200$.

We can observe that the classical PCA does not perform well in estimation of the non-zero entries of $\beta_0$ in Model 1, while the de-biased estimator performs reasonably well.
We also find that our  theoretical condition requiring $s=o(\sqrt{n}/\log p)$ seems to be \emph{needed} for our method to perform well in simulations. Namely, comparing Figures \ref{fig:m122} and \ref{fig:m124}, we see that the performance of our estimator was substantially improved with the increased sample size. Note that in the setting in Figure \ref{fig:m122} $(p=200,n=200)$,  we have sparsity $s=4$ and $\sqrt{n}/\log p  \approx 2.67$, while in Figure \ref{fig:m124}  $(p=200,n=400)$, we still have sparsity $s=4$ but due to a bigger sample size, we have $\sqrt{n}/\log p\approx 3.77.$ This confirms our theoretical findings and we note that a similar phenomenon has also been observed in other settings: the generalized linear models in \cite{anor} and Gaussian graphical models (\cite{jvdgeer14} and \cite{jvdgeer15}). 

\par
Finally, we look at estimating the asymptotic variance, measured by the  length of confidence intervals given by
$\Phi^{-1}(0.95)\frac{\hat\sigma_j }{\sqrt{n}},$
where $\hat\sigma_j^2$ is the estimator of asymptotic variance estimator as proposed in \eqref{naturale}. 
For the de-biased sparse PCA, we use $\hat\beta$ and $\hat\Theta$ as defined in Section \ref{sec:spca.main} to calculate the estimate of the asymptotic variance \eqref{naturale}. For the classical PCA, we use $\hat\beta:=\hat\beta_{PCA}$ and $\hat\Theta:=\hat\Sigma^{-1}$ to calculate  \eqref{naturale}. 
The results are reported in Table \ref{tab:av}. The average length of a confidence interval is calculated over $N=100$ randomly generated samples. We also report the ``Asymptotically efficient length'', which is the asymptotically optimal length of a confidence interval corresponding to the fixed-$p$ setting.

\section{Discussion}
\label{sec:disc}
We have proposed a computationally feasible methodology with theoretical guarantees for constructing confidence intervals for loadings and the maximum eigenvalue of the covariance matrix in a sparse high-dimensional regime. 
The results may also be applied for support recovery without requiring irrepresentability conditions, 
although we do require the (arguably weaker) sparsity condition on the columns of the inverse population Hessian matrix.
We have shown that the de-biasing methodology which was studied in a line of papers (\cite{zhang,vdgeer13,jvdgeer14,jvdgeer15}) may be used even in a non-convex setting.
The challenge here lied especially in estimating the inverse Fisher information, which is not guaranteed to be positive definite under non-convexity of the loss function.
\par
To position our research relative to the existing literature on asymptotic normality for principal component analysis in high dimensions, 
it is worth to point out that contrary to the papers \cite{koltchinskii2016asymptotics} and \cite{fan.spca}, our results do not study the special setting where the maximum eigenvalue diverges, or where the eigenvalue gap diverges. 
We allow the eigenvalue gap to be very small,
what arguably presents a more challenging setting, requiring us to rely on sparsity conditions.

\begin{figure}[h!]
\centering
\begin{center}
\begin{tabular}{cc}
\multicolumn{2}{c}{\bf Model 1: p = 200, n = 200}\\[0.2cm]
 De-biased sparse PCA & Classical PCA\\
\centering
\includegraphics[width=0.47\textwidth]{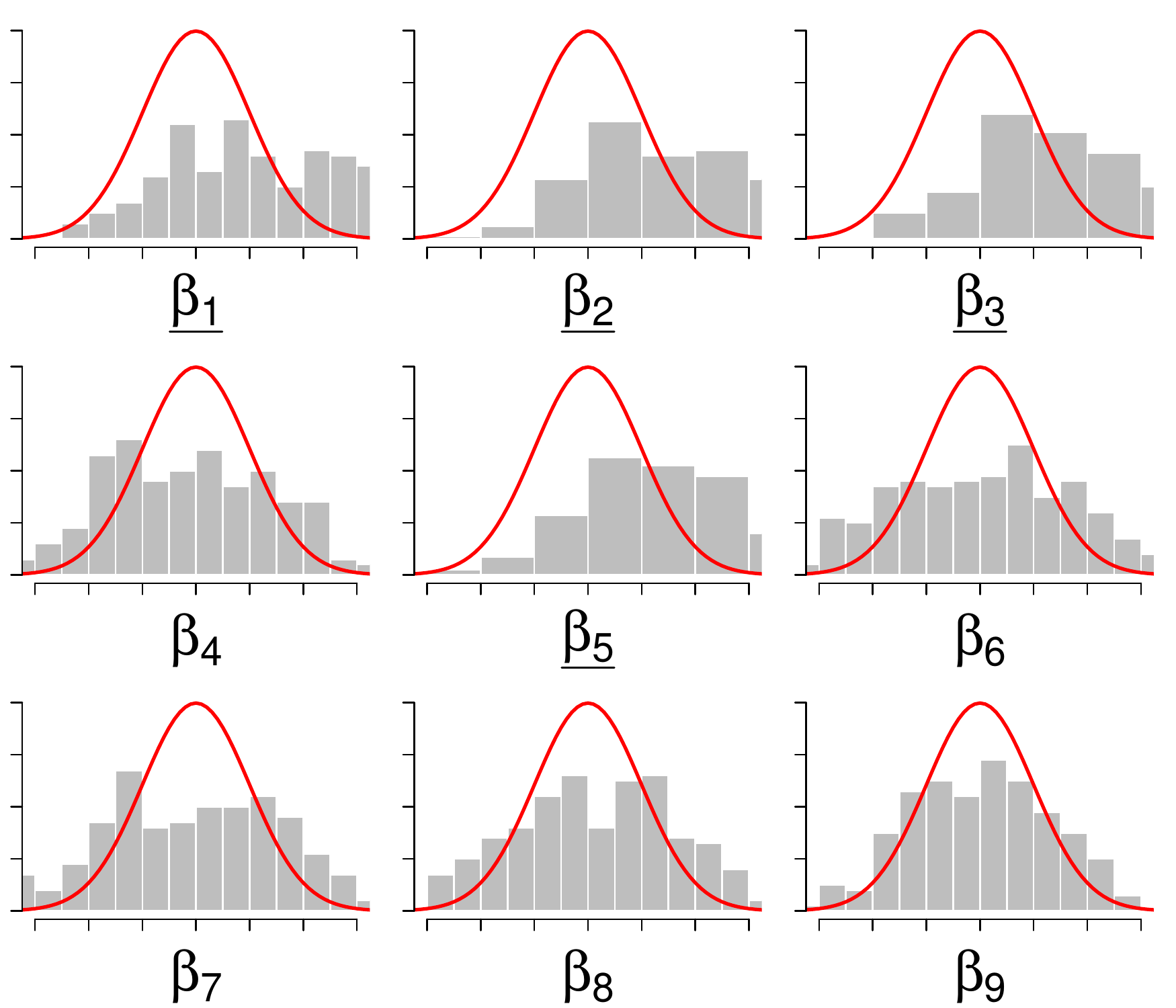}
& \includegraphics[width=0.47\textwidth]{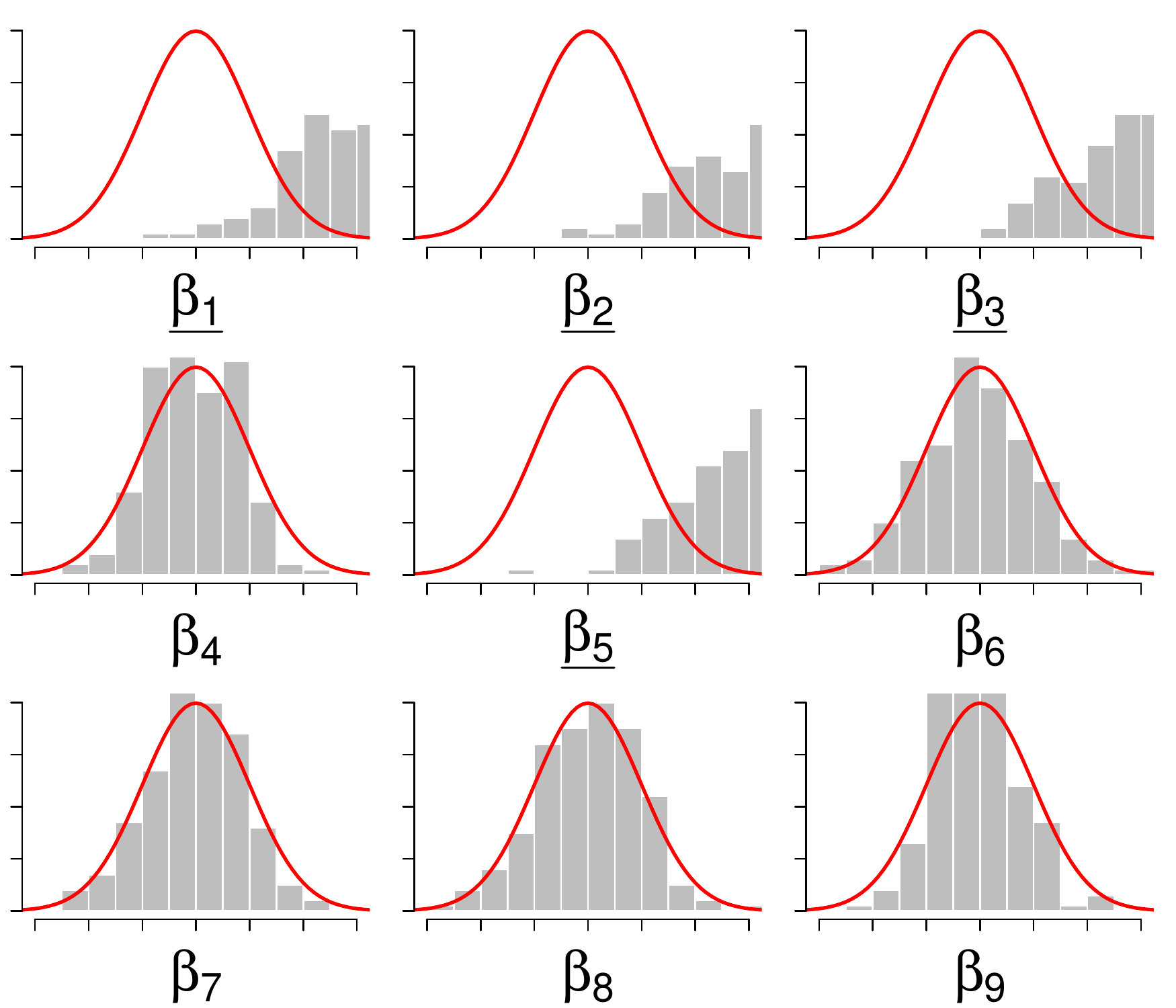} 
\end{tabular}
\end{center}
\small
\begin{tabular}{ccc}
  \hline
	&\multicolumn{2}{c}{\bf Average coverage} 
	\\
 {\bf Method } &  $S_0$ & $S_0^c$ 
	\\ 
  \hline
De-biased sparse PCA & 0.78 & 0.84 
\\ 
  Classical PCA & 0.16 & 0.98 
	\\ 
   \hline
\end{tabular}
\caption[Histograms for the loadings.]{
Histograms corresponding to (normalized) estimators of the first 9 entries of the loadings vector $\beta_0$.
Left:  de-biased sparse PCA estimator, right:  classical PCA. The non-zero entries of $\beta_0$ are \underline{underlined}. 
}
\label{fig:m122}

\end{figure}

\begin{figure}[h!]
\centering
\begin{tabular}{cc}
\multicolumn{2}{c}{\bf Model 1: p = 200, n = 400}\\[0.2cm]
 De-biased sparse PCA & Classical PCA\\
\centering
\includegraphics[width=0.47\textwidth]{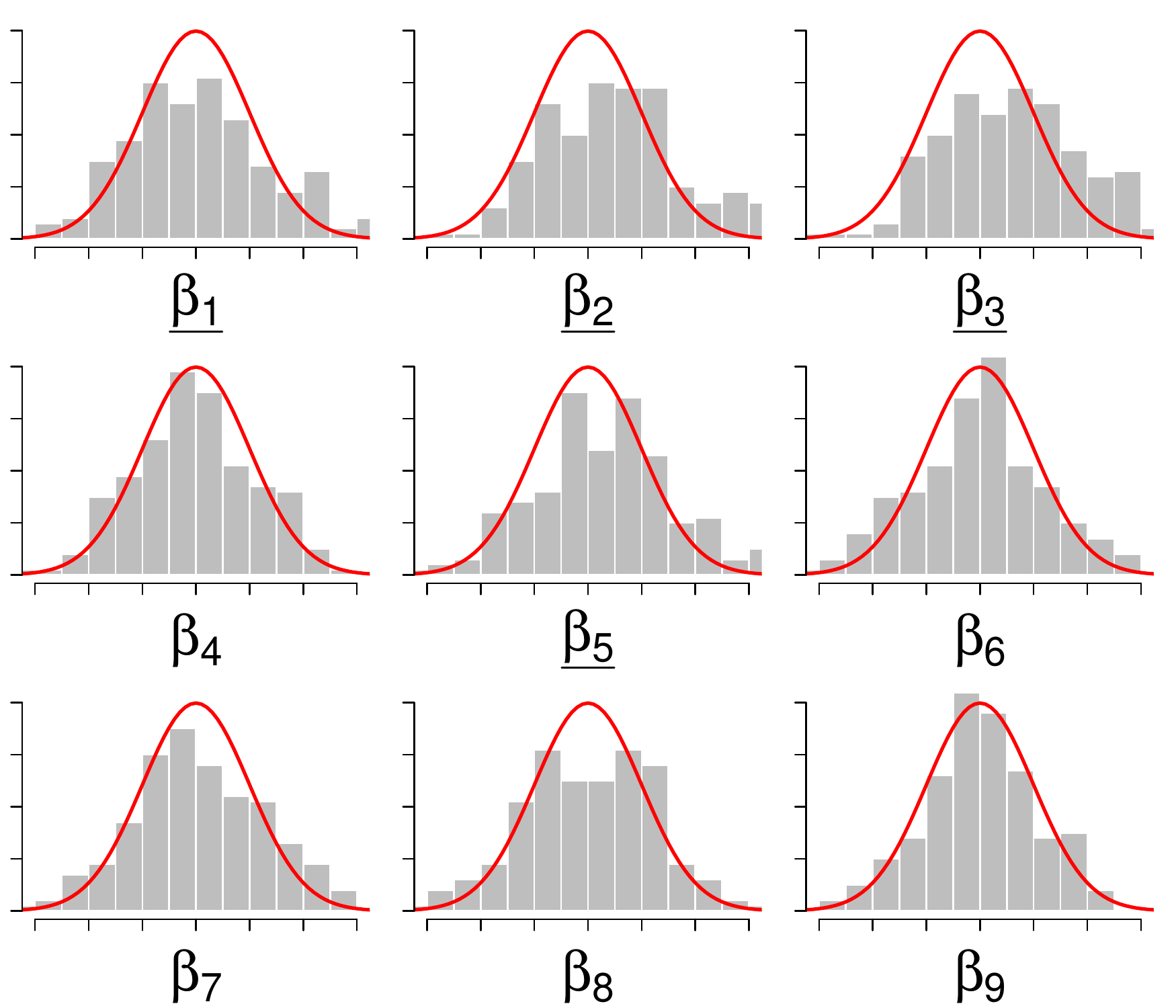}
& \includegraphics[width=0.47\textwidth]{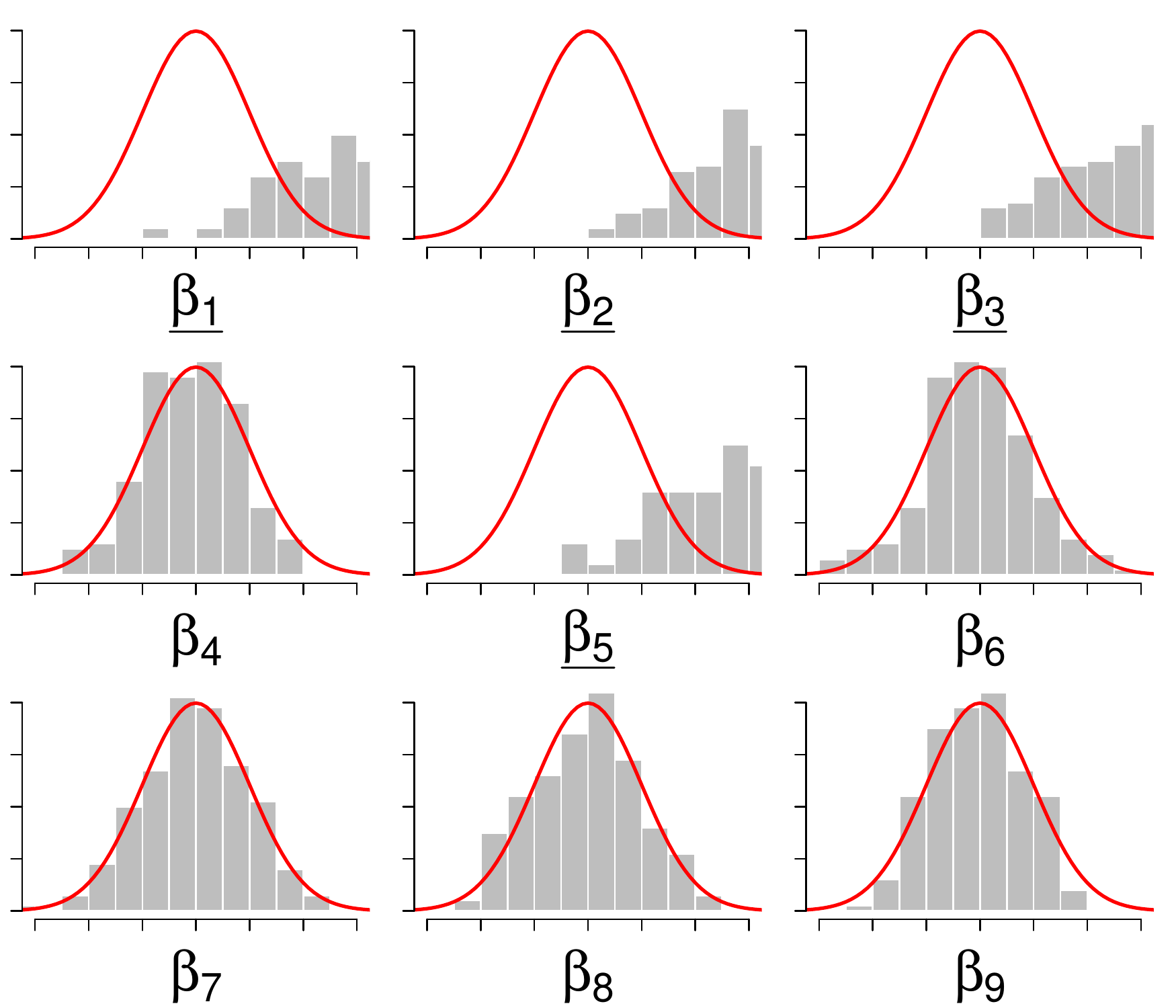} 
\end{tabular}
\small
\begin{tabular}{ccc}
  \hline
 & \multicolumn{2}{c}{\bf Average coverage} \\
{\bf Method }   &  $S_0$ & $S_0^c$ 	\\
  \hline
De-biased sparse PCA& 0.95 & 0.97  \\ 
  Classical PCA  & 0.24 & 0.96  \\ 
   \hline
\end{tabular}
\caption{
Histograms corresponding to (normalized) estimators of  $\beta_0$.
}
\label{fig:m124}
\end{figure}

%


\begin{figure}[h!]
\centering
\begin{tabular}{cc}
\multicolumn{2}{c}{\bf Model 1: p = 500, n = 800}\\[0.2cm]
 De-biased sparse PCA & Classical PCA\\
\centering
\includegraphics[width=0.47\textwidth]{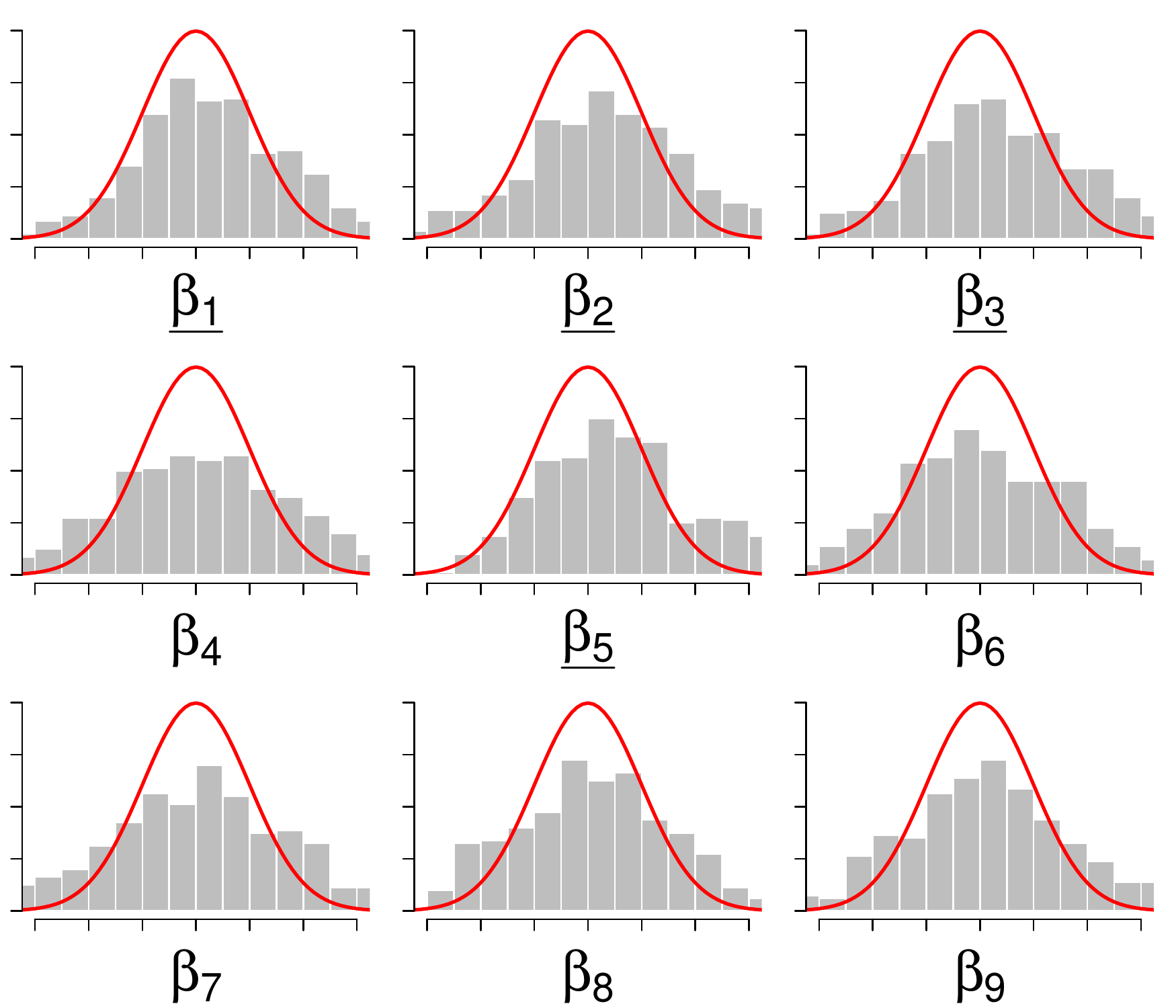}
& \includegraphics[width=0.47\textwidth]{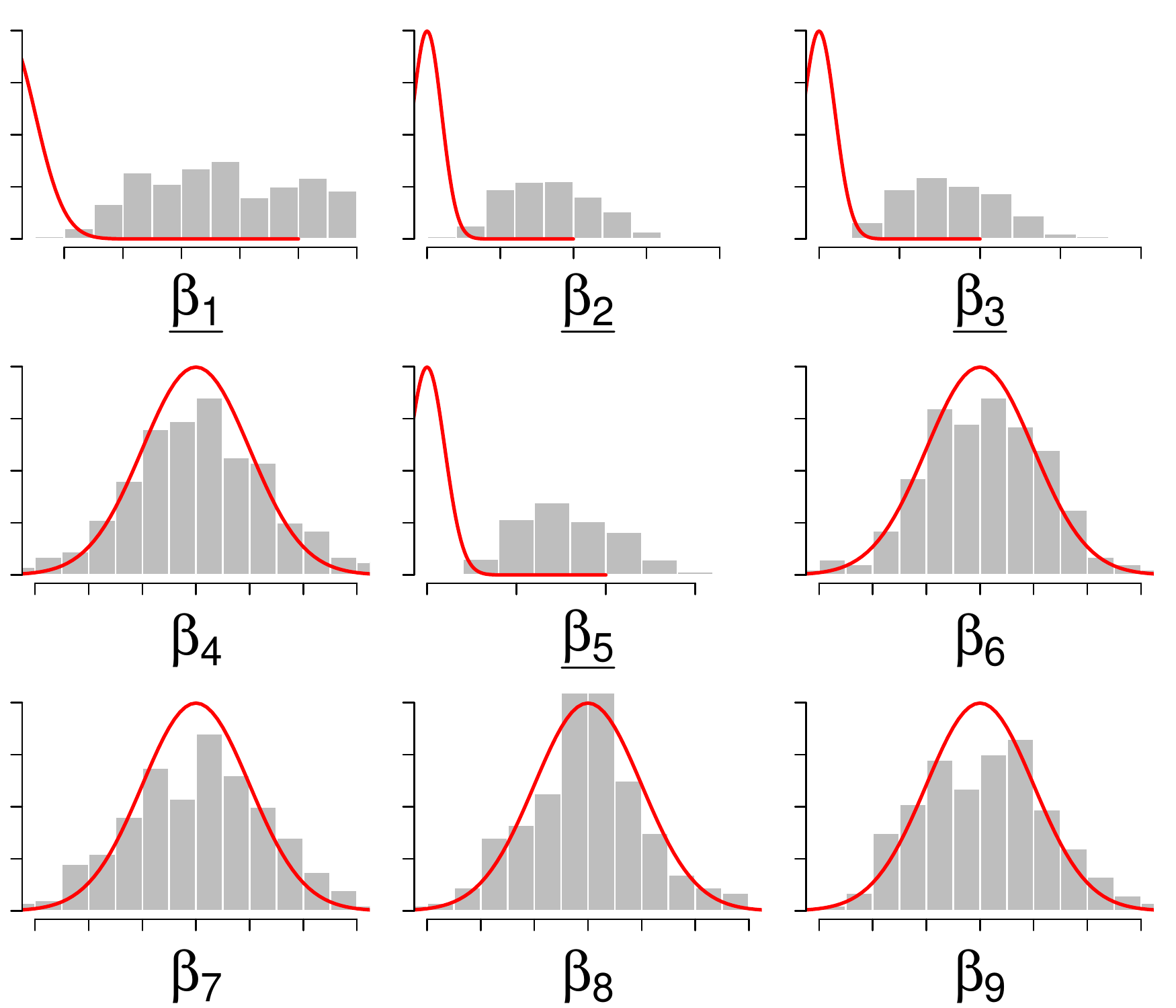} 
\end{tabular}
\small
\begin{tabular}{rrrrr}
  \hline
		&\multicolumn{2}{c}{\bf Average coverage} 
	\\
 {\bf Method }  &  $S_0$ & $S_0^c$ 
	\\ 
  \hline
De-biased sparse PCA & 0.78 & 0.77\\ 
Classical PCA & 0.00 & 0.89  \\ 
   \hline
\end{tabular}
\caption{Histograms corresponding to (normalized) estimators of the first 9 entries of  $\beta_0$.
}
\label{fig:m155}
\end{figure}


\begin{figure}[h!]
\centering
\begin{tabular}{cc}
\multicolumn{2}{c}{\bf Model 2: p = 200, n = 200}\\[0.2cm]
 De-biased sparse PCA & Classical PCA\\
\centering
\includegraphics[width=0.47\textwidth]{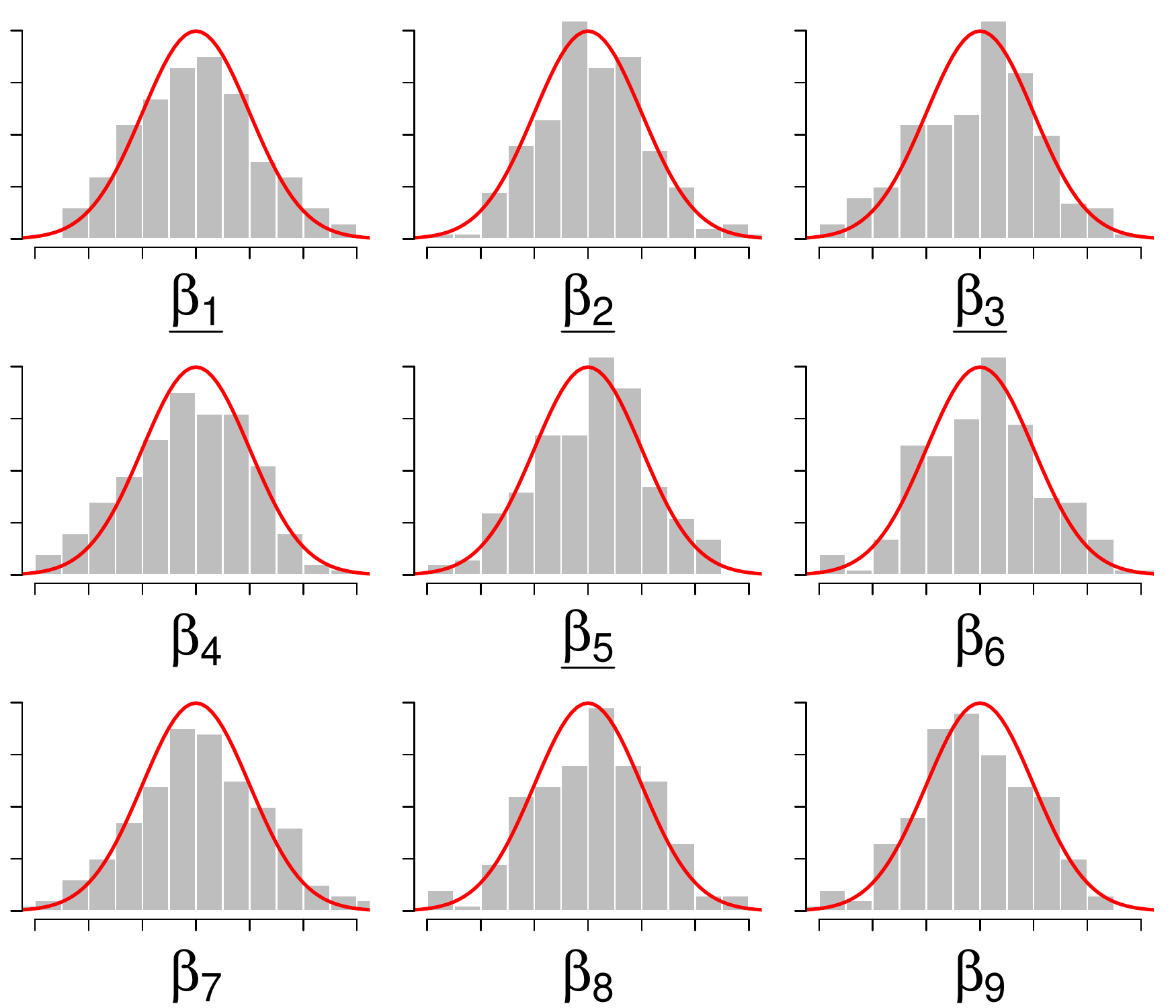}
& \includegraphics[width=0.47\textwidth]{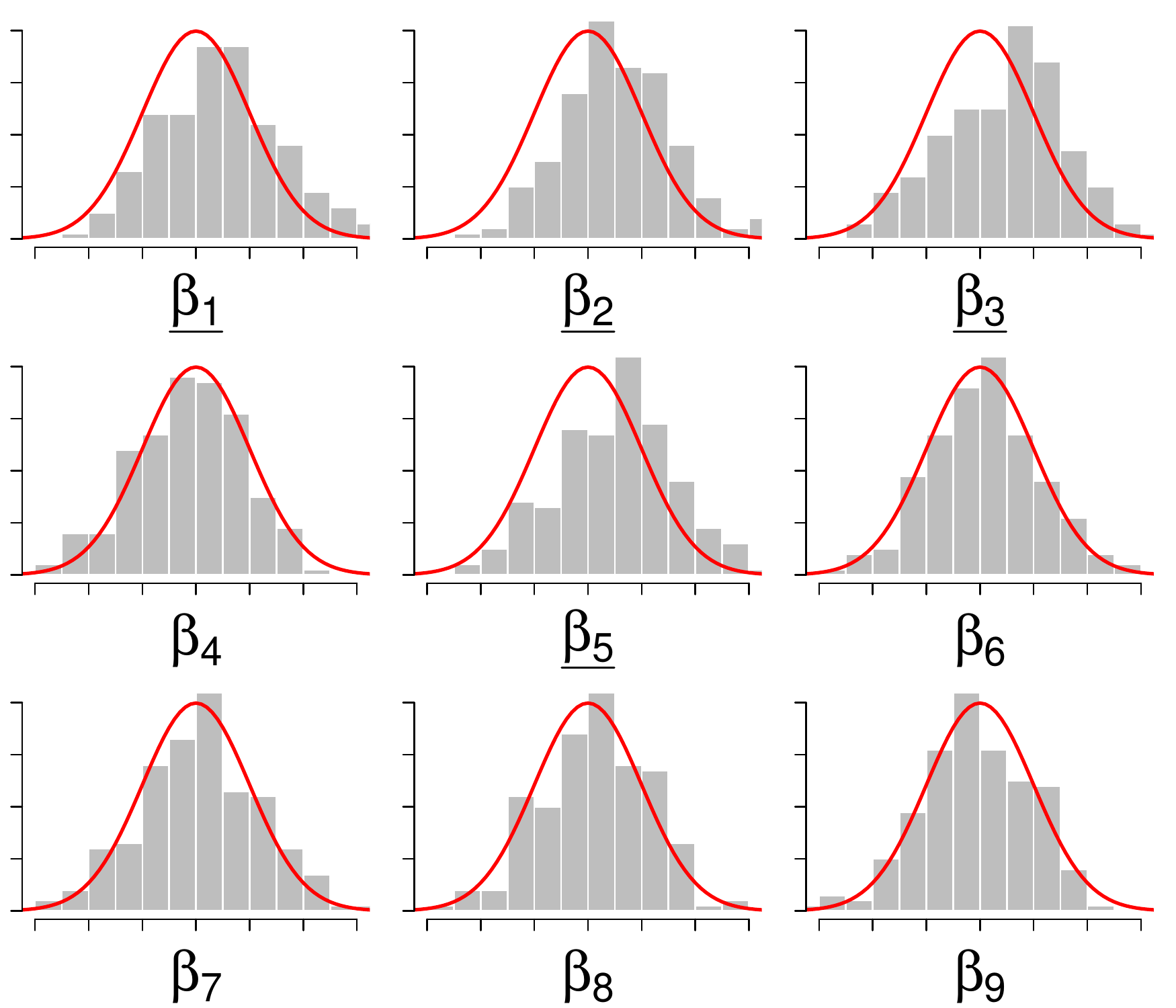} 
\end{tabular}
\small
\begin{tabular}{rrrrr}
  \hline
		&\multicolumn{2}{c}{\bf Average coverage} 
	\\
 {\bf Method }  &  $S_0$ & $S_0^c$ 
	\\ 
  \hline
De-biased sparse PCA & 0.96 & 0.93  \\ 
Classical PCA & 0.94 & 0.95  \\ 
   \hline
\end{tabular}
\caption{Histograms corresponding to (normalized) estimators of the first 9 entries of $\beta_0$.
}
\label{fig:m222}
\end{figure}

\begin{table}[h!]
\centering
\textbf{Estimating the asymptotic variance}
\footnotesize
\vskip 0.4cm
\begin{tabular}{lrrrr}
  \hline
 \multirow{2}{*}{\bf Model 1: p = 200, n = 200} & 
 \multicolumn{2}{c}{Average length}  \\ 
	& $S_0$ & $S_0^c$ \\
  \hline\\[-5pt]
De-biased sparse PCA  
& 0.406 & 0.319 \\ 
Classical PCA  
& 3.327 & 3.496 \\ 
 Asymptotically efficient length$^*$ 
& 0.278 & 0.312 \\ 
   \hline\\[3pt] 
\end{tabular}
\vskip 0.3cm
%
\centering
\footnotesize
\begin{tabular}{lrrrr}
  \hline
 \multirow{2}{*}{\bf Model 2: p = 200, n = 200 }& 
 \multicolumn{2}{c}{Average length}  \\ 
	& $S_0$ & $S_0^c$ \\
  \hline \\[-5pt]
De-biased sparse PCA 
& 0.178 & 0.181 \\ 
  Classical PCA 
	& 0.232 & 0.268 \\ 
  Asymptotically efficient length$^*$
	& 0.186 & 0.173 \\[3pt] 
   \hline
\end{tabular}
\caption[Estimating asymptotic variance.]{Average length of confidence intervals. \\
$^*$ corresponding to the fixed-p regime (see \cite{var.PC}).}
\label{tab:av}
\end{table}

\section{Proofs}
\label{sec:proofs}

\ifsubsection
\section{Proofs for Section \ref{sec:spca.prelim}: First step estimator }
\else
\subsection{Proofs for Section \ref{sec:spca.prelim}: First step estimator }
\fi

\label{subsec:oracle}

\begin{proof}[Proof of Lemma \ref{spca.init}]
Using the arguments of Theorem 3.3 in \cite{fantope}, one can easily show (with the techniques used to prove oracle inequalities for $\ell_1$-regularized estimators - see e.g. \cite{hds}) that for $\lambda\geq 2\|\hat\Sigma-\Sigma_0\|_\infty$, 
$$\|\hat Z - \eig \eig^T \|_F^2 + \lambda \|\hat Z - \eig\eig^T\|_1 \leq \frac{C}{(\Lambda_1-\Lambda_2)^2}s^2\lambda^2,$$
where $C$ is a universal constant. 
Let $\|A\|:=\sqrt{\Lambda_{\max}(AA^T)}$ denote the spectral norm.
Since for any square matrix $A$ it holds
$\|A\|\leq
\|A\|_F $,
then
$$
\|\hat Z - \eig \eig^T \| \leq \|\hat Z - \eig \eig^T\|_F .$$
Hence 
$$
\|\hat Z\| \geq \|\eig\eig^T\| - \epsilon =1  - \epsilon.$$
Write the eigendecomposition of $\hat Z$ as
$$\hat Z = \sum_{j=1}^p \hat\phi_j^2 \hat u_j\hat u_j^T,$$
where $\hat\phi_1 \geq \dots \geq \hat\phi_p$ and $\hat u_j^T \hat u_j = 1. $
Since  $\hat \phi_1^2 = \|\hat Z\|\geq 1-\epsilon$ and $\text{tr}(\hat Z) = \sum_{j=1}^p \hat\phi_j^2=1$,
then 
$$\text{tr}(\hat Z^T \hat Z)=\text{tr}(\hat Z^2) = \sum_{j=1}^p \hat\phi_j^4=\hat\phi_1^4 + \sum_{j=2}^p\hat \phi_j^4\geq (1-\epsilon)^2.$$
Hence it follows
\begin{eqnarray*}
\|\hat Z - \eig\eig^T\|_F^2 &=&
 2 \left(1 - \text{tr}(\hat Z \eig\eig^T)\right) + \text{tr}(\hat Z^T\hat Z)-1 \\
&\geq &
 2 \left(1 - \text{tr}(\hat Z \eig\eig^T)\right) - 2\epsilon + \epsilon^2 \\
&=&
2 \left(1 - \sum_{j=1}^p\hat \phi_j^2 (u_1^T \hat u_j)^2\right)- 2\epsilon + \epsilon^2.
\end{eqnarray*}
Thus
\begin{equation}
\label{pcae1}
2 \left(1 - \sum_{j=1}^p \hat\phi_j^2 (u_1^T \hat u_j)^2\right)\leq 2\epsilon.
\end{equation}
Moreover,
\begin{equation}
\label{pcae2}
\sum_{j=1}^p \phi_j^2 (u_1^T u_j)^2 \leq \phi_1^2 (\eig^T u_1)^2 + \sum_{j=2}^p \phi_j^2 \leq (\eig^T u_1)^2+\epsilon.
\end{equation}
Then combining \eqref{pcae1} and \eqref{pcae2} it follows
$$2 \left(1 - (\eig^T \hat u_1)^2\right)\leq 4\epsilon.$$
Since we assume without loss of generality that $\hat u_1^T \eig \geq 0,$
$$\|\eig - \hat u_1\|_2^2 = 2(1-\eig^T \hat u_1) = \frac{2(1-(\eig^T\hat u_1)^2)}{1+\eig^T\hat u_1} \leq 4\epsilon.$$
Now we proceed to show the bound for $\|\hat \beta_{\emph{init}}-\beta_0\|_2.$ Recall that $\hat \beta_{\emph{init}}= \text{tr}(\hat\Sigma\hat Z)^{1/2}\hat u_1.$ Using the eigendecomposition of $\hat Z$, we can write
\begin{eqnarray*}
|\text{tr}(\hat\Sigma \hat Z) - \text{tr}(\Sigma_0 \eig\eig^T) |
& \leq  &
\underbrace{|\text{tr}((\hat\Sigma-\Sigma_0)\hat Z)|}_{i_1}
 + \underbrace{|\text{tr}(\Sigma_0 (\sum_{j=2}^p \hat\phi_j^2 \hat u_j \hat u_j^T))|}_{i_2}\\
&&\;
+\underbrace{|\text{tr}(\Sigma_0 (\hat\phi_1^2\hat u_1 \hat u_1 - \eig\eig^T)|}_{i_3}
.
\end{eqnarray*}
Firstly,  since $\|u_1u_1^T\|_1\leq s\|u_1u_1^T\|_F\leq s$ and $\|\hat Z- u_1u_1^T\|_1\leq \epsilon^2/\lambda$, it follows
$$i_1 \leq\|\hat\Sigma-\Sigma_0\|_\infty \|\hat Z\|_1 \leq \lambda/2 (s + \epsilon^2/\lambda) .$$
Secondly,  
$$i_2 =\text{tr}(\Sigma_0 (\sum_{j=2}^p \hat\phi_j^2 \hat u_j\hat u_j^T )) =
\sum_{j=2}^p \hat\phi_j^2 \hat u_j^T\Sigma_0\hat u_j  \leq 
\Lambda_{\max} \sum_{j=2}^p \hat\phi_j^2  \leq  \Lambda_{\max} \epsilon.$$
Thirdly, 
\begin{eqnarray*}
i_3 =|\text{tr}(\Sigma_0 (\hat\phi_1^2\hat u_1 \hat u_1^T - \eig\eig^T)|
&\leq & |\text{tr}(\Sigma_0 (\hat\phi_1^2-1)\hat u_1 \hat u_1^T)|
\\
&&\;\;+\;\;
|\text{tr}(\Sigma_0 (\hat u_1 \hat u_1^T - \eig\eig^T)|
\\[5pt]
&\leq & \epsilon \|\hat u_1\|_2 \Lambda_{\max}  + 
|(\hat u_1 -u_1)^T\Sigma_0(\hat u_1 -\eig)| \\
&&\;\;+\;\; 2\;|u_1^T \Sigma_0 (\hat u_1 -\eig)| 
\\[5pt]
&\leq &
\epsilon \Lambda_{\max} +
\Lambda_{\max} \|\hat u_1 -\eig\|_2^2\\
&&\;\; +\; \;2 \Lambda_{\max}^{1/2}\|\hat u_1 -\eig\|_2
\\[5pt]
&\leq &
5 \epsilon \Lambda_{\max}   +  4\sqrt{\epsilon} \Lambda_{\max}^{1/2}
.
\end{eqnarray*}
Hence, collecting the bounds,
\begin{eqnarray*}
|\text{tr}(\hat\Sigma \hat Z) - \text{tr}(\Sigma_0 \eig\eig^T) |
& \leq  &
\lambda/2 (s + \epsilon^2 /\lambda) + 2\left(3\Lambda_{\max}  \epsilon  +  2\Lambda_{\max}^{1/2} \sqrt{\epsilon}\right) \\
&:=&\;\; \zeta
.
\end{eqnarray*}
But then, assuming $\Lambda_{\max}-\zeta>0$,
\begin{eqnarray*}
\|\hat\beta_{\text{init}}-\beta_0\|_2 &\leq &
 |\text{tr}(\hat\Sigma \hat Z)^{1/2}- \text{tr}(\Sigma_0 \eig\eig^T)^{1/2}|\|\hat u_1\|_2 \\
&&+ \;\text{tr}(\Sigma_0 \eig\eig^T)^{1/2}\| \hat u_1 - u_1\|_2
\\[5pt]
&=&
\frac{1}{2\sqrt{\Lambda_{\max}-\zeta}}\zeta + 2\sqrt{\epsilon}\Lambda_{\max}^{1/2}
.
\end{eqnarray*}

\end{proof}

\ifsubsection
\section{Proofs for Section \ref{subsec:spca.theory}}
\else
\subsection{Proofs for Section \ref{subsec:spca.theory} }
\fi

\subsubsection{Oracle inequalities for the second step estimator}

\begin{proof}[Proof of Theorem \ref{oracle}] 
The definition of a stationary point $\betatil$ in particular implies
\begin{equation}
\label{kkt1234}
(\dot R_n(\betatil) + \lambda \hat Z)^T (\beta_0-\betatil) \geq 0,
\end{equation}
where $\hat Z$ is the sub-differential of the $\ell_1$-norm of $\beta$ evaluated at $\betatil.$
By Taylor expansion of the population loss, we obtain
\begin{equation}
\label{taylor1234}
R(\beta_0) - R(\betatil) = \dot R(\betatil)^T(\beta_0-\betatil) + \frac{1}{2}(\beta_0-\betatil)^T\ddot R(\bar \beta) (\beta_0-\betatil),
\end{equation}
for an intermediate point $\bar \beta=\alpha \hat\beta+(1-\alpha)\beta_0$, for some $\alpha\in [0,1].$\\
Since $\|\bar\beta-\beta_0\|_2\leq \eta,$ Lemma \ref{eigen.spca} implies $\Lambda_{\min}(\ddot R(\bar \beta)) \geq 2(\rho-3\eta)>0.$
Thus, combining \eqref{kkt1234} and \eqref{taylor1234} and rearranging yields

\begin{equation*}
\label{four}
R(\betatil) -R(\beta_0) 
+ (\dot R_n(\betatil) - \dot R(\betatil))^T(\beta_0-\betatil) -
 \lambda \hat Z^T (\beta_0-\betatil) \leq 0.
\end{equation*}

\noindent
Using that $\hat\beta$ and $\hat Z$ satisfies
$$\hat Z^T \betatil = \|\betatil\|_1\;\;\text{ and }\;\;|\hat Z^T \beta_0|\leq \|\hat Z\|_\infty\|\beta_0\|_1\leq \|\beta_0\|_1,$$
it follows
\begin{equation*}
\label{four}
R(\betatil)-R(\beta_0)  
+ \lambda \|\betatil\|_1 \quad \leq \quad \lambda\|\beta_0\|_1+ \mathcal E(\hat\beta),
\end{equation*}
where we denoted the empirical process term by 
$$\mathcal E(\beta):=|(\dot R_n(\beta) - \dot R(\beta))^T(\beta_0-\beta)|.$$
\noindent
It remains to bound the random term 
$\mathcal E(\betatil)$. Note that 
$$\mathcal E(\betatil) = -(\hat\beta -\beta_0)^T W(\hat\beta-\beta_0) +
\beta_0^T W(\beta_0-\hat\beta),$$
where we denote $W:=\hat\Sigma-\Sigma_0.$
First note that for $\lambda_1 = 4\sigma^2\|\beta_0\|_2(\lambda_0 + \lambda_0^2),$ by Lemma \ref{sigma.con} it follows that
with probability at least $1-\alpha$, where $\alpha:=2e^{-{\log(2p)}}$
\begin{equation*}
\|W\beta_0\|_\infty\leq \lambda_1.
\end{equation*}
Hence 
\begin{equation}
\label{spca.bound.taky2}
\|\beta_0^T W(\beta_0-\hat\beta)\|_\infty \leq \lambda_1\|\beta_0-\hat\beta\|_1.
\end{equation}
%
\vskip 0.2cm
\noindent
By Lemma \ref{cor1} with $\lambda_0 = \sqrt{\frac{\log (2p)}{n}},$ by \eqref{spca.bound.taky2} and using H\"older's inequality,
with probability at least $1-2(J+2)e^{-\log (2p)}$, 
\begin{eqnarray*}
\mathcal E(\betatil) 
&\leq &
\lambda_2'  \|\hat\beta-\beta_0\|_1
+
\lambda_2  \|\hat\beta-\beta_0\|_1
\\
&&\;\;+\;\;4\times 27 \sigma^2\left[3\|\hat\beta-\beta_0\|_1^2{\lambda_0^2}+\sqrt{6} \|\hat\beta-\beta_0\|_1{\lambda_0}\right]\|{{\hat\beta-\beta_0}}\|_2^2 
,
\end{eqnarray*}
where $\lambda_2' = 4\sigma^2(\lambda_0 + \lambda_0^2).$
Next  by the triangle inequality and by the definition of the tuning parameter $\tune$,
$$\lambda_0 \|\beta_0 - \betatil\|_1\leq \lambda_0\|\beta_0\|_1+\lambda_0\tune \leq  C_{\tune}
.$$
Then it follows
$$3\|\hat\beta-\beta_0\|_1^2\lambda_0^2 + \sqrt{6}\|\hat\beta-\beta_0\|_1\lambda_0  \leq
\|\hat\beta-\beta_0\|_1\lambda_0 (3C_T+\sqrt{6}).
$$
But then
\begin{eqnarray*}
\mathcal E(\betatil) 
&\leq &
4 \times 27  \sigma^2 C_T(3C_T +\sqrt{6}) \|{{\hat\beta-\beta_0}}\|_2^2\\
&& \;\;+\;\;
(\lambda_2+\lambda_2')  \|\hat\beta-\beta_0\|_1
.
\end{eqnarray*}
By the condition on the tuning parameter $\lambda$, we have
$\lambda\geq 2(\lambda_2 + \lambda_2'),$ hence
\begin{eqnarray*}
\mathcal E(\betatil) 
&\leq &
4 \times 27  \sigma^2 C_T(3C_T +\sqrt{6}) \|{{\hat\beta-\beta_0}}\|_2^2\\
&&\;\; +\;\;
\lambda/2  \|\hat\beta-\beta_0\|_1
.
\end{eqnarray*}

\noindent
Returning to the oracle inequality, by the condition $\rho-3\eta \geq c_0 \sigma^2 C_T (3C_T+\sqrt{6}),$ where we take
 $c_0:=2\times 4 \times 27$,
we obtain
\begin{equation}
\label{four}
R(\betatil)-R(\beta_0)  
+ \lambda \|\betatil\|_1 \leq \lambda\|\beta_0\|_1+ (\rho-3\eta)/2 \|\beta_0 - \betatil\|_2^2 + \lambda/2 \|\beta_0 - \betatil\|_1.
\end{equation}
The oracle inequalities then follow by the usual techniques (see e.g. \cite{hds}), since the population risk 
satisfies $R(\betatil)-R(\beta_0) \geq (3\rho-\eta) \|\betatil-\beta_0\|_2^2$ as already derived above.

\end{proof}

\subsubsection{Oracle inequalities for nodewise regression}
\label{sec:proof.nodewise}

In this section, we derive the rates of convergence for the estimator $\hat\Theta$ defined in Algorithm \ref{alg:nodewisea}. 
These results are contained in Lemmas \ref{gamma} and \ref{tau.spca} below. 
\noindent
Recall the definition of  the population parameters $\gamma_j^0$ from \eqref{gamma0} and define 
\begin{equation}\label{tauj0}
\tau_j^2 = \mathbb E \ddot R_n(\beta_0)_{j,j}-\ddot R_n(\beta_0)_{j,-j}\gamma^0_{j}.
\end{equation}
We now summarize several relationships that will be used throughout the proofs without further reference.
One can easily check that the definition of $\gamma_j^0$  implies 
$$\gamma_j^0 = (\ddot R(\beta_0)_{-j,-j})^{-1}\ddot R(\beta_0)_{j,-j},$$
provided that  the matrix is $\ddot R(\beta_0)_{-j,-j}$ is invertible.
One can also verify that $\tau_j^2$
satisfies
\begin{eqnarray*}
\frac{1}{\tau_j^2} =
{\Theta^0_{jj}}.
\end{eqnarray*}
It is moreover not difficult to calculate the following relations, which will be used throughout the proofs
\begin{eqnarray*}
\Lambda_{\min}(\ddot R(\beta_0)) &=& \phi^2_{\max} -\phi_2^2 ,\\
\Lambda_{\max}(\ddot R(\beta_0)) &=& 2\phi^2_{\max}, \\ 
\Lambda_{\min}(\Theta_0)         &=& 1/(2\phi^2_{\max}),\\
\Lambda_{\max}(\Theta_0)         &=& 1/( \phi^2_{\max} - \phi_2^2 ). 
\end{eqnarray*}
This also implies ${1}/{\tau_j^2}\leq \Lambda_{\max}(\Theta_0)  \leq 1/( \phi^2_{\max} - \phi_2^2 )$
and hence $\tau_j^2\geq  \phi^2_{\max} - \phi_2^2 .$
To simplify notation, in this section we denote $\alpha:=\phi^2_{\max} - \phi_2^2.$

\begin{lemma}
\label{gamma}
Assume Condition \ref{design.spca} with parameter $\sigma$,
let 
$$\lambda_0=\sqrt{\log (2p)/n},$$
$$\bar\lambda_1 \geq 16\sigma^2(\|\gamma_j^0\|_2+1)[\lambda_0+\lambda_0^2]$$
$$ \rho-3\eta \geq c_0 \sigma^2 C_{\tune}[3C_{\tune}+\sqrt{6}],$$
where $c_0$ is a suitable universal constant.
Let the tuning parameters $\lambda_j,\tunej,$ $j=1,\dots,p$ of the program \eqref{est.fast} satisfy
\begin{align}
\label{tuning.par1234} 
\lambda_j\geq 2\bar\lambda_1,
\end{align}
$\tunej \leq C_{\tune} /(2\lambda_0)$ and $\|\gamma_j^0\|_1\leq \tunej.$
Then any stationary point $\gamatil$ as defined in \eqref{stationary} satisfies with probability at least 
$1-2(J+1)pe^{-2{\log(2p)}},$ where $J=\lceil \log T\rceil$, 
\begin{align}
\label{EqnStatBounds}
\max_{j=1,\dots,p}
\|\gamatil - \gamma_j^0\|_2^2 +\lambda_j\|\gamatil - \gamma_j^0\|_1 \le \max_{j=1,\dots,p} \frac{C_1 s_j\lambda_j^2}{(\rho-3\eta)^2},
\end{align}
where $s_j = \|\gamma_j^0\|_0$ and $C_1$ is a universal constant.

\end{lemma}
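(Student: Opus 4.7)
The plan is to mirror the proof template of Theorem \ref{oracle}, applied nodewise with input matrix $A = \ddot R_n(\hat\beta)$ where $\hat\beta$ is the stationary point provided by Theorem \ref{oracle}. Fix $j$. The objective $f(\gamma_j) := \Gamma_j^T A \Gamma_j$ is an exact quadratic in $\gamma_j$ with gradient $\nabla f(\gamma_j) = -2 A_{j,-j} + 2 A_{-j,-j}\gamma_j$ and Hessian $2A_{-j,-j}$. Testing the first-order stationary condition at $\hat\gamma_j$ against $\gamma_j = \gamma_j^0$ (feasible by the hypothesis $\|\gamma_j^0\|_1 \leq T_j$) yields, analogously to \eqref{kkt1234},
$$\nabla f(\hat\gamma_j)^T(\hat\gamma_j - \gamma_j^0) + \lambda_j(\|\hat\gamma_j\|_1 - \|\gamma_j^0\|_1) \leq 0.$$

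Next I decompose $A = \ddot R(\beta_0) + W$ with $W := \ddot R_n(\hat\beta) - \ddot R(\beta_0)$, and let $\tilde f(\gamma_j) := \Gamma_j^T \ddot R(\beta_0) \Gamma_j$ denote the population-Hessian objective (for which $\gamma_j^0$ is the unconstrained minimizer). Because $\tilde f$ is an exact quadratic with $\nabla \tilde f(\gamma_j^0) = 0$ and Hessian $2\ddot R(\beta_0)_{-j,-j}$, Cauchy interlacing together with Lemma \ref{eigen.spca} (applied at $\beta_0$) gives the curvature bound
$$\nabla \tilde f(\hat\gamma_j)^T(\hat\gamma_j - \gamma_j^0) = 2(\hat\gamma_j - \gamma_j^0)^T \ddot R(\beta_0)_{-j,-j}(\hat\gamma_j - \gamma_j^0) \geq 4\rho \, \|\hat\gamma_j - \gamma_j^0\|_2^2.$$
The remainder $\mathcal E_j := (\nabla f - \nabla\tilde f)(\hat\gamma_j)^T(\hat\gamma_j - \gamma_j^0)$ is a linear functional of the noise matrix $W$.

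To bound $\mathcal E_j$ I exploit the explicit form
$$W = -(\hat\Sigma - \Sigma_0) + (\|\hat\beta\|_2^2 - \|\beta_0\|_2^2)I + 2(\hat\beta\hat\beta^T - \beta_0\beta_0^T).$$
A Lemma~\ref{sigma.con}-style concentration bound controls $\|\hat\Sigma - \Sigma_0\|_\infty = \mathcal O_P(\sqrt{\log p/n})$, while the oracle rates for $\hat\beta$ from Theorem \ref{oracle} control the other two summands in $\ell_\infty$-norm, giving $\|W\|_\infty = \mathcal O_P(\sqrt{\log p/n})$ on a high-probability event; I strengthen the exponent in the deviation probability (using $2\log(2p)$ in place of $\log(2p)$) so that a later union bound over $j$ still yields a high-probability statement. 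The linear-in-$(\hat\gamma_j - \gamma_j^0)$ piece of $\mathcal E_j$ is then controlled by H\"older against $\|\hat\gamma_j - \gamma_j^0\|_1$, and the bilinear piece coming from $W_{-j,-j}$ is handled by a uniform empirical-process bound (the nodewise analog of Lemma \ref{cor1} used in the proof of Theorem \ref{oracle}), where the constraint $\|\gamma_j\|_1 \leq T_j \leq C_T/(2\lambda_0)$ keeps the associated term $\lambda_0\|\hat\gamma_j - \gamma_j^0\|_1$ bounded by a universal constant. Combining these, for $\lambda_j \geq 2\bar\lambda_1$ the linear noise is dominated by $(\lambda_j/2)\|\hat\gamma_j - \gamma_j^0\|_1$, and the condition $\rho - 3\eta \geq c_0\sigma^2 C_T[3C_T + \sqrt{6}]$ forces the bilinear noise to be absorbed into the curvature bound; what remains is a basic inequality of Lasso type,
$$2(\rho - 3\eta)\|\hat\gamma_j - \gamma_j^0\|_2^2 + \lambda_j\|\hat\gamma_j\|_1 \leq \lambda_j\|\gamma_j^0\|_1 + \tfrac{\lambda_j}{2}\|\hat\gamma_j - \gamma_j^0\|_1.$$
The standard cone argument applied to the $s_j$-sparse target $\gamma_j^0$ yields the claimed rate $\|\hat\gamma_j - \gamma_j^0\|_2^2 + \lambda_j\|\hat\gamma_j - \gamma_j^0\|_1 \lesssim s_j\lambda_j^2/(\rho - 3\eta)^2$, and a union bound over $j = 1, \dots, p$ gives the uniform statement, explaining the extra factor of $p$ in the failure probability.

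The main obstacle is the coupled randomness: $W$ depends on $\hat\beta$, so the usual empirical-process control over the feasible $\ell_1$-ball is being applied to a noise matrix that is itself random and correlated with the data through $\hat\beta$. The cleanest resolution is to first condition on the high-probability event from Theorem \ref{oracle}, which delivers deterministic bounds on $\|\hat\beta - \beta_0\|_1$ and $\|\hat\beta - \beta_0\|_2$ and hence on $\|W\|_\infty$; on this event the nodewise analysis decouples cleanly from the estimation of $\beta_0$, and the uniform concentration inside $\{\|\gamma_j\|_1 \leq T_j\}$ proceeds exactly as in the corresponding step of Theorem \ref{oracle}.
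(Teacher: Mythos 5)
Your proposal is structurally close to the paper's argument but uses a genuinely different decomposition of the noise, and the comparison is worth noting. The paper's proof defines the surrogate loss $L(\gamma_j) := \Gamma_j^T \ddot R(\hat\beta)\Gamma_j$ (population Hessian evaluated at $\hat\beta$, not $\beta_0$), so that the empirical-process increment $\dot L_n - \dot L$ involves only $\hat\Sigma - \Sigma_0$; the price is that $\gamma_j^0$ is not the exact minimizer of $L$, and the resulting cross term $\dot L(\gamma_j^0)^T(\hat\gamma_j - \gamma_j^0)$ — which carries the $\hat\beta$-dependent bias through $\ddot R(\hat\beta) - \ddot R(\beta_0)$ — is absorbed into the ``usual techniques'' step rather than displayed. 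You instead keep $\ddot R(\beta_0)$ as the reference Hessian, so $\gamma_j^0$ is a genuine stationary point of your $\tilde f$ and $\nabla\tilde f(\gamma_j^0)=0$, which gives a cleaner curvature bound; but then the noise $W = \ddot R_n(\hat\beta) - \ddot R(\beta_0)$ picks up the two extra summands $(\|\hat\beta\|_2^2 - \|\beta_0\|_2^2)I$ and $2(\hat\beta\hat\beta^T - \beta_0\beta_0^T)$, so the same $\hat\beta$-dependent bias that the paper hides in $\dot L(\gamma_j^0)$ shows up in your $W$. The two routes move the same term around; yours is more transparent, the paper's keeps the concentration bound identical to the one already used in Theorem~\ref{oracle}. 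Your observation about conditioning on the high-probability event from Theorem~\ref{oracle} to decouple $\hat\beta$ from the nodewise analysis is exactly the right resolution of the coupled-randomness issue and is implicit in the paper.

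Two small corrections. First, the curvature constant: Lemma~\ref{eigen.spca} applied at $\beta_0$ (or, as the paper does, at $\hat\beta$ with $\|\hat\beta-\beta_0\|_2\le\eta$) gives $\Lambda_{\min}(\ddot R(\cdot))\geq 2(\rho-3\eta)$, so the lower bound should read $4(\rho-3\eta)\|\hat\gamma_j-\gamma_j^0\|_2^2$, not $4\rho\|\hat\gamma_j-\gamma_j^0\|_2^2$ (this is exactly where the factor $(\rho-3\eta)^{-2}$ in the rate comes from). Second, the claim $\|W\|_\infty=\mathcal O_P(\sqrt{\log p/n})$ is a bit fast: $\|\hat\Sigma-\Sigma_0\|_\infty\lesssim\lambda_0$, but $|\|\hat\beta\|_2^2-\|\beta_0\|_2^2|\lesssim\|\hat\beta-\beta_0\|_2\lesssim\sqrt{s}\lambda$, which is not of order $\lambda_0$ unless $s=\mathcal O(1)$. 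You should note that on the Theorem~\ref{oracle} event these terms are deterministically controlled and, under the standing assumptions $\phi_{\max}\leq C_{\max}$, $\rho-3\eta\geq c$ and the sparsity regime of Lemma~\ref{nodewise.asymp.spca}, they are lower order and do not spoil the $s_j\lambda_j^2$ rate. To be fair, the paper's own proof glosses over the analogous bias, so this is not a disqualifying gap in your argument, but it deserves an explicit sentence.
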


\begin{proof}[Proof of Lemma \ref{gamma}]
The proof is similar to the proof of Lemma \ref{oracle}. For simplicity, we denote the loss function by 
$L_n(\gamma_j):= \Gamma_j^T \ddot R_n(\hat\beta) \Gamma_j$ where the notation is as in \eqref{lasso-type}. Define $L(\gamma_j):= \Gamma_j^T \ddot R(\hat\beta) \Gamma_j.$ The derivatives are denoted by dots.
The definition of the stationary point $\gamatil$ implies
\begin{equation}
\label{kkt123}
(\dot L_n(\gamatil) + \lambda \hat Z)^T (\gamma_j^0-\gamatil) \geq 0,
\end{equation}
where $\hat Z$ is the sub-differential of the $\ell_1$-norm evaluated at $\gamatil.$
By Taylor expansion of the population loss, we obtain
\begin{equation}
\label{taylor123}
L(\gamma_j^0) - L(\gamatil) = \dot L(\gamatil)^T(\gamma_j^0-\gamatil) + \frac{1}{2}(\gamma_j^0-\gamatil)^T\ddot R({\hat \beta}) (\gamma_j^0-\gamatil).
\end{equation}
We have  $
 \Lambda_{\min}(\ddot R({\hat \beta}))\geq 2(\rho-3\eta)$ by Lemma \ref{eigen.spca}.
Thus, combining \eqref{kkt123} and \eqref{taylor123} and rearranging yields
\begin{equation}
\label{four}
L(\gamatil) -L(\gamma_j^0) 
+ (\dot L_n(\gamatil ) - \dot L(\gamatil))^T(\gamma_j^0-\gamatil) -
 \lambda_j \hat Z^T (\gamma_j^0-\gamatil) \leq 0.
\end{equation}
Then it follows (using that $\hat Z^T \gamatil = \|\gamatil\|_1$ and $|\hat Z^T \gamma_j^0|\leq \|\hat Z\|_\infty\|\gamma_j^0\|_1\leq \|\gamma_j^0\|_1$)
\begin{equation}
\label{four}
L(\gamatil)-L(\gamma_j^0)  
+ \lambda_j \|\gamatil\|_1 \leq \lambda_j\|\gamma_j^0\|_1+ |(\dot L_n(\gamatil) - \dot L(\gamatil))^T(\gamma_j^0-\gamatil)|.
\end{equation}
It remains to bound the term 
\begin{eqnarray*}
(\dot L_n(\gamatil) - \dot L(\gamatil))^T(\gamma_j^0-\gamatil)
&=&
2(\hat\Sigma_{j,-j}-\Sigma_{j,-j}^0)^T(\gamma_j^0-\gamatil) \\
&& + \;2 \gamatil^T(\hat\Sigma_{-j,-j}-\Sigma_{-j,-j}^0)(\gamma_j^0-\gamatil)
.
\end{eqnarray*}
We may use the same bounds as in Lemma \ref{oracle}, only now we need to consider maximum over all $j=1,\dots,p$. Hence by union bound, we  obtain
with probability at least
$1-2(J+1)pe^{-\log(2p)}$,
with $\lambda_0 = \sqrt{\frac{2\log (2p)}{n}},$ and 
$$\bar\lambda_1\geq 16\sigma^2(\|\gamma_j^0\|_2+1)(\lambda_0 + \lambda_0^2),$$
and by the definition of tuning parameters $T_j$,
\begin{eqnarray*}
|(\dot L_n(\gamatil) - \dot L(\gamatil))^T(\gamma_j^0-\gamatil)| 
&\leq &
4 \times 27 \sigma^2 C_T (3C_T+\sqrt{6}) \|{{\gamatil-\gamma_j^0}}\|_2^2\\
&&\;\; +\;\;
\lambda/2  \|\hat\beta-\beta_0\|_1
.
\end{eqnarray*}

\noindent
Returning to the oracle inequality, we have
\begin{equation}
\label{four}
L(\gamatil)-L(\gamma_j^0)  
+ \lambda_j \|\gamatil\|_1 \leq \lambda_j\|\gamma_j^0\|_1+(\rho-3\eta)/2\|\gamma_j^0-\gamatil\|_2^2+ \lambda_j/2\|\gamma_j^0-\gamatil\|_1.
\end{equation}
By the usual techniques (see e.g. \cite{hds}), we obtain the oracle inequalities.

\end{proof}

\begin{lemma}\label{tau.spca}
Suppose that conditions of Lemma \ref{gamma} are satisfied 
and denote 
$ \mu:= \|\hat\beta-\beta_0\|_2$
and $\hat c_j:= \hat\gamma_j - \gamma_j^0.$
Then for $\lambda_0 \geq \|\hat\Sigma-\Sigma_0\|_\infty,$ it holds
$$|\hat\tau_{j}^2-\tau_{j}^2|\leq r_{\tau,j},$$
where
\begin{eqnarray*}
r_{\tau,j} &:=&
\lambda_0 (\deltagammaone + \sqrt{s_j+1}\phi_{\max}^2/\egap)\\[3pt]
&&\;\;+\;\;
2\phi_{\max}^2\deltagammatwo\\[3pt]
&&\;\;+\;\;
2{{\mu}}({{\mu}} + 2\phi_{\max}) [\deltagammatwo + \phi_{\max}^2/\egap].
\end{eqnarray*}

\noindent
Moreover, if $\alpha - r_{\tau,j}>0,$
\begin{eqnarray*}
\|\hat\Theta_{j}-\Theta_{j}^0\|_1 &\leq&
\frac{1}{\egap} \deltagammaone + \left(1+\sqrt{s_j}\frac{\phi_{\max}^2}{\egap}\right) \frac{r_{\tau,j}}{\egap - r_{\tau,j}},
\\
%
\|\hat\Theta_{j}-\Theta_{j}^0\|_2 &\leq&
\frac{1}{\egap} \deltagammatwo + \left(1+\frac{\phi_{\max}^2}{\egap}\right) \frac{r_{\tau,j}}{\egap - r_{\tau,j}},
\end{eqnarray*}
where $\alpha=\Lambda_1-\Lambda_2.$

\end{lemma}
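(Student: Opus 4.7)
The plan is to first establish the bound on $|\hat\tau_j^2 - \tau_j^2|$ and then transfer it to $\hat\Theta_j = \hat\Gamma_j/\hat\tau_j^2$ by a simple quotient decomposition.

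For the first piece, the starting point is the identity $\hat\tau_j^2 = A_{jj} - A_{j,-j}\hat\gamma_j$, with $A := \ddot R_n(\hat\beta)$, which follows from the interior KKT optimality conditions \eqref{kkt.spca.gamma} noted immediately after Algorithm \ref{alg:nodewisea}. The population analogue $\tau_j^2 = A^0_{jj} - A^0_{j,-j}\gamma_j^0$ with $A^0 := \ddot R(\beta_0)$ is just \eqref{tauj0}. Setting $D := A - A^0$, a direct computation gives
\begin{equation*}
\hat\tau_j^2 - \tau_j^2 \;=\; \bigl(D_{jj} - D_{j,-j}\hat\gamma_j\bigr)\;-\;A^0_{j,-j}\hat c_j.
\end{equation*}
The population piece is controlled by Cauchy--Schwarz via $\|A^0_{j,-j}\|_2 \leq \|A^0\|_{\mathrm{op}} = 2\phi_{\max}^2$, which yields the $2\phi_{\max}^2 \|\hat c_j\|_2$ contribution in $r_{\tau,j}$.

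For the discrepancy piece I split $D = -(\hat\Sigma - \Sigma_0) + D_2$, where $D_2 := (\|\hat\beta\|_2^2 - \|\beta_0\|_2^2)I + 2(\hat\beta\hat\beta^T - \beta_0\beta_0^T)$. The sampling part $-(\hat\Sigma - \Sigma_0)$ has entries at most $\lambda_0$ in absolute value by assumption; applying H\"older after writing $\hat\gamma_j = \gamma_j^0 + \hat c_j$ produces a contribution at most $\lambda_0(\|\hat c_j\|_1 + \|\Gamma_j^0\|_1)$. The plug-in part $D_2$ is a rank-one-plus-identity perturbation whose operator norm is controlled by $\mu(\mu + 2\phi_{\max})$ via the identity $\hat\beta\hat\beta^T - \beta_0\beta_0^T = (\hat\beta - \beta_0)\hat\beta^T + \beta_0(\hat\beta - \beta_0)^T$ together with $\|\hat\beta\|_2 \leq \phi_{\max} + \mu$; Cauchy--Schwarz then bounds its contribution by $2\mu(\mu + 2\phi_{\max})(\|\hat c_j\|_2 + \|\Gamma_j^0\|_2)$. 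Using $\Gamma_j^0 = \tau_j^2 \Theta_j^0$ together with $\|\Theta_j^0\|_2 \leq \Lambda_{\max}(\Theta_0) = 1/\alpha$ yields $\|\Gamma_j^0\|_2 \leq \phi_{\max}^2/\alpha$ and hence $\|\Gamma_j^0\|_1 \leq \sqrt{s_j+1}\,\phi_{\max}^2/\alpha$, assembling exactly $r_{\tau,j}$.

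For the second piece, I apply the elementary quotient decomposition
\begin{equation*}
\hat\Theta_j - \Theta_j^0 \;=\; \frac{\hat\Gamma_j - \Gamma_j^0}{\hat\tau_j^2} \;+\; \Gamma_j^0 \cdot \frac{\tau_j^2 - \hat\tau_j^2}{\hat\tau_j^2\, \tau_j^2}.
\end{equation*}
Taking $\ell_1$ and $\ell_2$ norms and using $\tau_j^2 \geq \alpha$ together with $\hat\tau_j^2 \geq \tau_j^2 - r_{\tau,j} \geq \alpha - r_{\tau,j} > 0$ (from the first piece), combined with the bounds on $\|\Gamma_j^0\|_p$ above, delivers both stated inequalities.

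The main obstacle is the clean separation of the Hessian discrepancy $D$ into an entrywise-small sampling part (dualized in $\|\cdot\|_\infty$ against $\|\hat c_j\|_1$) and a structurally low-rank plug-in part (dualized in operator norm against $\|\hat c_j\|_2$). Matching the dual norms to the two pieces in this way is what forces the bound into the hybrid $\ell_1$-plus-$\ell_2$ form of $r_{\tau,j}$, rather than a cruder $\ell_\infty$-only bound that would be too large to propagate through the de-biasing argument.
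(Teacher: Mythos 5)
Your proposal is correct and follows essentially the same route as the paper's proof: the identity $\hat\tau_j^2 = \ddot R_n(\hat\beta)_j^T\hat\Gamma_j$ from the KKT conditions, a three-way split of $\hat\tau_j^2 - \tau_j^2$ into the sampling part $-(\hat\Sigma-\Sigma_0)$ (dualized via $\|\cdot\|_\infty$ against $\|\hat c_j\|_1$), the plug-in part from $\hat\beta$ vs.\ $\beta_0$ (dualized via operator/Euclidean norm against $\|\hat c_j\|_2$), and the $\ddot R(\beta_0)$ term acting on $\hat\Gamma_j-\Gamma_j^0$, exactly mirrors the paper's terms $i$, $ii$, $iii$; the quotient decomposition of $\hat\Theta_j - \Theta_j^0$ is an algebraically cleaner statement of the paper's mean-value-theorem step. (Minor note: as in the paper itself there is a loose factor of $2$ in the bound $\|\Gamma_j^0\|_2 \le \tau_j^2\|\Theta_j^0\|_2 \le 2\phi_{\max}^2/\alpha$, but this does not affect the order of the result.)
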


\begin{proof}[Proof of Lemma \ref{tau.spca}]

First, one can easily show from the KKT condi\-tions for the nodewise Lasso that
 $\hat\tau_j^2 = \ddot R_n(\hat\beta)_j^T \hat\Gamma_j.$
Consider the decomposition
\begin{eqnarray*}
\hat\tau_j^2-\tau_j^2 &=& 
\underbrace{(\ddot R_n(\hat\beta)_{j} 
-
 \ddot R(\hat\beta)_{j})^T\hat\Gamma_{j}}_{i}
\\
&&+
 \underbrace{(\ddot R(\hat \beta)_{j}
-
\ddot R(\beta_0)_{j})^T\hat\Gamma_{j}}_{ii}
\\
&&+
 \underbrace{\ddot R(\beta_0)_{j}^T(\hat\Gamma_{j}
-\Gamma^0_{j})}_{iii}.
\end{eqnarray*}
We need to bound the terms $i,ii,iii.$ Before doing so, we prepare a few preliminary results.
Firstly, 
$$\|\Gamma^0_{j}\|_2 = ((\Theta^0_j)^T (\Theta^0_j))^{1/2} /\Theta^0_{jj} \leq \Lambda_{\max}(\Theta_0)/ \Lambda_{\min}(\Theta_0)
\leq 2\phi_{\max}^2/\egap.$$
Next observe, 
\begin{eqnarray}
\nonumber
|\|\hat\beta\|_2^2-\|\beta_0\|_2^2 | &\leq & |(\hat\beta-\beta_0)^T (\hat\beta-\beta_0)| + 2 |\beta_0^T (\hat\beta-\beta_0)|
\\\nonumber
&\leq &
\|\hat\beta-\beta_0\|_2^2 + 2\|\beta_0\|_2\|\hat\beta-\beta_0\|_2
\\\label{spca.bound.norms}
&\leq &
{{\mu}}^2 + 2\phi_{\max} {{\mu}}.
\end{eqnarray}
Hence,
\begin{eqnarray*}
\|(\ddot R(\hat\beta) - \ddot R(\beta_0))e_j\|_2
&=&
\|(\|\hat\beta\|_2^2-\|\beta_0\|_2^2)e_j + \hat\beta_j \hat\beta - \beta^0_j \beta_0\|_2
\\
&\leq &
|\|\hat\beta\|_2^2-\|\beta_0\|_2^2| + |\hat\beta_j|\|\hat\beta-\beta_0\|_2 \\
&& \;\;+\;\; \|\beta_0\|_2 |\hat\beta_j-\beta_j^0|
\\
&\leq&
|\|\hat\beta\|_2^2-\|\beta_0\|_2^2|  + \|\hat\beta\|_2 \|\hat\beta-\beta_0\|_2 \\
&&\;\;+\;\; \|\beta_0\|_2 \|\hat\beta-\beta_0\|_2 
\\
&\leq &
2{{\mu}}^2 + 2\phi_{\max} {{\mu}}.
\end{eqnarray*}
Now using the above preliminaries, we obtain the bounds for $i,ii,iii.$ Firstly, observing that 
$\|\Gamma_j^0\|_1\leq \sqrt{\spag+1}\|\Gamma_j^0\|_2$,
\begin{eqnarray*}
|i| 
&\leq &
\|\ddot R_n(\hat\beta)_{j} - \ddot R(\hat\beta)_{j}\|_\infty\|\hat\Gamma_j\|_1
\\
&\leq &
\lambda_0 (\deltagammaone + 2\sqrt{\spag+1}\phi_{\max}^2/\egap),
\end{eqnarray*}
where $\|\hat\Sigma-\Sigma_0\|_\infty \leq \lambda_0.$ 
Moreover,
\begin{eqnarray*}
|ii| &\leq & \|\ddot R(\hat\beta)_{j} - \ddot R(\beta_0)_{j}\|_2\|\hat\Gamma_j\|_2
\\
&\leq &
2{{\mu}}({{\mu}} + \phi_{\max}) (\deltagammatwo + 2\phi_{\max}^2/\egap)
.
\end{eqnarray*}
Next
\begin{eqnarray*}
|iii| 
&\leq &
\|\ddot R(\beta_0)_{j}\|_2 \deltagammatwo
\\
&=& 2\phi_{\max}^2\deltagammatwo. 
\end{eqnarray*}
Thus collecting the results above,
$$|\hat\tau_j^2-\tau_j^2  |\leq 
r_{\tau,j},$$
where 
\begin{eqnarray*}
r_{\tau,j} &:=&
\lambda_0 (\deltagammaone + \sqrt{s_j+1}\phi_{\max}^2/\egap)\\[3pt]
&&\;\;+\;\;
2\phi_{\max}^2\deltagammatwo\\[3pt]
&&\;\;+\;\;
2{{\mu}}({{\mu}} + 2\phi_{\max}) [\deltagammatwo + \phi_{\max}^2/\egap].
\end{eqnarray*}
By the mean value theorem,
$$|\frac{1}{\hat\tau_j^2} - \frac{1}{\tau_j^2}| \leq \frac{1}{\tilde\tau^2}|\hat\tau_j^2-\tau_j^2 |,$$
for some intermediate point $\tilde\tau^2.$
But we have
$$\tilde\tau^2 \geq \tau_j^2 -|\tilde\tau_j^2-\tau_j^2 | \geq 
 1/\Theta^0_{jj}- |\hat\tau_j^2-\tau_j^2 |\geq\egap -r_{\tau,j},$$
Hence, assuming that $\egap -r_{\tau,j}>0,$
$$|\frac{1}{\hat\tau_j^2} - \frac{1}{\tau_j^2}| \leq  \frac{r_{\tau,j}}{\egap - r_{\tau,j}}.$$
Then we can easily obtain the rates of convergence for $\hat\Theta_j$ using the bound
\begin{eqnarray*}
\|\hat\Theta_{j}-\Theta_{j}^0\|_1 &=&
\|\hat\Gamma_{j}/\hat\tau_j^2 - \Gamma_{j}/\tau_j^2\|_1
\\
&\leq & 
\deltagammaone /\hat\tau_j^2
 + 
\|\Gamma^0_j\|_1 |1/\hat\tau_j^2 - 1/\tau_j^2|.
\end{eqnarray*}
Hence
\begin{eqnarray*}
\|\hat\Theta_{j}-\Theta_{j}^0\|_1 &\leq&
\frac{1}{\egap} \deltagammaone+ \left(1+2\sqrt{s_j+1}\frac{\phi_{\max}^2}{\egap}\right) \frac{r_{\tau,j}}{\egap - r_{\tau,j}}
\end{eqnarray*}
Similarly follow the rates for $\|\hat\Theta_j-\Theta_j^0\|_2.$
\end{proof}

\begin{proof}[Proof of Lemma \ref{nodewise.asymp.spca}]
Follows from Lemmas \ref{gamma} and \ref{tau.spca} by noting that $\alpha =\phi_1^2 - \phi_2^2 \geq \rho^2\geq c$ for a universal constant,
and $\phi_{\max}\leq C_{\max}$. Then by Lemma \ref{gamma} it follows 
$\max_{j=1,\dots,p}\|\hat\gamma_j-\gamma_j^0\|_2^2 = \mathcal O_P(\max_{j}s_j\lambda_j^2)$ and 
$\max_{j=1,\dots,p}\|\hat\gamma_j-\gamma_j^0\|_1 = \mathcal O_P(\max_{j}s_j\lambda_j).$
By Lemma \ref{tau.spca} (since $\|\hat\Sigma-\Sigma_0\|_\infty = \mathcal O_P(\sqrt{\log p/n})$), it then follows
 $$\max_{j=1,\dots,p}\|\hat\Theta_j-\Theta_j^0\|_2^2 = \mathcal O_P(\max_{j}s_j\lambda_j^2),$$ 
and 
 $$\max_{j=1,\dots,p}\|\hat\Theta_j-\Theta_j^0\|_1= \mathcal O_P(\max_{j}s_j\lambda_j).$$

\end{proof}

\subsubsection{Asymptotic normality}
\label{sec:proof.normal}

\begin{proof}[Proof of Theorem \ref{ci}] 
Using Taylor expansion of the function $\beta \mapsto \hat\Theta_j^T\dot R_n(\beta)$ around $\beta_0$ we obtain:
$$\hat\Theta_j^T\dot R_n(\hat\beta) = \hat\Theta_j^T\dot R_n(\beta_0) + \hat\Theta_j^T \ddot R_n(\tilde\beta) (\hat\beta-\beta_0),$$
where $\tilde\beta = \alpha \hat\beta + (1-\alpha ) \beta_0$ for some $\alpha\in[0,1].$
Then for the de-sparsified estimator, we may write the decomposition
\begin{eqnarray*}
\hat b_j - \beta_j^0 &=&
\hat \beta_{j} -\beta^0_{j} -
\hat\Theta_{j}^T \dot{R}_n(\hat \beta)\\
&=&
-\;(\Theta_{j}^0)^T\dot R_n( \beta_0)
\\
&&\; - \underbrace{(\hat\Theta_{j} - \Theta^0_{j})^T \dot{R}_n( \beta_0)}_{i}
\\
&& \;+ \underbrace{ \hat \beta_j -\beta^0_j - \hat\Theta_{j}^T\ddot R_n(\hat\beta)(\hat\beta-\beta_0)}_{ii }
\\
&&
\;-\underbrace{\hat\Theta_{j}^T(\ddot R_n(\tilde\beta) - \ddot R_n(\hat\beta)) (\hat\beta-\beta_0)}_{iii},
\end{eqnarray*}
We first bound $ii$ using H\"older's inequality and the KKT conditions for nodewise Lasso for inversion of $\ddot R_n(\hat\beta)$.
The estimator $\hat\gamma_j$ is defined as any stationary point of the program \eqref{lasso-type}, but as we have shown oracle inequalities 
for $\hat\gamma_j$, for $n$ sufficiently large, $\hat\gamma_j$ must lie in the interior of the feasible set and hence the KKT conditions
$-2\ddot R_n(\hat\beta)_{j,-j} + 2\ddot R_n(\hat\beta)_{-j,-j}\hat\gamma_j +\lambda_j\partial\|\hat\gamma_j\|_1=0,$ are satisfied with high probability.
The KKT conditions for nodewise regression imply that $\|\ddot R_n(\hat\beta)\hat\Theta_j-e_j\|_\infty=\mathcal O( \lambda_j/\hat\tau_j^2)$ (see e.g. \cite{vdgeer13}).   Hence
\begin{eqnarray*}
|ii|
& =& 
\|(\hat \beta -\beta_0)(e_j - \hat\Theta_{j}^T\ddot R_n(\hat\beta))\|_\infty 
\leq 
\|\hat \beta -\beta_0\|_1 \|e_j - \hat\Theta_{j}^T\ddot R_n(\hat\beta)\|_\infty 
\\
&\leq &
\|\hat \beta -\beta_0\|_1 \lambda_j/\hat\tau_j^2 
.
\end{eqnarray*}
Next we bound $iii$  using the Cauchy-Schwarz inequality 
$$|iii| \leq \|\hat\beta -\beta_0 \|_2 \|\hat\Theta_j^T (\ddot R_n(\tilde\beta) - \ddot R_n(\hat\beta))\|_2.$$
By the definition of $\tilde\beta$ it follows that $\|\tilde\beta-\beta_0\|_2 \leq \|\hat\beta-\beta_0\|_2
$.
But then
\begin{eqnarray*}
&& \|\hat\Theta_j^T (\ddot R_n(\tilde\beta) - \ddot R_n(\hat\beta))\|_2
\\[0.1cm]
&&\leq 
\|\hat\Theta_j^T (\ddot R_n(\tilde\beta) - \ddot R_n(\beta_0))\|_2
+ \|\hat\Theta_j^T (\ddot R_n(\hat\beta) - \ddot R_n(\beta_0))\|_2\\
&&\leq 
\|\hat\Theta_j^T (\|\tilde\beta\|_2^2 - \|\beta_0\|_2^2)\|_2 + \|\hat\Theta_j^T \tilde\beta\tilde\beta^T -\hat\Theta_j^T \beta_0\beta_0^T \|_2
 \\
&&\leq 
\|\hat\Theta_j\|_2 |\|\tilde\beta\|_2^2 - \|\beta_0\|_2^2|
 + |\hat\Theta_j^T \tilde\beta|\|\tilde\beta-\beta_0\|_2 + \|\hat\Theta_j^T\|_2\|\tilde\beta-\beta_0\|_2\|\beta_0 \|_2
 \\
&&\leq 
\|\hat\Theta_j\|_2 \left(|\|\tilde\beta\|_2^2 - \|\beta_0\|_2^2|
 +  \|\tilde\beta\|_2 \|\tilde\beta-\beta_0\|_2 + \|\tilde\beta-\beta_0\|_2\|\beta_0 \|_2 \right)
\\
&&\leq 
\|\hat\Theta_j\|_2(2{{\mu}}^2 + 2\phi_{\max}{{\mu}})\\
&&\leq 
(\|\hat\Theta_j-\Theta_j^0\|_2 +1/\egap )(2\mu_{\hat\beta}^2 + 2\phi_{\max}{{\mu}}).
\end{eqnarray*}
where we used the bound \eqref{spca.bound.norms} from the proof of Lemma \ref{tau.spca}.
Therefore,
$$|iii|\leq (\|\hat\Theta_j-\Theta_j^0\|_2 +1/\egap )(2{{\mu}}^2 + 2\phi_{\max}{{\mu}}){{\mu}}.$$
Finally,
\begin{eqnarray*}
|i| &\leq & \|\hat\Theta_{j} - \Theta^0_{j}\|_1 \| \dot{R}_n( \beta_0)\|_\infty\\
& =&
\|\hat\Theta_{j} - \Theta^0_{j}\|_1 \| \dot{R}_n( \beta_0)-\dot R(\beta_0)\|_\infty\\
&=&
\|\hat\Theta_{j} - \Theta^0_{j}\|_1 \| (\hat\Sigma-\Sigma_0)\beta_0\|_\infty
\\
&\leq & \lambda_0\|\hat\Theta_{j} - \Theta^0_{j}\|_1,
\end{eqnarray*}
for $\lambda_0 \geq \|(\hat\Sigma-\Sigma_0)\beta_0\|_\infty.$
Hence the bound for the remainder is
\begin{eqnarray*}
&&\max_{j=1,\dots,p}|\text{rem}_j|  \\
&& \quad := \max_{j=1,\dots,p} |i|+|ii| + |iii| \\
&&\quad\leq  \max_{j=1,\dots,p}\|\hat \beta -\beta_0\|_1 \lambda_j/\hat\tau_j^2  \\
&&\quad\quad+ \;(\|\hat\Theta_j-\Theta_j^0\|_2 +1/\egap )(2\|\hat\beta-\beta_0\|_2^2 +  2\phi_{\max}\|\hat\beta-\beta_0\|_2)\|\hat\beta-\beta_0\|_2
\\
&&\quad\quad  + \;\|\hat\Theta_j - \Theta_j^0\|_1 \lambda_0.
\end{eqnarray*}
We now combine the last bound with the result of Lemma \ref{tau.spca} and probability results from Section \ref{sec:spca.prob}.
In particular, under the Condition \ref{design.spca}  and the assumptions $\phi_{\max}\leq C_{\max}$, $\rho-3\eta \geq c>0$ and 
the assumed sparsity conditions, 
 it follows
that
\begin{eqnarray*}
\max_{j=1,\dots,p}|\text{rem}_j| &=&
 \mathcal O_P\left(
\max_{j=1,\dots,p} \max(s,s_j)\max\left(\lambda^2, \lambda_j^2,
\frac{\log(2p)}{n}\right)
\right) \\
&=& o_P\left(\frac{1}{\sqrt{n}}\right).
\end{eqnarray*}
Thus we conclude that
\begin{eqnarray*}
\hat \beta_{} -\beta_{} -
\hat\Theta_{} \dot{R}_n(\hat \beta)
=
-\Theta_{0} \dot{R}_n( \beta_0)+o_P(1/\sqrt{n}).
\end{eqnarray*}
Finally, one can easily check that the random variable 
  $(\Theta^0_{j})^T \Xobsi (\Xobsi)^T\beta_0$ has bounded fourth moments  
	under Condition \ref{design.spca} with $\sigma$ is a universal constant and if $\phi_{\max}\leq C_{\max}, \rho\geq c>0$. Hence we may use the Lindeberg central limit theorem on the term
$(\Theta^0_{j})^T \dot R_n(\beta_0)$
to obtain that (assuming $1/\sigma_j= \mathcal O(1)$)
$$\sqrt{n}(\Theta^0_{j})^T \dot R_n(\beta_0)/\sigma_j \rightsquigarrow \mathcal N(0,1).$$
This then implies 
$$\sqrt{n}(\hat b_{j} -\beta^0_{j})/\sigma_j=\sqrt{n}(\Theta^0_{j})^T \dot R_n(\beta_0)/\sigma_j + o_P\left(\frac{1}{\sigma_j }\right) \rightsquigarrow \mathcal N(0,1).$$
\end{proof}

\begin{proof}[Proof of Theorem \ref{eigenvalue}]
By Theorem \ref{ci}, we have the asymptotic expansion
$$\hat b_{} -\beta_{0} = -\Theta_0 \dot R_n(\beta_0 )+ \text{rem},$$
with $\|\text{rem}\|_\infty =\mathcal O_P(s
\max(\lambda^2,\max_{j=1,\dots,p} \lambda_j^2,
\log p/n)
)
.$
Hence
\begin{eqnarray*}
\|\hat \beta\|_2^2 - \|\beta_0\|_2^2  
&=&
2\beta_0^T (\hat \beta-\beta_0) + (\hat \beta- \beta_0)^T(\hat \beta-\beta_0)
\\
&=&
2\beta_0^T (\hat \beta-\beta_0-\hat\Theta^T \dot R_n(\hat\beta)) + 2\beta_0^T \hat\Theta^T \dot R_n(\hat\beta)+ \|\hat \beta-\beta_0\|_2^2
\\
&=&
2\beta_0^T (\hat b-\beta_0) + \|\hat \beta-\beta_0\|_2^2
\\
&=&
-2\beta_0^T \Theta_0\dot R_n(\beta_0) + 2\hat\beta^T \hat\Theta^T \dot R_n(\hat\beta) \\
&& +\; \text{rem}_2
,\end{eqnarray*}
where the remainder  $\text{rem}_2$ can be bounded
\begin{eqnarray*}
|\text{rem}_2| 
&:=&
{2\beta_0^T \text{rem}
+ 2(\beta_0-\hat\beta)^T \hat\Theta^T \dot R_n(\hat\beta) +\|\hat \beta-\beta_0\|_2^2}
\\
&\leq &
2\|\beta_0\|_1 \|\text{rem}\|_\infty
+ 2\|\beta_0-\hat\beta\|_1 \vertiii{\hat\Theta^T}_1 \| \dot R_n(\hat\beta)\|_\infty + \|\hat \beta-\beta_0\|_2^2
\\
&\leq &
\mathcal O_P(\sqrt{s}\max(s,s_j)^{}\max(\lambda^2,\max_{j=1,\dots,p} \lambda_j^2,\log p/n)) 
\\&&\;
+\;\;\mathcal O_P(\sqrt{s_j}\max(s,s_j)^{}\lambda^2) + \mathcal O_P(s\lambda^2)\\
&=&
\mathcal O_P(\max(s,s_j)^{3/2}\max(\lambda^2,\max_{j=1,\dots,p} \lambda_j^2,\log p/n))
. 
\end{eqnarray*}
Hence, under the sparsity conditions, 
 we obtain
$$\|\hat \beta\|_2^2 - \|\beta_0\|_2^2 - 2\hat\beta^T \hat\Theta^T \dot R_n(\hat\beta) =
-2\beta_0^T\Theta_0 \dot R_n(\beta_0) +
o_P(1/\sqrt{n}).$$
As in the proof of Theorem \ref{ci}, it follows that the zero-mean random variable $2\beta_0^T\Theta_0 \dot R_n(\beta_0) $ has bounded fourth moments.
Asymptotic normality then follows  by  an application of the Lindeberg central limit theorem.

\end{proof}

\begin{lemma}
\label{variance}
If $X_i\sim \mathcal N(0,\Sigma_0),i=1,\dots,n$ and $\rho>0$, then it holds that
$$\sigma_j^2:=
\emph{var}((\Theta^0_j)^T \hat\Sigma \beta_0) = \frac{\beta^0_j}{2} + \|\beta_0\|_2^4 \sum_{i=1,i\not = j}^p u_{i,j}^2 \frac{\Lambda_{i}}{(\Lambda_i - \Lambda_{\max})^2}     
,$$
and
$$\sigma^2 := 4\emph{var}(\beta_0^T\Theta_0 \hat\Sigma \beta_0)=2\Lambda_{\max}.
$$
 
\end{lemma}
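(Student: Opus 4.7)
The plan is to reduce both identities to Isserlis' formula for the variance of Gaussian quadratic forms,
$$\mathrm{var}(X^T A X) \;=\; 2\,\mathrm{tr}(A_s\Sigma_0 A_s\Sigma_0), \qquad A_s := (A+A^T)/2,$$
valid for $X\sim\mathcal{N}(0,\Sigma_0)$, and then to simplify the resulting traces using the fact that $\beta_0 = \phi_1 u_1$ is a joint eigenvector of $\Sigma_0$ and $\Theta_0$.

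For the first identity, I would write $(\Theta_j^0)^T \hat\Sigma\beta_0 = n^{-1}\sum_i X_i^T A X_i$ with $A = \beta_0 (\Theta_j^0)^T$, invoke independence of the $X_i$'s, and apply Isserlis with $A_s = \tfrac{1}{2}(\beta_0(\Theta_j^0)^T + \Theta_j^0\beta_0^T)$. Expanding $2\,\mathrm{tr}(A_s\Sigma_0 A_s\Sigma_0)$ and using $\Sigma_0\beta_0 = \|\beta_0\|_2^2\beta_0$ everywhere it appears collapses the four resulting cross-terms to
$$\|\beta_0\|_2^4\bigl((\Theta_j^0)^T\beta_0\bigr)^2 \;+\; \|\beta_0\|_2^4\, (\Theta_j^0)^T\Sigma_0\Theta_j^0,$$
which matches the first line of \eqref{asympvar}. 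To obtain the closed form in the eigenbasis I would read off
$$\Theta_0 \;=\; \tfrac{1}{2\Lambda_{\max}}\, u_1 u_1^T \;+\; \sum_{k\geq 2}\tfrac{1}{\Lambda_{\max}-\Lambda_k}\, u_k u_k^T$$
from $\ddot R(\beta_0) = -\Sigma_0+\|\beta_0\|_2^2 I + 2\beta_0\beta_0^T$ together with $\Sigma_0 = \sum_k \Lambda_k u_k u_k^T$ (using $\rho>0$ to guarantee $\Lambda_{\max}-\Lambda_k\neq 0$ for $k\geq 2$), and then compute $(\Theta_j^0)^T\beta_0$ and $(\Theta_j^0)^T\Sigma_0\Theta_j^0$ by projecting onto the eigenbasis with $u_{1,j}=\beta_j^0/\sqrt{\Lambda_{\max}}$. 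The $k=1$ contribution collects into the $(\beta_j^0)^2$ piece of the answer, and the $k\geq 2$ part produces the eigenvector sum displayed in the lemma.

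For $\sigma_\Lambda^2$ the argument is considerably shorter: the spectral form of $\Theta_0$ above gives the eigenvector identity $\Theta_0\beta_0 = \beta_0/(2\Lambda_{\max})$, so $\beta_0^T\Theta_0\hat\Sigma\beta_0 = (2\Lambda_{\max})^{-1}\beta_0^T\hat\Sigma\beta_0$ and the variance reduces to $\mathrm{var}((\beta_0^T X)^2)/(4n\Lambda_{\max}^2)$. Since $\beta_0^T X\sim\mathcal{N}(0,\Lambda_{\max}^2)$, the variance of its square is $2\Lambda_{\max}^4$, and the prefactor $4$ in the definition of $\sigma_\Lambda^2$ then yields $2\Lambda_{\max}^2$. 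Nothing in the computation is really hard; the only steps that require a bit of care are (i) spotting the eigenvector identity $\Theta_0\beta_0 \propto \beta_0$, which makes the second part a one-liner, and (ii) the bookkeeping in separating the $k=1$ contribution from the $k\geq 2$ sum in $(\Theta_j^0)^T\Sigma_0\Theta_j^0$ so that the two halves of the stated formula line up correctly.
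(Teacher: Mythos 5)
Your approach is essentially the same as the paper's: both start from the closed-form variance of a Gaussian quadratic form (you invoke Isserlis' formula explicitly; the paper quotes it as well known) to obtain
$\sigma_j^2 = \|\beta_0\|_2^4\bigl((\Theta_j^0)^T\beta_0\bigr)^2 + \|\beta_0\|_2^4\,(\Theta_j^0)^T\Sigma_0\Theta_j^0$
and the analogous expression for $\sigma_\Lambda^2$, then simplify by passing to the eigenbasis of $\Sigma_0$, in which $\ddot R(\beta_0) = U^T(-\Lambda+\Lambda_{\max}I+2\Lambda_{\max}e_1e_1^T)U$ diagonalizes to $\Theta_0 = U^TDU$ with $D_{11}=1/(2\Lambda_{\max})$, $D_{kk}=1/(\Lambda_{\max}-\Lambda_k)$ for $k\ge 2$, and $\Theta_0\beta_0=\beta_0/(2\Lambda_{\max})$. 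This is exactly the route taken in the paper's proof; only the presentation differs (you verify the trace identity by hand and, for $\sigma_\Lambda^2$, reduce the scalar random variable to $(\beta_0^TX)^2$ before computing its variance, while the paper plugs the eigenvector identity directly into the four-term formula).

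One thing worth flagging: your derivation actually corrects what appear to be typographical slips in the lemma as printed. You obtain $\sigma_\Lambda^2 = 2\Lambda_{\max}^2$, which agrees with display \eqref{asympvar} in the body of the paper, whereas the lemma statement writes $2\Lambda_{\max}$; the $k=1$ contribution you isolate is $(\beta_j^0)^2/2$, not $\beta_j^0/2$ as in the printed lemma and the final line of the paper's own calculation (a variance cannot be linear in a sign-ambiguous coordinate); and the displayed sum must range over $i\ne 1$ (equivalently $i\ge 2$), since the $i=1$ term has denominator $\Lambda_1-\Lambda_{\max}=0$, rather than over $i\ne j$ as written. Your split into the $k=1$ piece and the $k\ge 2$ eigenvector sum is the correct bookkeeping.
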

\begin{proof}[Proof of Lemma \ref{variance}]
First under normality, it is well-known that
$$\sigma_j^2:=
(\Theta^0_{j})^T \Sigma_0 \Theta^0_j \|\beta_0\|_2^4 + [\|\beta_0\|_2^2(\Theta^0_j)^T \beta_0]^2,$$
and 
$$\sigma_{\Lambda}^2 := 
 4\beta_0^T\Theta_0 \Sigma_0 \Theta_0\beta_0 \|\beta_0\|_2^4  + 4(\|\beta_0\|_2^2 \beta_0^T\Theta_0 \beta_0)^2. 
$$
\vskip 0.1cm
\noindent
We first calculate $\Theta_0:=\ddot R(\beta_0)^{-1}.$
We write the eigendecomposition of $\Sigma_0$ as $\Sigma_0 = U^T \Lambda U$, for some $U$ such that $U^TU = 1$. By the condition $\rho>0$, the first column of $U$ is $u_1 = \beta_0/\|\beta_0\|_2$.
Then we may write 
\begin{eqnarray*}
\ddot R(\beta_0) &=&
 U^T (-\Lambda + \|\beta_0\|_2^2 + 2U\beta_0\beta_0^T U^T)U \\
&=&
 U^T (-\Lambda + \|\beta_0\|_2^2 + 2\|\beta_0\|_2^2 e_1e_1^T  )U
,
\end{eqnarray*}
which can be easily inverted
$$\Theta_0= \ddot R(\beta_0)^{-1} = 
 U^T D U
,$$
where 
$$D:=\text{diag}\left(
\frac{1}{2\|\beta_0\|_2^2} ,\frac{1}{\|\beta_0\|_2^2 - \Lambda_2(\Sigma_0)}, 
\dots, 
\frac{1}{\|\beta_0\|_2^2 - \Lambda_p(\Sigma_0)}\right).$$
Then we have 
$$\Theta_0 \beta_0 = U^T D U\beta_0 = U^T \frac{1}{2\|\beta_0\|_2^2} \|\beta_0\|_2e_1 = \beta_0/(2\|\beta_0\|_2^2),$$
and
$$\Theta_0 ^T \Sigma_0 \Theta_0 = U^T D\Lambda D U.$$
Finally, we conclude
\begin{eqnarray*}
\sigma_{\Lambda}^2 &=&4\beta_0^T \Theta_0^T \Sigma_0 \Theta_0 \beta_0  + 4(\beta_0^T \Theta_0 \Sigma_0 \beta_0 )^2\\
&
=&
4 \frac{1}{4} \|\beta_0\|_2^4 + 4 \frac{1}{4} \|\beta_0\|_2^4 =2\|\beta_0\|_2^4  = 2\Lambda_{\max}, 
\end{eqnarray*}
and
\begin{eqnarray*}
\sigma_j^2&:=&
(\Theta^0_{j})^T \Sigma_0 \Theta^0_j \beta_0^T \Sigma_0 \beta_0 + ((\Theta^0_j)^T \Sigma_0\beta_0)^2
\\[0.1cm]
&=&
(\Theta^0_{j})^T \Sigma_0 \Theta^0_j \beta_0^T \Sigma_0 \beta_0 + ((\Theta^0_j)^T \Sigma_0\beta_0)^2
\\
&=&
\|\beta_0\|_2^4  U^T \Lambda^{1/2} D^{2} \Lambda^{1/2}U+ \frac{\beta^0_j}{4}
\\
&=& \frac{\beta^0_j}{4} +\frac{\beta^0_j}{4} +\|\beta_0\|_2^4 \sum_{i=1,i\not = j}^p u_{i;j}^2 \frac{\Lambda_{i}}{(\Lambda_i - \Lambda_{\max})^2}  .
\end{eqnarray*}
\end{proof}

\ifsubsection
\section{Probabilistic bounds for the empirical process }
\else
\subsection{Probabilistic bounds for the empirical process }
\fi
\label{sec:spca.prob}
We collect probabilistic results needed to bound the empirical process part related to the estimators $\hat\beta,\hat\gamma_j.$
Recall the definition of a sub-Gaussian matrix from \condi \ref{design.spca}.

\begin{lemma}
\label{sigma.con}
If $X \in \mathbb R^{n\times p}$ is a sub-Gaussian matrix with parameter $\sigma$, then 
for any \emph{fixed} vector $\beta$, with probability at least 
$1-2e^{-\log(2p)}$
it holds
$$\|(\hat\Sigma-\Sigma_0)\beta\|_\infty \leq 4\|\beta\|_2 \sigma^2 \left(\sqrt{\frac{2\log(2p)}{n}}+\frac{2\log(2p)}{ n}\right).$$
\end{lemma}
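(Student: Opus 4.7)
The plan is to reduce the $\ell_\infty$-norm bound to a coordinate-wise sub-exponential concentration bound combined with a union bound over $j=1,\dots,p$. For each $j$, write
\[
\bigl((\hat\Sigma - \Sigma_0)\beta\bigr)_j = \frac{1}{n}\sum_{i=1}^n \bigl( X^i_j (X^i)^T\beta - \mathbb{E}[X^i_j (X^i)^T\beta]\bigr),
\]
so the task is to control an i.i.d.\ average of centered products of two scalar random variables $Y^i_1 := X^i_j = e_j^T X^i$ and $Y^i_2 := (X^i)^T \beta$. Under Condition \ref{design.spca}, $e_j^T X^i$ is sub-Gaussian with parameter $\sigma$, and $(X^i)^T\beta / \|\beta\|_2$ is also sub-Gaussian with parameter $\sigma$ (since $\beta/\|\beta\|_2$ is a unit vector).

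The next step is to argue that the product $Y^i_1 Y^i_2$ is sub-exponential. This is a standard consequence of the Cauchy--Schwarz inequality applied to the Orlicz $\psi_1$ norm: for sub-Gaussian $Y_1,Y_2$ with $\psi_2$-parameters bounded by a constant multiple of $\sigma$ and $\sigma\|\beta\|_2$ respectively, one has $\|Y_1 Y_2\|_{\psi_1} \le \|Y_1\|_{\psi_2}\|Y_2\|_{\psi_2} \lesssim \sigma^2\|\beta\|_2$. Hence the centered variables $Z^i := Y^i_1 Y^i_2 - \mathbb{E}[Y^i_1 Y^i_2]$ satisfy $\|Z^i\|_{\psi_1} \le K\sigma^2\|\beta\|_2$ for an absolute constant $K$.

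Then I would apply Bernstein's inequality for sub-exponential random variables: for any $t>0$,
\[
\mathbb{P}\Biggl(\Bigl|\frac{1}{n}\sum_{i=1}^n Z^i\Bigr| > t\Biggr) \le 2\exp\left(-c\, n\min\!\left(\frac{t^2}{\sigma^4\|\beta\|_2^2},\frac{t}{\sigma^2\|\beta\|_2}\right)\right),
\]
for a universal constant $c>0$. Choosing
\[
t = 4\|\beta\|_2 \sigma^2\left(\sqrt{\tfrac{2\log(2p)}{n}} + \tfrac{2\log(2p)}{n}\right)
\]
makes the exponent dominate $\log(2p)$ (the two regimes in the min are calibrated exactly so that one factor exceeds $\log(2p)$ for each regime of $t$). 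A union bound over the $p$ coordinates, with the factor $2p$ absorbed into $\log(2p)$, yields the claimed probability $1-2e^{-\log(2p)}$.

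The main obstacle is simply book-keeping of absolute constants: one must choose the constant in Bernstein so that the threshold coming out has the exact form $4\|\beta\|_2\sigma^2(\sqrt{2\log(2p)/n}+2\log(2p)/n)$ after the union bound. This is routine calibration rather than a conceptual difficulty; the only subtlety is to centre $Y_1 Y_2$ before invoking Bernstein (so that $\mathbb{E}[Z^i]=0$), which relies on the identity $\mathbb{E}[X^i_j (X^i)^T\beta] = (\Sigma_0\beta)_j$.
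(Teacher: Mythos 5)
Your argument is correct and is precisely the standard route: the paper itself gives no self-contained proof, simply invoking Lemma~14.13 of \cite{hds}, which is exactly the coordinate-wise sub-exponential (product of two sub-Gaussians) plus Bernstein plus union-bound argument you reconstruct. The only thing worth flagging is that the exact constants ($4$ and $\sqrt{2\log(2p)/n}+2\log(2p)/n$, and the probability $1-2e^{-\log(2p)}$ after the union bound over $p$ coordinates) come from the specific Bernstein variant used in that reference, so matching them requires using that variant rather than the generic $\min(t^2,t)$-form you quote, but this is the bookkeeping you already acknowledge.
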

\begin{proof}[Proof of Lemma \ref{sigma.con}]
The result follows from Lemma 14.13 in \cite{hds}. 
\end{proof}

\begin{lemma}
\label{dev}
If $X \in \mathbb R^{n\times p}$ is a sub-Gaussian matrix with parameter $\sigma$, then for all $t>0$
\begin{eqnarray*}
&& P\left(
\sup_{\theta\in \mathbb R^p: \|\theta\|_2 =1, \|\theta\|_0\leq M}
|\frac{\|X\theta\|_2^2}{n} -  \frac{\mathbb E\|X\theta\|_2^2}{n} | \;\geq \; 2\sigma^2(t+\sqrt{2t})
 \right)
\\[0.3cm]
&&\quad\quad \leq
2\exp\left(-nt+ 2M{{\log (2p)}}\right)
\end{eqnarray*}

\end{lemma}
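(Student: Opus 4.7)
The plan is to reduce the uniform deviation bound to two ingredients: (i) a pointwise Bernstein-type concentration inequality for a fixed unit vector $\theta$, and (ii) a discretization of the collection of $M$-sparse unit vectors via $\delta$-nets, so that the supremum over this (uncountable) set can be controlled by a supremum over finitely many points. The estimate $\exp(-nt+2M\log(2p))$ that appears in the statement is exactly of the form ``pointwise bound $e^{-nt}$ times union-bound cardinality $\leq (2p)^{2M}$'', which already dictates the net cardinality we should aim for.

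For the pointwise step, I would fix a unit vector $\theta$ and write $\|X\theta\|_2^2/n - \mathbb E\|X\theta\|_2^2/n = \tfrac{1}{n}\sum_{i=1}^n Y_i$, where $Y_i := (X_i^T\theta)^2 - \mathbb E (X_i^T\theta)^2$. Since $X_i^T\theta$ is sub-Gaussian with parameter $\sigma$ (here $\|\theta\|_2=1$), $Y_i$ is a centred sub-exponential random variable, and a standard Bernstein bound (cf. Lemma 14.13 in \cite{hds}) gives, for every $t>0$,
\[
P\!\left(\Big|\tfrac{1}{n}\|X\theta\|_2^2 - \tfrac{1}{n}\mathbb E\|X\theta\|_2^2\Big|\geq \sigma^2(t+\sqrt{2t})\right)\leq 2\,e^{-nt}.
\]
This is exactly the pointwise deviation bound, with the ``$\sqrt{2t}$'' arising from the sub-Gaussian regime and the ``$t$'' from the sub-exponential tail regime.

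For the discretization step, I would observe that the supremum we need to control is the $M$-sparse restricted operator norm of $A:=\hat\Sigma-\Sigma_0$. For each support $S\subseteq\{1,\dots,p\}$ with $|S|\leq M$, I would take a $\tfrac{1}{4}$-net $N_S$ on the unit sphere of $\mathbb R^S$. Volumetric estimates give $|N_S|\leq 9^{M}$, and since $\binom{p}{M}\leq p^M$, the total number of net points is bounded by $p^M \cdot 9^M \leq (2p)^{2M}$ for $p\geq 3$. A standard net-inflation argument for quadratic forms restricted to a fixed support $S$ (writing $\theta^TA\theta - \tilde\theta^TA\tilde\theta = (\theta-\tilde\theta)^TA\theta + \tilde\theta^TA(\theta-\tilde\theta)$ and using that $\theta-\tilde\theta$ has support in $S$) shows that the restricted operator norm is at most twice the supremum over $\bigcup_S N_S$. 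A single union bound then yields
\[
P\!\left(\sup_{\|\theta\|_2=1,\,\|\theta\|_0\leq M}\Big|\tfrac{\|X\theta\|_2^2}{n}-\tfrac{\mathbb E\|X\theta\|_2^2}{n}\Big|\geq 2\sigma^2(t+\sqrt{2t})\right)\leq 2\,(2p)^{2M} e^{-nt},
\]
which is the claim after rewriting $(2p)^{2M}=e^{2M\log(2p)}$.

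I expect the main obstacle to be the bookkeeping of constants: aligning the net inflation factor $(1-2\delta)^{-1}$ with the factor $2$ appearing in front of $\sigma^2(t+\sqrt{2t})$ in the target, and ensuring that the net cardinality $\binom{p}{M}(1+2/\delta)^{M}$ is genuinely dominated by $(2p)^{2M}$ for the chosen $\delta$ (the choice $\delta=1/4$ gives inflation factor $2$ and cardinality $(9p)^M\leq (2p)^{2M}$ for $p\geq 3$; small $p$ is handled trivially since the exponent on the right-hand side is then non-restrictive). Beyond this, the argument is standard: the real content is the pointwise sub-exponential Bernstein estimate, and the supremum is essentially free once one pays the $2M\log(2p)$ entropy price of the $M$-sparse unit ball.
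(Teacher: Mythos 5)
Your argument is essentially the one the paper relies on: the paper's proof is a one-line citation to Lemma 15 of \cite{LohWai12} together with a Bernstein inequality (Lemma 14.9 of \cite{hds}), which is precisely the pointwise sub-exponential Bernstein bound plus $\tfrac{1}{4}$-net discretization of the $M$-sparse sphere that you have spelled out explicitly. The only minor discrepancy is your citation to Lemma 14.13 of \cite{hds} for the pointwise step where the paper invokes Lemma 14.9; the structure of the argument, the inflation factor $2$ from the $\tfrac{1}{4}$-net, and the $(2p)^{2M}$ entropy count all match.
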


\begin{proof}[Proof of Lemma \ref{dev}]
This lemma is essentially Lemma 15 in \cite{LohWai12}, but we  apply a  slightly different version of Berstein's inequality, namely
Lemma 14.9 in \cite{hds}. 
\end{proof}

\noindent
Denote $\mathbb B_r(M)=\{\theta\in\mathbb R^p: \|\theta\|_r\leq M\}$ for $r \geq 0 $.

\begin{lemma}[Lemma 11 in \cite{LohWai12}]
\label{l1tol0}
For any constant $s\geq 1$, it holds
$$\mathbb B_1(\sqrt{s})\cap \mathbb B_2(1) \subseteq 3\emph{cl}(\emph{conv}(\mathbb B_0(s)\cap \mathbb B_2(1))),$$
where $\emph{cl}(\cdot)$ denotes the topological closure of a set and  $\emph{conv}(\cdot)$ denotes the convex hull.
\end{lemma}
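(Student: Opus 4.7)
The plan is to exhibit every $\theta \in \mathbb B_1(\sqrt{s}) \cap \mathbb B_2(1)$ as a scaled convex combination of elements of $\mathbb B_0(s)\cap \mathbb B_2(1)$. First I would reorder the coordinates of $|\theta|$ in decreasing order of magnitude and partition the index set $\{1,\ldots,p\}$ into consecutive disjoint blocks $S_1, S_2, \ldots$ of size $s$ (with the last block possibly smaller), so that every entry indexed by $S_k$ dominates every entry indexed by $S_{k+1}$ in absolute value. Let $\theta^{(k)}$ denote the restriction of $\theta$ to $S_k$, extended by zero elsewhere, so that $\theta = \sum_{k\geq 1}\theta^{(k)}$ and each $\theta^{(k)}$ lies in $\mathbb B_0(s)$.

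The key estimate is a control of $M := \sum_{k\geq 1}\|\theta^{(k)}\|_2$. By the monotone ordering, for $k \geq 2$ every coordinate of $\theta^{(k)}$ has magnitude at most the average of $|\theta^{(k-1)}|$, namely $\|\theta^{(k-1)}\|_1/s$. Consequently $\|\theta^{(k)}\|_2 \leq \sqrt{s}\cdot \|\theta^{(k-1)}\|_1/s = \|\theta^{(k-1)}\|_1/\sqrt{s}$. Summing, and using $\|\theta^{(1)}\|_2\leq \|\theta\|_2\leq 1$ together with $\sum_{k\geq 1}\|\theta^{(k)}\|_1 = \|\theta\|_1 \leq \sqrt{s}$, gives
$$M \;\leq\; \|\theta^{(1)}\|_2 + \frac{\|\theta\|_1}{\sqrt{s}} \;\leq\; 1 + 1 \;=\; 2 \;\leq\; 3.$$

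Next I would set $v_k := \theta^{(k)}/\|\theta^{(k)}\|_2 \in \mathbb B_0(s)\cap \mathbb B_2(1)$ (dropping any null block from the sum), so that $\theta = \sum_k \|\theta^{(k)}\|_2\, v_k$. Dividing through by $3$ yields $\theta/3 = \sum_k (\|\theta^{(k)}\|_2/3)\, v_k$, a nonnegative combination with total mass $M/3 \leq 1$; appending the residual weight $(1-M/3)$ on the origin (which lies in $\mathbb B_0(s)\cap \mathbb B_2(1)$) turns this into a genuine convex combination. Hence $\theta \in 3\,\mathrm{conv}(\mathbb B_0(s)\cap \mathbb B_2(1)) \subseteq 3\,\mathrm{cl}(\mathrm{conv}(\mathbb B_0(s)\cap \mathbb B_2(1)))$, which is the desired inclusion.

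The argument is elementary, and the main obstacle is really bookkeeping: establishing the block-to-block entrywise domination $|\theta^{(k)}_i|\leq \|\theta^{(k-1)}\|_1/s$ cleanly from the monotone reordering, and being careful about a possibly undersized final block and any null blocks that arise when $\theta$ itself is sparse. No nontrivial topology is needed beyond noting that the convex hull of $\mathbb B_0(s)\cap \mathbb B_2(1)$ (a finite union of Euclidean balls in coordinate subspaces) contains the origin, so absorbing the slack via a zero weight is legitimate.
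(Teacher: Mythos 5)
Your argument is correct; it is the standard ``shelling'' decomposition by decreasing magnitude, which is also how Loh and Wainwright prove this lemma in the cited source. The paper at hand merely cites the lemma without reproducing a proof, so there is nothing in the paper to compare against; note only that your bound in fact gives the sharper constant $2$ (for integer $s$), so the factor $3$ in the statement has ample slack, and the closure is superfluous since $\mathbb B_0(s)\cap\mathbb B_2(1)$ is compact, making $\mathrm{conv}(\mathbb B_0(s)\cap\mathbb B_2(1))$ already compact by Carath\'eodory.
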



\begin{lemma}
\label{cor1}
Suppose that $X \in \mathbb R^{n\times p}$ is a sub-Gaussian matrix with parameter $\sigma$.
Let $J:= \lceil \log_2(T) \rceil $, 
and
$$\lambda_0 = \sqrt{\frac{2\log (2p)}{n}},$$ 
$$\lambda_1 = 4\sigma^2(\lambda_0 + \lambda_0^2).$$ 
Then with probability at least
$1-2(J+1)2e^{-{\log(2p)}},$
it holds
$$\forall \;\theta\in \mathcal B, \|\theta\|_1\leq T:$$
\begin{eqnarray*}
&&|\theta^TW \theta| 
\leq  
\lambda_1 \|\theta\|_1 + 
\delta_{\|\theta\|_1} \|\theta\|_2^2,
\end{eqnarray*}
where
$$\delta_M:=
 4\times 27 \sigma^2
\left[3M^2{\lambda_0^2}+\sqrt{6} M{\lambda_0}\right].$$

\end{lemma}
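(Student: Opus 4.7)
The proof is a dyadic peeling argument over the levels of $\|\theta\|_1$, combined with Lemma~\ref{dev} (sparse quadratic-form concentration) and Lemma~\ref{l1tol0} (the scaled $\ell_1$-to-$\ell_0$ embedding). First, for each integer $s\geq 1$ I would invoke Lemma~\ref{dev} with $t=(4s+1)\log(2p)/n$, chosen so that the exponent $-nt+2(2s)\log(2p)$ equals $-\log(2p)$. Using the identity $\log(2p)/n = \lambda_0^2/2$, this yields with probability at least $1-2e^{-\log(2p)}$
\[
C_{2s}:=\sup_{\|\theta\|_0\leq 2s,\;\|\theta\|_2\leq 1}|\theta^T W\theta| \;\lesssim\; \sigma^2\bigl(s\lambda_0^2+\sqrt{s}\,\lambda_0\bigr).
\]

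Second, I would lift this to the $\ell_1$-ball by Lemma~\ref{l1tol0}. Any $\theta$ with $\|\theta\|_2\leq 1$ and $\|\theta\|_1\leq\sqrt{s}$ can be written as $\theta = 3\sum_i\alpha_i\theta_i$ for convex weights $\alpha_i$ and $\theta_i\in\mathbb{B}_0(s)\cap\mathbb{B}_2(1)$. Expanding $\theta^T W\theta = 9\sum_{i,j}\alpha_i\alpha_j\theta_i^T W\theta_j$ and bounding the bilinear cross-terms by polarization (using that $(\theta_i\pm\theta_j)/2 \in \mathbb{B}_0(2s)\cap\mathbb{B}_2(1)$) gives $|\theta^T W\theta|\leq 18\, C_{2s}$. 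By homogeneity, for $\theta$ with $\|\theta\|_1\leq M$ and $\|\theta\|_2 = r$, applying the bound to $\theta/r$ with $\sqrt{s}=M/r$ and scaling by $r^2$ yields
\[
|\theta^T W \theta| \;\lesssim\; \sigma^2\bigl(M^2\lambda_0^2 + Mr\lambda_0\bigr).
\]
I would then split the cross term via the AM-GM bound $Mr\lambda_0 \leq \tfrac12(M\lambda_0+Mr^2\lambda_0)$: the first half merges into $\lambda_1\|\theta\|_1$ and the second into the $\sqrt{6}\,M\lambda_0$ part of $\delta_M\|\theta\|_2^2$. The leading $M^2\lambda_0^2$ term enters $\delta_M\|\theta\|_2^2$ via the $3M^2\lambda_0^2$ coefficient when $r\geq 1$, and otherwise is absorbed into $\lambda_1\|\theta\|_1$ using the tuning constraint $M\lambda_0\leq T\lambda_0 \leq C_T$ carried over from Theorem~\ref{oracle}.

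Finally, since $\|\theta\|_1$ is variable, I would peel: partition $(0,T]$ into dyadic shells $(2^{k-1},2^k]$ for $k = 0,\dots,J$ with $J=\lceil\log_2 T\rceil$, apply the previous estimate with $M=2^k$ on each shell (incurring at most a factor of two of slack, absorbed into constants), and union-bound over the $J+1$ shells to reach the claimed probability $1-2(J+1)\cdot 2e^{-\log(2p)}$. The main obstacle is the bookkeeping of constants: tracking the polarization factor $18$ from step two, the $\tfrac12$ from the AM-GM split in step three, and the dyadic overshoot in step four through to the specific numbers $4\times 27$, $3$ and $\sqrt{6}$ that define $\delta_M$ (and the $4\sigma^2(\lambda_0+\lambda_0^2)$ that defines $\lambda_1$) is tedious. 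The conceptually non-routine point is that, without the normalization $M\lambda_0 = O(1)$, the $\sigma^2 M^2\lambda_0^2$ contribution from the sparse concentration bound would not admit a $\|\theta\|_2^2$ factor and so could not be absorbed into $\delta_M\|\theta\|_2^2$ alone; the assumed $\ell_1$-cap on the feasible set is what reconciles the sparse-eigenvalue-type bound with the $\lambda_1\|\theta\|_1+\delta_M\|\theta\|_2^2$ form required downstream in Theorem~\ref{oracle}.
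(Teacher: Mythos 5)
Your proposal shares the core machinery with the paper's proof: Lemma~\ref{dev} for sparse quadratic-form concentration, Lemma~\ref{l1tol0} for the $\ell_1$-to-$\ell_0$ embedding (with the factor-3 scaling), polarization to handle the bilinear cross-terms, and dyadic peeling over $\|\theta\|_1$ with a union bound. Those ingredients are identical in spirit, modulo minor bookkeeping differences (you get a polarization factor of $18$ via the $(\theta_i \pm \theta_j)/2$ identity, whereas the paper obtains $27$ via $v_i^TWv_j = \tfrac{1}{2}[(v_i+v_j)^TW(v_i+v_j) - v_i^TWv_i - v_j^TWv_j]$).

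The genuine departure is how the linear term $\lambda_1\|\theta\|_1$ is produced, and this is where a gap appears. The paper never extracts $\lambda_1\|\theta\|_1$ from the sparse concentration bound. Instead it splits the feasible set at $\|\theta\|_1 = 1$: on $A_0 := \{\|\theta\|_1\leq 1\}$ it uses the \emph{elementwise} deviation bound $\|W\|_\infty\leq\lambda_1$ (a consequence of Lemma~\ref{sigma.con}, which is the true origin of the constant $\lambda_1 = 4\sigma^2(\lambda_0 + \lambda_0^2)$), together with H\"older and $\|\theta\|_1^2\leq\|\theta\|_1$, to get $|\theta^TW\theta|\leq\lambda_1\|\theta\|_1$ directly; the peeling argument is confined to $A_0^c = \{\|\theta\|_1>1\}$ and only produces the $\delta_{\|\theta\|_1}\|\theta\|_2^2$ part. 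The $J+1$ in the failure probability is the $A_0$ event plus the $J$ shells. You instead run a single peeling estimate over $(0,T]$ and then try to split the resulting $\sigma^2(M^2\lambda_0^2 + Mr\lambda_0)$ into the required two-term form by AM-GM, absorbing the leading $\sigma^2M^2\lambda_0^2$ piece into $\lambda_1\|\theta\|_1$ when $r<1$ via the constraint $M\lambda_0\leq T\lambda_0\leq C_T$. That constraint is imposed in Theorem~\ref{oracle} but is \emph{not} a hypothesis of Lemma~\ref{cor1}; the lemma is stated for an arbitrary $T$ (the only penalty for large $T$ is the $J+1$ factor in the probability), and if $T\lambda_0$ is not $\mathcal O(1)$ your absorption step gives a linear coefficient of order $\sigma^2 T\lambda_0^2$, which is not bounded by the fixed $\lambda_1$ in the statement. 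Even when the constraint does hold, your constant on the $\lambda_1\|\theta\|_1$ term is $C_T$-dependent rather than the universal $1$ claimed. So the route is different in a meaningful way, and the $\|\theta\|_1\leq 1$ split plus the elementwise bound via Lemma~\ref{sigma.con} is the idea your proof is missing; without it the $\lambda_1\|\theta\|_1$ term cannot be obtained unconditionally.
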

\begin{proof}
\noindent
Consider the set
$$A:=\{\theta\in\mathcal B: \|\theta\|_1\leq T\},$$
and the decomposition 
$$A= A_0 \cup A_0^c,$$
where
$$A_0:=\{\theta\in A:\|\theta\|_1\leq 1\}.$$
We denote $W:=\hat\Sigma-\Sigma_0.$
First note that for $\lambda_1 = 4\sigma^2(\lambda_0 + \lambda_0^2),$ by Lemma \ref{sigma.con} it follows that
with probability at least $1-\alpha_1$, where $\alpha_1:=2e^{-{\log(2p)}}$
\begin{equation}
\label{spca.bound.taky}
\|W\|_\infty \leq \lambda_1.
\end{equation}
If we are on the set $A_0$ and then by H\"older's inequality and bound \eqref{spca.bound.taky}, with probability at least $1-\alpha_1$,
for all $\theta\in A$
\begin{eqnarray}
\label{A0}
|\theta^T W\theta|
&\leq &
\|W\|_\infty \|\theta\|_1^2 
 \leq  \|W\|_\infty \|\theta\|_1
\leq  \lambda_1 \|\theta\|_1.
\end{eqnarray}
%
%
%
%
%
To treat the complementary set, $A_0^c$, we use the peeling device (\cite{sarabook}). Let $M_j:= 2^{j}$ and let $J$ be the smallest integer such that $2^{J} \geq \tune$. 
Consider partitioning of the set $A_0^c$ 
$$A_0^c = \bigcup_{j=1}^J A_j$$
where 
$$
A_j:=
\{\theta\in A: M_{j-1}\leq \|\theta\|_1 \leq M_j\}.$$

\noindent
Using the union bound and the definition of $A_j$ we obtain the sequence of upper bounds in the display below. 
Note that in the inequality \eqref{delta2} below, we used that $4\delta_{{M_{j-1}^2}} \geq \delta_{{M_{j}^2}}.$ 

\begin{eqnarray}
&&
P\left(  {\exists\; \theta\in\; \bigcup_{j=1}^J A_j} :\;
{|\theta^TW \theta|}{ } 
\geq  4\times 27\delta_{\|\theta\|_1^2}\|\theta\|_2^2
\right)
\\\nonumber
&& \quad\quad\leq
\sum_{j=1}^J P\left(  \exists\; {\theta\in A_j} :
{|\theta^TW \theta|}{  } 
\geq 4\times 27\delta_{\|\theta\|_1^2}\|\theta\|_2^2
\right)
\\\nonumber
&& \quad\quad\leq
\sum_{j=1}^J P\left(  \exists\;\theta\in A_j, \|\theta\|_2= 1:
{|\theta^TW \theta|} 
\geq 
4\times 27 \delta_{{M_{j-1}^2}} \|\theta\|_2^2
\right)
\\\label{delta2}
&& \quad\quad{\leq}
\sum_{j=1}^J P\left(   \exists\;\theta\in A_j ,\|\theta\|_2 = 1:
{|\theta^TW \theta|} 
\geq 
 27\delta_{{M_{j}^2}}
\right)
\\\nonumber
&& \quad\quad\leq
\sum_{j=1}^J P\left(  \exists \; {\theta,\|\theta\|_2 = 1, \|\theta\|_1 \leq M_j} :
{|\theta^TW \theta|} 
\geq 
27\delta_{M_{j}^2} 
\right)
\\\nonumber
&& \quad\quad\stackrel{(\text{Lemma }\ref{l1tol0})}{\leq} 
\sum_{j=1}^J P\biggl(  \exists \; {\theta\in\text{cl}(\text{conv}(\mathbb B_2(3)\cap \mathbb B_0(M_j^2) ))} :\\
&&\hspace{4cm}
{|\theta^TW \theta|} 
\geq 
27\delta_{M_{j}^2} 
\biggr).
\end{eqnarray}
We now show that if 
\begin{eqnarray}\label{imp1}
\sup_{v:\|v\|_2=1, \|v\|_0 \leq M}|v^T W v| \leq \delta,
\end{eqnarray}
then
\begin{eqnarray}
\label{imp2}
|\theta^T W\theta | \leq 27\delta,\quad\quad\forall
\theta\in\text{cl}(\text{conv}(\mathbb B_2(3)\cap \mathbb B_0(M) )).
\end{eqnarray}
First if $\theta\in\text{conv}(\mathbb B_2(3)\cap \mathbb B_0(M_j^2) )$, then we can write
$\theta = \sum_{i} v_i\alpha_i$, where $v_i\in \mathbb B_2(3) \cap \mathbb B_0(M_j^2)$.
For each $i,j$ it holds
\begin{eqnarray*}
|v_i^T W v_j | &=& \frac{1}{2}|(v_i+v_j)^T W (v_i+v_j) - v_i^T W v_i - v_j^T W v_j| \\
&\leq & 
\frac{1}{2} (36\delta+9\delta+9\delta)=27\delta.
\end{eqnarray*}
Hence
$$|\theta^T W\theta | = |\sum_{i,j}v_i^TWv_j | \leq \sum_{i,j}27\delta \alpha_i \alpha_j = 27\delta.$$
If $\theta$ is in the closure of the set $\text{conv}(\mathbb B_2(3)\cap \mathbb B_0(M_j^2) )$, we can obtain an analogous implication as \eqref{imp1} $\Rightarrow$ \eqref{imp2} by continuity arguments.
Therefore, we can continue the chain of bounds
\begin{eqnarray}
&&\sum_{j=1}^J P\left(  \exists \; {\theta\in\text{cl}(\text{conv}(\mathbb B_2(3)\cap \mathbb B_0(M_j^2) ))} :
{|\theta^TW \theta|} 
\geq 
27\delta_{M_{j}^2} 
\right)
\\\nonumber
&&\leq \sum_{j=1}^J P\left(  \sup_{v:\|v\|_2=1, \|v\|_0\leq M_j} |v^T Wv| \geq \delta_{M_j^2}
\right)
\\\nonumber
&&
\stackrel{(\text{Lemma }\ref{dev})}{\leq}
\sum_{j=1}^J 2e^{-{M_{j}^2}\log (2p)}
\\\label{cup}
&& \stackrel{\text{since }M_j\geq 1}{
\leq } 
2J e^{- \log (2p)}.
\end{eqnarray}
Therefore we conclude from \eqref{A0} and \eqref{cup} that
\begin{eqnarray*}
&&P(\exists \;\theta\in A\;:|\theta^TW \theta| 
\geq  \lambda_1\|\theta\|_1 + 4\delta_{\|\theta\|_1^2}\|\theta\|_2^2 )
\\[0.1cm]
&&
\;\; \leq 
P(\exists \;\theta\in A_0: |\theta^TW \theta|  \geq  \lambda_1\|\theta\|_1)
\\
&&\;\;\;\;+\;
P(\exists \;\theta\in A_0^c\;:|\theta^TW \theta| 
\geq  4\delta_{\|\theta\|_1^2}\|\theta\|_2^2 )
\\[0.1cm]
&&\;\;
\leq
2(J+1)e^{-\log(2p)} 
.
\end{eqnarray*}


\end{proof}

\bibliography{../../../gminf}
\end{document}